\def\identity{{\rlap{1} \hskip 1.6pt \hbox{1}}}
\newcommand{\be}{\begin{equation}}
\newcommand{\ee}{\end{equation}}
\newcommand{\bea}{\begin{eqnarray}}
\newcommand{\eea}{\end{eqnarray}}
\newcommand{\beas}{\begin{eqnarray*}}
\newcommand{\eeas}{\end{eqnarray*}}
\newcommand{\ba}{\begin{array}}
\newcommand{\ea}{\end{array}}
\newcommand{\tr}{{\rm tr}}
\renewcommand*\d[2][]{%
	\mathrm{d}%
	\ifx\relax#1\relax\else
	\rule{-0.02em}{1.5ex}^{#1}\rule{0.08em}{0ex}\!
	\fi
	#2\,
}
\newtheorem{lemma}{Lemma}
\newtheorem{theorem}{Theorem}
\newtheorem{proposition}{Proposition}
\newtheorem{corollary}{Corollary}
\newtheorem{definition}{Definition}
\newtheorem{question}{Question}
\author[1,2]{Seraphim Jarov}
\author[1]{Mark Van Raamsdonk}
\affiliation[1]{Department of Physics and Astronomy, University of British Columbia,\\
6224 Agricultural Road, Vancouver, B.C., V6T 1Z1, Canada}
\affiliation[2]{Department of Pure Mathematics, University of Waterloo, Waterloo, Ontario, N2L 3G1, Canada}
\begin{document}

\title{Mapping the space of quantum expectation values}

\emailAdd{sjarov94@gmail.com}
\emailAdd{mav@phas.ubc.ca}
\date{October 2023}

\abstract{For a quantum system with Hilbert space ${\cal H}$ of dimension $N$ and a set $S$ of $n$ Hermitian operators ${\cal O}_i$, a basic question is to understand the set $E_S \subset \mathbb{R}^n$ of points $\vec{e}$ where $e_i = \tr(\rho {\cal O}_i)$ for an allowed state $\rho$. A related question is to determine whether a given set of expectation values $\vec{e}$ lies in $E_S$ and in this case to describe the most general state with these expectation values. In this paper, we describe various ways to characterize $E_S$, reviewing basic results that are perhaps not widely known and adding new ones. One important result (originally due to E. Wichmann) is that for a set $S$ of linearly independent traceless operators, every set of expectation values $\vec{e}$ in the interior of $E_S$ is achieved uniquely by a state of the form $\rho_{\vec{\beta}} = e^{-\sum_i \beta_i {\cal O}_i}/\tr(e^{-\sum_i \beta_i {\cal O}_i})$ for ${\cal O}_i \in S$. In fact, the map $\vec{\beta} \to \vec{E}(\vec{\beta}) = \tr(\vec{\cal O} \rho_{\vec{\beta}})$ is a diffeomorphism from $\mathbb{R}^n$ to the interior of $E_S$ with symmetric, positive Jacobian; using this fact, we provide an algorithm to invert $\vec{E}(\vec{\beta})$ and thus determine a state $\rho_{\vec{\beta}(\vec{e})}$ with specified expectation values $\vec{e}$ provided that these lie in $E_S$. The algorithm is based on defining a first order differential equation in the space of parameters $\vec{\beta}$ that is guaranteed to converge to $\vec{\beta}(\vec{e})$ in a precise way, with $|\vec{E}(\vec{\beta}(t)) - \vec{e}| = C e^{-t}$. 
}

\maketitle


\section{Introduction}

Consider a quantum mechanical system with a Hilbert space ${\cal H}$. Given a set $S$ of physical observables ${\cal O}_i$, a basic question is to understand the space of possible expectation values for these observables. 

Mathematically, we would like to understand the sets $E_S$ and $\hat{E}_S$ defined as follows.
\begin{definition}
    Given a set $S = \{ {\cal O}_1, \dots {\cal O}_n \} $ of $n$ Hermitian operators acting on ${\cal H}$ of dimension $N$, we define $E_S \subset \mathbb{R}^n$ as the union of points $(\tr(\rho {\cal O}_1),\dots,\tr(\rho {\cal O}_n))$ over all density matrices $\rho$ (unit-trace non-negative Hermitian operators) acting on ${\cal H}$. The smaller set $\hat{E}_S$ is defined by restricting to pure states with $\rho^2 = \rho$.
\end{definition}
For a two-dimensional Hilbert space, the Pauli operators $\sigma_i$ (normalized to have eigenvalues $\pm 1$) famously have expectation values confined to the unit sphere (referred to as the Bloch sphere in this context) for pure states or the unit ball for general states. 

The Bloch sphere example highlights a basic fact: $E_S$ for a set of operators is not simply the product of $E_S$ for the individual operators. The fact that specifying expectation values for certain operators constrains the expectation values for other operators is exhibited most famously in the Heisenberg Uncertainty Principle, or its generalization
\begin{equation}
\label{Uncert}
    (\langle {\cal O}_1^2 \rangle - \langle {\cal O}_1 \rangle^2) (\langle {\cal O}_2^2 \rangle - \langle {\cal O}_2 \rangle^2) \ge \langle {i \over 2}[{\cal O}_1,{\cal O}_2]\rangle^2 \; .
\end{equation}
This result can be interpreted geometrically as specifying a region of $\mathbb{R}^5$ inside of which the set $E_S$ for $S = \{  {\cal O}_1, {\cal O}_1^2, {\cal O}_2 , {\cal O}_2^2,   {i \over 2}[{\cal O}_1,{\cal O}_2]\}$ must be contained. For the Bloch sphere example, $E_S$ is precisely the intersection of sets specified by this uncertainty relation for all pairs of operators ${\cal O}_1 = n_1 \cdot \vec{\sigma}, {\cal O}_2 = n_2 \cdot \vec{\sigma}$ with $n_1 \cdot n_2 = 0$. 

In this paper, we will provide various ways to characterize $E_S$ and practical algorithms to display $E_S$ or determine whether a given set of expectation values is in $E_S$ for a given set of operators. We also provide an algorithm that explicitly constructs a state $\rho$ with a specified set of expectation values $\vec{e}$ provided that $\vec{e}$ is in the interior of $E_S$. Finally, we show that $E_S$ can always be understood as the intersection of sets specified by inequalities that generalize the uncertainty relation above.

The subject of this paper has a long history in both the physics and mathematics literature. Many of the results that we present below appear are known but perhaps not widely appreciated, so we felt it could be useful to include a broad discussion with appropriate references. However, the main results in section 6 as well as the reformulation of Theorem 2 as the more geometrical Theorem 3 are new as far as we are aware. 

\subsubsection*{Outline and summary of results}

We provide here a brief outline and summary of results. In section 2, we review a few simple examples. For the case of a two-dimensional Hilbert space, $E_S$ is always the image of a ball under an affine map. When $S$ is a set of commuting operators in a Hilbert space of general dimension, $E_S$ is a convex polytope of dimension $n$, where $n$ is the number of independent traceless operators in $\text{span}(S)$. More generally, $E_S$ can have a combination of curved sides and hyperplanar faces. As a specific example, taking ${\cal H} = \mathbb{R}^3$ and $S = \{{\cal O}_1, {\cal O}_2\}$ represented in the standard basis as 
\begin{equation}
\label{defops}
    {\cal O}_1 = \left( \begin{array}{ccc}
      -1 & 0 & 0 \cr
      0 & -1 & 0 \cr
      0 & 0 & 2
    \end{array}
    \right) \qquad \qquad {\cal O}_2 = \left( \begin{array}{ccc}
      0 & 1 & 1 \cr
      1 & 0 & 0 \cr
      1 & 0 & 0
    \end{array}
    \right) \; ,
\end{equation}
the set $E_S$ is the region of $\mathbb{R}^2$ shown in Figure \ref{fig:example}.
\begin{figure}
    \centering
    \includegraphics[width = .5 \textwidth]{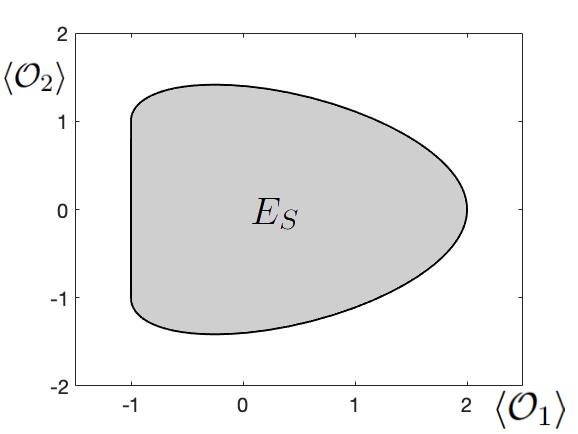}
    \caption{The set $E_S$ for operators (\ref{defops}) acting on ${\cal H} = \mathbb{R}^3$.}
    \label{fig:example}
\end{figure}
In section 3, we show that $E_S$ is always a compact, convex subset of $\mathbb{R}^n$ that is the convex hull of $\hat{E}_S$. Also, $E_S$ is the image under an affine transformation of the set $E_{\tilde{T}}$, where $\tilde{T}$ is an orthogonal set of basis elements for the full space of traceless Hermitian operators acting on ${\cal H}$. The $N^2-1$ dimensional set $E_{\tilde{T}}$ is a higher-dimensional generalization of the Bloch ball, and is the convex hull of the $(2N-2)$-dimensional $\hat{E}_{\tilde{T}}$ which generalizes the Bloch sphere. For this latter set, we provide an algebraic characterization as the solution of a system of polynomial equations.

\subsubsection*{The set $E_S$ as an intersection of half-spaces}

An obvious fact about $E_S$ is that for any linear combination ${\cal O}_{\hat{e}} = \sum_i \hat{e}_i \cdot {\cal O}_i$ of operators in $S$ and any state $\rho$, the expectation value of ${\cal O}_{\hat{e}}$ for $\rho$ is larger than or equal to the minimum eigenvalue of ${\cal O}_{\hat{e}}$. This implies $E_S$ is contained in the the half space $H_{\hat{e}}$ defined by  $\hat{e} \cdot \vec{x} \ge \lambda_{min}({\cal O}_{\hat{e}})$. In section 4, we prove that $E_S$ is exactly the intersection of such half-spaces over all possible unit vectors $\hat{e} \in \mathbb{R}^n$. In general, the dimension of $E_S$ is equal to the number of linearly independent traceless operators in $\text{span}(S)$. Restricting $S$ to a set of linearly independent traceless operators, $E_S$ is a dimension $n$ subset of $\mathbb{R}^n$ and has a boundary that we also characterize. The main result (mostly described previously in \cite{wichmann1963density,Szymanski:2022sgn}) is:
\begin{theorem}
\label{boundary}
    For a set $S = \{{\cal O}_i\}$ of $n$ linearly independent traceless Hermitian operators and $\hat{e} \in \mathbb{R}^n$, define ${\cal O}_{\hat{e}} = \sum_i \hat{e}_i \cdot {\cal O}_i$. Define $\lambda_{min}({\cal O}_{\hat{e}})$ to be the minimum eigenvalue for ${\cal O}_{\hat{e}}$ and ${\cal H}_{\hat{e}}$ the space of eigenvectors with this eigenvalue. 
    \begin{enumerate}
    \item
    The set $E_S$ can be described as an intersection of half-spaces
    \begin{equation}
    \label{first}
    E_S = \bigcap_{\substack{\hat{e} \in \mathbb{R}^n \\ |\hat{e}| = 1}} H_{\hat{e}} \quad\text{with} \quad H_{\hat{e}} = \{\vec{x} | \hat{e} \cdot \vec{x} \ge \lambda_{min}({\cal O}_{\hat{e}}) \}.
    \end{equation}
    \item The boundary of $E_S$ is 
    \begin{equation}
    \label{ESset1}
    \partial E_S = \bigcup_{\substack{\hat{e} \in \mathbb{R}^n \\ |\hat{e}| = 1}}  B_{\hat{e}} 
    \end{equation}  
    where $B_{\hat{e}} = \partial H_{\hat{e}} \cap E_S$ is a compact convex subset of the hyperplane $\hat{e} \cdot \vec{x} = \lambda_{min}({\cal O}_{\hat{e}})$. 
    \item A state $\rho$ has expectation values in $B_{\hat{e}}$ if and only if $\rho$ is in $G_{\hat{e}}$, the set of states with minimum expectation value for ${\cal O}_{\hat{e}}$. 
    \item The set $B_{\hat{e}}$ is equal to $E_{S_{\hat{e}}}$ for the set of operators 
     \begin{equation}
          S_{\hat{e}} = \{\pi_{\hat{e}} {\cal O}_i \pi_{\hat{e}} \} 
     \end{equation}
     acting on ${\cal H}_{\hat{e}}$, where $\pi_{\hat{e}}$ is the projection from ${\cal H}$ to ${\cal H}_{\hat{e}}$.
     \end{enumerate}
\end{theorem}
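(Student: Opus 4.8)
The plan is to treat part 1 as the crux and derive parts 2--4 from it together with the structure established in section 3. Since that section already shows that $E_S$ is a compact, convex subset of $\mathbb{R}^n$, the identity \eqref{first} is an instance of the standard fact that a closed convex set equals the intersection of the closed half-spaces containing it; the only real content is to verify that the relevant supporting half-spaces are precisely the $H_{\hat e}$. I would first record the trivial inclusion $E_S \subseteq \bigcap_{\hat e} H_{\hat e}$, already noted before the theorem: for any state $\rho$ and unit vector $\hat e$, $\hat e \cdot \vec E(\rho) = \tr(\rho\, \mathcal{O}_{\hat e}) \ge \lambda_{min}(\mathcal{O}_{\hat e})$.

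For the reverse inclusion I would invoke the separating hyperplane theorem. The key computation is that, since $E_S$ is by definition the image of the state space under the linear map $\rho \mapsto \vec E(\rho)$,
\[
\min_{\vec x \in E_S} \hat e \cdot \vec x = \min_\rho \tr(\rho\, \mathcal{O}_{\hat e}) = \lambda_{min}(\mathcal{O}_{\hat e}),
\]
the last equality holding because $\tr(\rho\,\mathcal{O}_{\hat e})$ is minimized over the compact set of density matrices by a pure state supported on the minimal eigenspace. Hence for each direction $\hat e$ the supporting hyperplane of $E_S$ where $\hat e\cdot\vec x$ is smallest is exactly $\partial H_{\hat e}$. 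Given $\vec x_0 \notin E_S$, compact convexity allows me to strictly separate it from $E_S$ by a hyperplane with some inner normal $\hat e$, forcing $\hat e\cdot\vec x_0 < \min_{\vec x\in E_S}\hat e\cdot\vec x = \lambda_{min}(\mathcal{O}_{\hat e})$, so $\vec x_0\notin H_{\hat e}$ and therefore $\vec x_0\notin\bigcap_{\hat e}H_{\hat e}$. (Restricting to unit $\hat e$ loses nothing since $H_{c\hat e}=H_{\hat e}$ for $c>0$.) This establishes \eqref{first}.

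For part 2, every boundary point of the closed convex set $E_S$ lies on a supporting hyperplane, which by the identification above is $\partial H_{\hat e}$ for some $\hat e$, so the point lies in $B_{\hat e}$; conversely any point of $B_{\hat e}\subseteq\partial H_{\hat e}$ cannot be interior to $E_S$, since a neighbourhood of it would leave the half-space $H_{\hat e}\supseteq E_S$, so $B_{\hat e}\subseteq\partial E_S$. Together these give \eqref{ESset1}. That $B_{\hat e}=\partial H_{\hat e}\cap E_S$ is compact and convex and lies in the stated hyperplane is immediate, being the intersection of a hyperplane with the compact convex set $E_S$. Part 3 is then a matter of unwinding definitions: because $\vec E(\rho)\in E_S$ automatically, $\vec E(\rho)\in B_{\hat e}$ holds iff $\hat e\cdot\vec E(\rho)=\tr(\rho\,\mathcal{O}_{\hat e})=\lambda_{min}(\mathcal{O}_{\hat e})$, and expanding $\tr(\rho\,\mathcal{O}_{\hat e})$ in the eigenbasis of $\mathcal{O}_{\hat e}$ shows this holds exactly when $\rho$ is supported on $\mathcal{H}_{\hat e}$, i.e. when $\rho\in G_{\hat e}$.

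Finally, part 4 follows from part 3 by a short trace manipulation. For $\rho\in G_{\hat e}$ we have $\rho=\pi_{\hat e}\rho\,\pi_{\hat e}$, so by cyclicity and $\pi_{\hat e}^2=\pi_{\hat e}$,
\[
\tr(\rho\,\mathcal{O}_i)=\tr\!\big(\rho\,\pi_{\hat e}\mathcal{O}_i\pi_{\hat e}\big)=\tr_{\mathcal{H}_{\hat e}}\!\big(\rho\,(\pi_{\hat e}\mathcal{O}_i\pi_{\hat e})\big),
\]
which is the $i$-th expectation value of $\rho$, viewed as a state on $\mathcal{H}_{\hat e}$, under the operator $\pi_{\hat e}\mathcal{O}_i\pi_{\hat e}\in S_{\hat e}$. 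As $\rho$ ranges over $G_{\hat e}$ it ranges, upon restriction, over all density matrices on $\mathcal{H}_{\hat e}$, so $B_{\hat e}=\vec E(G_{\hat e})$ coincides coordinatewise with $E_{S_{\hat e}}$. I expect the main obstacle to be making the separating-hyperplane step of part 1 fully precise, in particular pinning the offset of each supporting hyperplane to the spectral quantity $\lambda_{min}(\mathcal{O}_{\hat e})$; this is the identification that bridges the convex-geometric and operator-theoretic descriptions of $E_S$ and is reused in every subsequent part, the remaining arguments being essentially bookkeeping once it is in hand.
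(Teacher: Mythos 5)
Your proposal is correct, and its treatment of parts 2--4 essentially coincides with the paper's (supporting hyperplane theorem for the forward inclusion of part 2, the tautological unwinding for part 3, and the projection/trace manipulation for part 4, including the needed observation that states in $G_{\hat e}$ satisfy $\rho = \pi_{\hat e}\rho\,\pi_{\hat e}$ and restrict to arbitrary density matrices on ${\cal H}_{\hat e}$). Where you genuinely diverge is part 1. You prove it directly by convex duality: the key identity $\min_{\vec x\in E_S}\hat e\cdot\vec x = \min_\rho \tr(\rho\,{\cal O}_{\hat e}) = \lambda_{min}({\cal O}_{\hat e})$ pins each supporting half-space to $H_{\hat e}$, and the strict separating hyperplane theorem then excludes any $\vec x_0\notin E_S$ from the intersection. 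The paper instead proves part 2 first, shows $\partial E_S\subset\partial I$ where $I$ is the intersection of half-spaces, and then closes the gap with a topological argument: both $E_S$ and $I$ are compact convex sets with nonempty interior (this is where Proposition \ref{closedconvex}, hence linear independence and tracelessness, enters), so both boundaries are homeomorphic to $S^{n-1}$, and since no proper subset of $S^{n-1}$ is homeomorphic to $S^{n-1}$ (a Borsuk--Ulam corollary), the boundaries and hence the sets coincide. Your route buys simplicity and generality: it is the standard "closed convex set $=$ intersection of supporting half-spaces" argument, it does not require $E_S$ to be full-dimensional, and it makes explicit that the single spectral identification $\min_\rho\tr(\rho\,{\cal O}_{\hat e})=\lambda_{min}$ carries the entire theorem; the paper's route is heavier machinery for part 1 but front-loads the supporting-hyperplane structure of the boundary, which is the content it emphasizes and reuses. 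One small bonus of your write-up: you prove both inclusions of \eqref{ESset1} explicitly (including $B_{\hat e}\subset\partial E_S$), which the paper leaves implicit.
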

For the example above $B_{\hat{e}}$ is a line segment for $\hat{e} = \hat{e}_1 \equiv (1,0)$,   and a single point for all other directions $\hat{e}$, as shown in Figure \ref{fig:supporting}.
\begin{figure}
    \centering
    \includegraphics[width = .9 \textwidth]{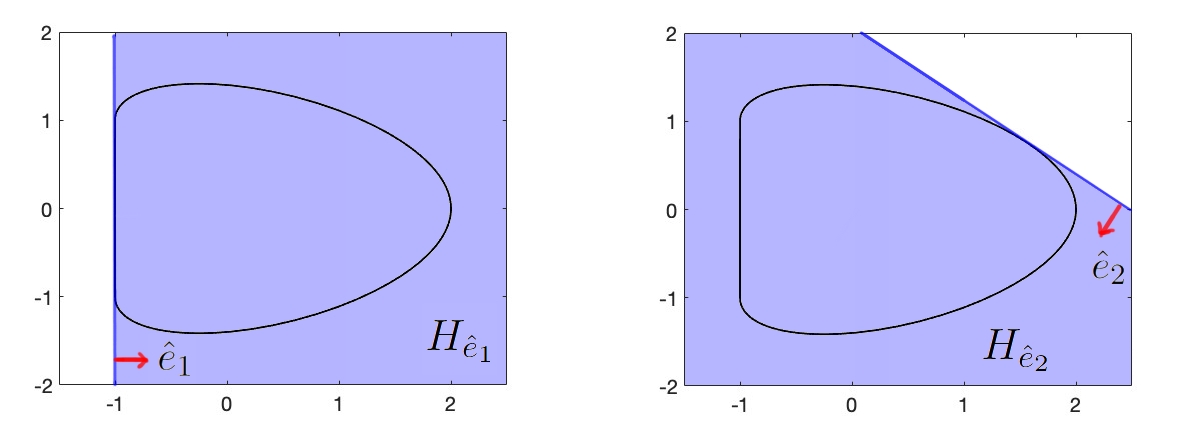}
    \caption{The set $E_S$ is the intersection of half spaces $H_{\hat{e}}$ defined by $\hat{e} \cdot \vec{x} \ge \lambda_{min}(\hat{e} \cdot {\cal O})$ over all unit vectors $\hat{e}$. The intersection of the hyperplane $\partial H_{\hat{e}}$ with $E_S$ is a single point when $\hat{e} \cdot {\cal O}$ has a non-degenerate minimum eigenvalue.}
    \label{fig:supporting}
\end{figure}

\subsubsection*{The set $E_S$ as the image of a diffeomorphism from $\mathbb{R}^n$.}

Result 3 in Theorem \ref{boundary} shows that the boundary points in $E_S$ correspond to ground states of Hamiltonians of the form ${\cal O}_{\hat{e}}$ for ${\cal O}_i$ in $S$. 
In section 5, we show that points in the interior of $E_S$ excluding the origin (again, for a set of linearly independent traceless operators) are in one-to-one correspondence with $T > 0$ thermal states for Hamiltonians of the form ${\cal O}_{\hat{e}}$, i.e. states of the form
\begin{equation}
\label{betastates}
    \rho_{\vec{\beta}} = {e^{- \beta {\cal O}_{\hat{e}} }\over \tr e^{- \beta {\cal O}_{\hat{e}}} } \equiv {e^{- \sum_i \beta_i {\cal O}_i} \over \tr(e^{- \sum_i \beta_i {\cal O}_i})} \; 
\end{equation}
where $\beta = 1/T$ and $\vec{\beta} = \beta \hat{e}$. With this definition, we have a bijection between points in the interior of $E_S$ and parameters $\vec{\beta} \in \mathbb{R}^n$. The states $\rho_{\vec{\beta}}$ are the states of maximum entropy that achieve the desired expectation values. The bijection between the parameters $\beta_i$ and the expectation values $E_i$ also exhibits some nice properties. These are summarized by the following theorem (originally due to Wichmann \cite{wichmann1963density}):
\begin{theorem}
\label{expthm}
If $S = \{{\cal O}_i \}$  is a set of $n$ linearly independent operators with $\identity \notin \text{span}(S)$, the map $\vec{\beta} \to \vec{E}(\vec{\beta}) = \tr(\rho_{\vec{\beta}} \vec{ \cal O}  )$ 
is a diffeomorphism between $\mathbb{R}^n$ and the interior of $E_S$ with symmetric negative definite Jacobian $\partial E_i / \partial \beta_j$ .
\end{theorem}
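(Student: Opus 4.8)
The plan is to realize $\vec E$ as minus the gradient of the generating function $F(\vec\beta) = \log Z(\vec\beta)$, where $Z(\vec\beta) = \tr\, e^{-K(\vec\beta)}$ and $K(\vec\beta) = \sum_i \beta_i {\cal O}_i$, and to deduce all the claims (symmetry, negative-definiteness, injectivity, surjectivity onto $\mathrm{int}\,E_S$) from the strict convexity of $F$ together with the boundary description in Theorem \ref{boundary}. First I would note that $\partial_{\beta_i} Z = -\tr({\cal O}_i e^{-K})$, so that $E_i(\vec\beta) = -\partial_{\beta_i}\log Z = -\partial_{\beta_i} F$; the Jacobian is therefore minus the Hessian of $F$, $\partial E_i/\partial\beta_j = -\partial_{\beta_i}\partial_{\beta_j} F$, which is automatically symmetric. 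Everything then reduces to showing that $F$ is smooth and strictly convex, i.e. that its Hessian is positive definite.

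To compute the Hessian I would differentiate once more using the Duhamel identity $\partial_{\beta_i} e^{-K} = -\int_0^1 e^{-sK}{\cal O}_i e^{-(1-s)K}\,ds$, which gives
\begin{equation}
\partial_{\beta_i}\partial_{\beta_j} F = \frac{1}{Z}\int_0^1 \tr\!\left({\cal O}_i\, e^{-sK}{\cal O}_j\, e^{-(1-s)K}\right)ds - E_i E_j =: C_{ij}.
\end{equation}
This $C_{ij}$ is the Bogoliubov--Kubo--Mori covariance of the ${\cal O}_i$ in the state $\rho_{\vec\beta}$. To see it is positive definite I would pass to the eigenbasis of $K$ with eigenvalues $\epsilon_a$ and Boltzmann weights $p_a = e^{-\epsilon_a}/Z>0$; for $A = \sum_i v_i {\cal O}_i$ the quadratic form becomes $\sum_{a,b}|A_{ab}|^2 L(p_a,p_b) - (\sum_a p_a A_{aa})^2$, where $L(p_a,p_b)$ is the strictly positive logarithmic mean of $p_a,p_b$, reducing to $p_a$ when $a=b$. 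The off-diagonal terms are manifestly nonnegative and the diagonal part is $\ge(\sum_a p_a A_{aa})^2$ by Jensen's inequality, so $C\ge 0$; equality forces $A$ to be a scalar multiple of $\identity$, which is excluded because the ${\cal O}_i$ are linearly independent and $\identity\notin\mathrm{span}(S)$. Hence $C$ is positive definite, the Jacobian $-C$ is symmetric negative definite, and $\vec E$ is a local diffeomorphism; strict convexity of $F$ moreover makes the gradient map $\vec E = -\nabla F$ globally injective.

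It remains to identify the image with $\mathrm{int}\,E_S$. That the image lies in the interior follows from Theorem \ref{boundary}: since $e^{-K}$ is strictly positive, $\rho_{\vec\beta}$ has full rank and therefore cannot minimize any ${\cal O}_{\hat e}$ with $\hat e\ne 0$ (such an operator is not a multiple of $\identity$ and so has a proper ground-state subspace), so by Result 3 its expectation values avoid every $B_{\hat e}$ and hence all of $\partial E_S$. For surjectivity I would argue by properness: if $\vec E(\vec\beta_k)\to\vec e_*\in\mathrm{int}\,E_S$ with $|\vec\beta_k|\to\infty$ and $\vec\beta_k/|\vec\beta_k|\to\hat e$, then $\rho_{\vec\beta_k}$ concentrates on the ground-state subspace ${\cal H}_{\hat e}$ and $\hat e\cdot\vec E(\vec\beta_k)\to\lambda_{min}({\cal O}_{\hat e})$, placing the limit $\vec e_*$ in $B_{\hat e}\subset\partial E_S$ --- a contradiction. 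Thus the preimages of interior points stay bounded, the image is closed in $\mathrm{int}\,E_S$, and being also open (local diffeomorphism) and nonempty in the connected set $\mathrm{int}\,E_S$, it is all of $\mathrm{int}\,E_S$. An injective local diffeomorphism onto $\mathrm{int}\,E_S$ is a diffeomorphism, completing the proof.

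I expect the genuine work to be concentrated in the last paragraph --- the global statement that the image is exactly the interior. The positive-definiteness of the Hessian and the resulting local and injective structure are essentially the convexity of $\log\tr e^{-K}$ (equivalently, the Gibbs variational principle $F(\vec\beta) = \max_\rho[-\tr(\rho K)+S(\rho)]$ exhibiting $F$ as a supremum of functions affine in $\vec\beta$), and are robust. The delicate point is the asymptotic analysis as $\vec\beta\to\infty$: controlling how $\rho_{\vec\beta}$ collapses onto ground-state subspaces --- including the case where the limiting direction $\hat e$ carries larger ground-state degeneracy than nearby directions --- and tying this cleanly to the boundary stratification $\partial E_S=\bigcup_{\hat e} B_{\hat e}$ of Theorem \ref{boundary}.
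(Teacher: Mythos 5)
Your proposal is correct, and while it shares the paper's skeleton (symmetric Hessian, monotone-gradient injectivity, local diffeomorphism plus a global argument), it differs genuinely in two places. For positive-definiteness of the Jacobian, you compute the Bogoliubov--Kubo--Mori covariance directly via Duhamel in the eigenbasis of $K$ and exclude the equality case using $\identity \notin \text{span}(S)$ and linear independence; the paper instead factors the map through ${\cal N}\circ\exp$ on the \emph{full} operator space, proves that this is a diffeomorphism from $\mathfrak{h}_0$ onto all positive density operators (the same logarithmic-mean eigenvalue computation appears there in the derivative of $\exp$), and then restricts, obtaining the $n\times n$ Jacobian as a principal block of a positive matrix. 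For surjectivity, the paper proves a quantitative estimate --- images of the spheres $|\vec\beta| = \beta > \beta_\epsilon = (2/\epsilon)\ln(2NR/\epsilon)$ lie within $\epsilon$ of $\partial E_S$ --- followed by a geometric separation argument along the segment from the origin; you argue instead by properness (a sequence with $|\vec\beta_k|\to\infty$ and converging images forces the limit into some $B_{\hat e}\subset\partial E_S$) and conclude with the open-and-closed argument in the connected set $\mathrm{int}(E_S)$. Your route is softer and also supplies a step the paper leaves implicit (that full-rank Gibbs states avoid $\partial E_S$, via Theorem \ref{boundary}(3)); the paper's route gives explicit rates and yields Theorem \ref{dualprop} as a byproduct, which your subspace-level argument does not directly provide. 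Two details you should pin down to make the sketch airtight: first, the concentration claim must be uniform in the direction, which follows from the free-energy/entropy bound $\lambda_{min}({\cal O}_{\hat e_k}) \le \hat e_k \cdot \vec E(\beta_k \hat e_k) \le \lambda_{min}({\cal O}_{\hat e_k}) + \beta_k^{-1}\log N$ together with continuity of $\lambda_{min}$ in $\hat e$ --- this automatically disposes of the degeneracy-jumping worry you raise at the end; second, Theorem \ref{boundary} is stated for traceless operators, so your appeals to it require the reduction ${\cal O}_i = {\cal O}_i' + c_i\identity$ noted in the paper's footnote preceding Theorem 2.
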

For the example of Figure \ref{fig:example}, this map is depicted in Figure \ref{fig:expmap}. 

\begin{figure}
    \centering
    \includegraphics[width = \textwidth]{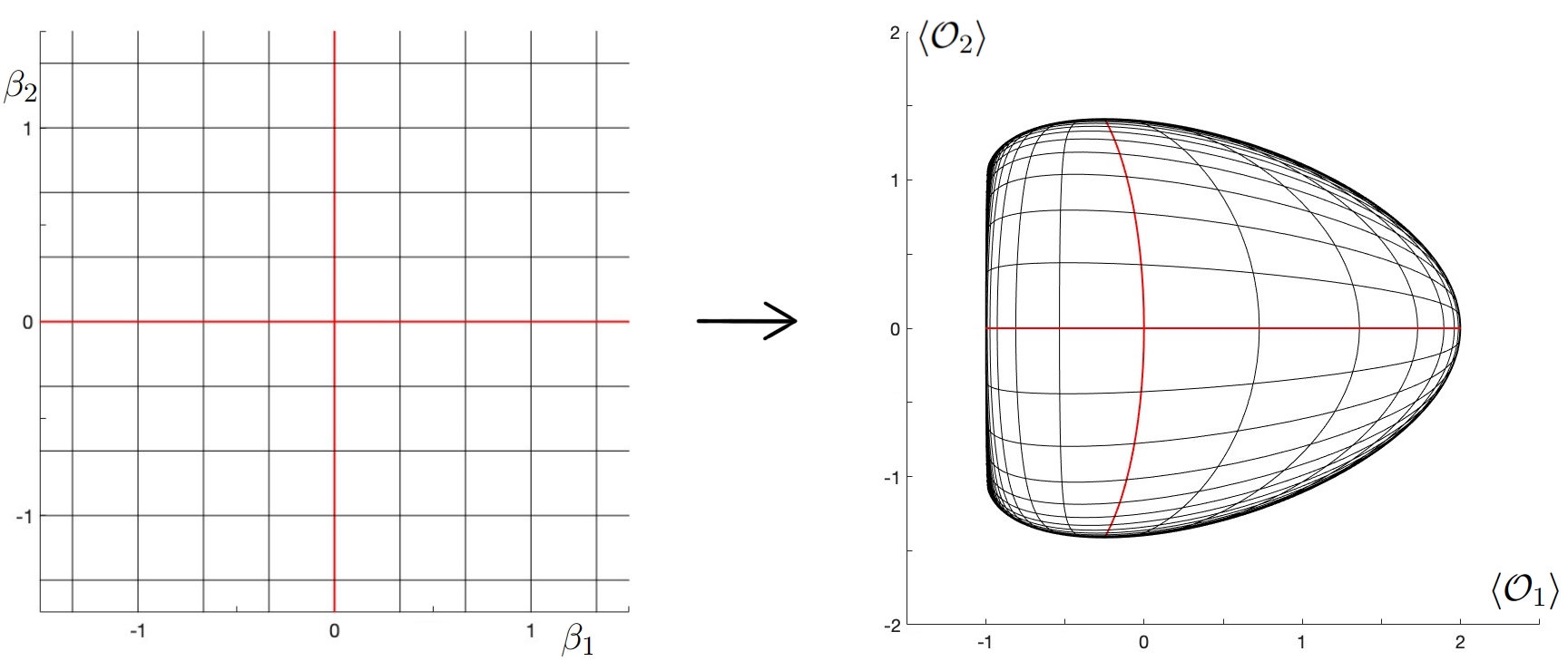}
    \caption{The map $\vec{\beta} \to \vec{E}(\vec{\beta}) = \tr(\rho_{\vec{\beta}} \vec{ \cal O})$ for $\rho_{\vec{\beta}}$ defined in (\ref{betastates})
is a diffeomorphism between $\mathbb{R}^n$ and the interior of $E_S$ with symmetric negative definite Jacobian $\partial E_i / \partial \beta_j$ .}
    \label{fig:expmap}
\end{figure}

\subsubsection*{Geometrical description of Theorem 2}

As we explain in section 5, Theorem \ref{expthm} is equivalent to a more geometrical underlying result. To understand this, we recall that quantum states can naturally be understood as elements of the dual vector space $\mathfrak{h}^*$ to the real vector space $\mathfrak{h}$ of Hermitian operators. Elements of $\mathfrak{h}^*$ are linear maps from $\mathfrak{h}$ to $\mathbb{R}$; quantum states correspond to the subset ${\cal D}^*$ of $\mathfrak{h}^*$ defined by maps ${\cal O} \to \tr(\rho {\cal O})$ for non-negative unit trace Hermitian operators $\rho$.

Given an element of $\mathfrak{h}^*$, we can restrict its action to a subspace $\mathfrak{h}_S \subset \mathfrak{h}$ to define an element of $\mathfrak{h}_S^*$. Starting from the full set of quantum states ${\cal D}^* \subset \mathfrak{h}^*$ we obtain via this restriction a subset ${\cal D}^*_S$ of $\mathfrak{h}^*_S$. The set ${\cal D}^*_S$ tells us all the possible ways to assign expectation values to operators in $\mathfrak{h}_S$ that arise from considering a valid quantum state. Thus, our original question of characterizing $E_S$ for a set $S$ of operators is equivalent to the question of characterizing the set ${\cal D}^*_S \subset \mathfrak{h}_S^*$ for a subspace $\mathfrak{h}_S = \text{span}(S)$.

If $\mathfrak{h}_S$ is a subspace of $\mathfrak{h}$ that does not contain the identity operator, the set ${\cal D}^*_S$ is a full dimension subset of $\mathfrak{h}^*$. Theorem 2 is equivalent to the statement that the interior of ${\cal D}^*_S$ is the image of a certain map from $\mathfrak{h}_S$ to $\mathfrak{h}_S^*$ with nice properties. 
To define this, consider the function $f: \mathfrak{h} \to \mathbb{R}$ defined by
\begin{equation}
f({\cal O}) = \log \tr e^{\cal O}.
\end{equation}
The derivative $df$ is a map from $\mathfrak{h} \to \mathfrak{h}^*$ defined by
\begin{equation}
df({\cal O}) = {\tr( \cdot \; e^{\cal O}) \over \tr(e^{\cal O})} \; .
\end{equation}
Since $e^{\cal O}/\tr(e^{\cal O})$ is positive and has unit trace, each point in the image of $df$ is a valid quantum state. Thus, the image of $df$ lies in ${\cal D}^*$. Starting from $df$, we can define a map $df_S: \mathfrak{h}_S \to \mathfrak{h}_S^*$ by restricting $df$ to act on $\mathfrak{h}_S$ and restricting the maps in the image to act on $\mathfrak{h}_S$.\footnote{More formally, we can define $df_S = \iota^*_S \circ df \circ \iota_S$ where  $\iota_S : \mathfrak{h}_S \to \mathfrak{h}$ is  the inclusion map and $\iota^*_S$ is the restriction map defined by $\iota^*_S \circ w = w \circ \iota_S$ for $w \in \mathfrak{h}^*$.} Then we have
\begin{theorem}
\label{dualprop}
    For any subspace $\mathfrak{h}_S \subset \mathfrak{h}$ with $\identity \notin \mathfrak{h}_S$, the map $ df_S: \mathfrak{h}_S \to \mathfrak{h}_S^*$ is a diffeomorphism from $\mathfrak{h}_S$ to the interior of $\mathfrak{D}^*_S$ with symmetric, positive definite Jacobian.
\end{theorem}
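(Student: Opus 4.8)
The plan is to recognize $df_S$ as the gradient map of the restriction to $\mathfrak{h}_S$ of the convex ``free energy'' $f(\mathcal{O})=\log\tr e^{\mathcal{O}}$, and then to run the standard Legendre-duality argument: symmetry and positive-definiteness of the Jacobian supply injectivity and the local-diffeomorphism property, while a coercivity estimate powered by the half-space description of Theorem \ref{boundary} supplies surjectivity onto the interior. To set up coordinates I would fix a basis $\{\mathcal{O}_i\}$ of $\mathfrak{h}_S$, identify $\mathfrak{h}_S^*\cong\mathbb{R}^n$ by $w\mapsto(w(\mathcal{O}_1),\dots,w(\mathcal{O}_n))$, and write $g(\vec{x})=f(\sum_i x_i\mathcal{O}_i)$. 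Under this identification $df_S(\vec{x})=\nabla g(\vec{x})$ and $\mathcal{D}^*_S\cong E_S$, so the Jacobian of $df_S$ is the Hessian $\partial^2 g/\partial x_i\partial x_j$, which is automatically symmetric.

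Computing the second derivative with Duhamel's formula $\tfrac{d}{dt}e^{A+tB}\big|_0=\int_0^1 e^{sA}Be^{(1-s)A}\,ds$ yields the Kubo--Mori covariance
\[
\partial_i\partial_j g=\int_0^1 \tr\!\big(\rho^{1-s}\mathcal{O}_i\,\rho^{s}\mathcal{O}_j\big)\,ds-\tr(\rho\,\mathcal{O}_i)\tr(\rho\,\mathcal{O}_j),\qquad \rho=e^{\sum_k x_k\mathcal{O}_k}/\tr\big(e^{\sum_k x_k\mathcal{O}_k}\big).
\]
Diagonalizing $\rho=\sum_k p_k\ket{k}\bra{k}$ with all $p_k>0$, the quadratic form on a Hermitian $B$ becomes $\sum_k p_k B_{kk}^2-(\sum_k p_k B_{kk})^2+\sum_{k\ne l}L(p_k,p_l)|B_{kl}|^2$, where $L(p_k,p_l)>0$ is the logarithmic mean; this is a variance plus a manifestly non-negative term, hence positive semidefinite with kernel exactly $\mathbb{R}\identity$. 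Since $\identity\notin\mathfrak{h}_S$, the restricted Hessian is strictly positive definite everywhere. Two consequences follow at once: $g$ is strictly convex, so its gradient $df_S$ is globally injective by monotonicity of $\nabla g$; and by the inverse function theorem $df_S$ is a local diffeomorphism, hence an open map. Because each $df_S(\vec{x})$ is the restriction of a genuine full-rank Gibbs state, the image lies in $\mathcal{D}^*_S$, and being an open set contained in $\mathcal{D}^*_S$ it lies in the interior.

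The crux, and the step I expect to be the main obstacle, is surjectivity onto the \emph{entire} interior. Given $\vec{e}$ in the interior of $E_S$, I would produce a zero of $\nabla(g(\vec{x})-\vec{e}\cdot\vec{x})$ by minimizing $g(\vec{x})-\vec{e}\cdot\vec{x}$; strict convexity makes any critical point the unique minimizer, so it suffices to prove coercivity. Writing $\vec{x}=t\hat{x}$ and using $\log\tr e^{tM}=t\lambda_{\max}(M)+O(1)$ gives $g(t\hat{x})-t\,\vec{e}\cdot\hat{x}=t\big(\lambda_{\max}(\mathcal{O}_{\hat{x}})-\vec{e}\cdot\hat{x}\big)+O(1)$. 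The identity $\lambda_{\max}(\mathcal{O}_{\hat{x}})=-\lambda_{\min}(\mathcal{O}_{-\hat{x}})$ rewrites the coefficient as $\hat{u}\cdot\vec{e}-\lambda_{\min}(\mathcal{O}_{\hat{u}})$ with $\hat{u}=-\hat{x}$. By the interior characterization that follows from Theorem \ref{boundary} (a point is interior precisely when $\hat{u}\cdot\vec{e}>\lambda_{\min}(\mathcal{O}_{\hat{u}})$ for every unit $\hat{u}$, the infimum over the compact unit sphere being attained and therefore strictly positive), this coefficient is bounded below by a constant $\delta>0$ uniformly in $\hat{x}$.

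Hence $g(\vec{x})-\vec{e}\cdot\vec{x}\ge\delta|\vec{x}|-C$ is coercive, a minimizer exists, and there $df_S(\vec{x})=\vec{e}$. Combining the pieces, $df_S$ is an injective open local diffeomorphism whose image is exactly the interior of $\mathcal{D}^*_S$, and a bijective local diffeomorphism is a diffeomorphism; its Jacobian is the restricted Hessian, which is symmetric and positive definite. The one place demanding genuine care is matching the coercivity coefficient to the strict interior condition: this is where Theorem \ref{boundary} is doing the real work, converting an analytic growth estimate for $\log\tr e^{t\mathcal{O}_{\hat{x}}}$ into the geometric statement that every interior expectation value is realized.
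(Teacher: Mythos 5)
Your proof is correct, but it takes a genuinely different route from the paper's, principally in how positive-definiteness and surjectivity are established. For positivity, the paper never diagonalizes the Hessian on the subspace: it first proves (Lemma \ref{posjac}) that ${\cal N}\circ\exp$ is a diffeomorphism of the full traceless space $\mathfrak{h}_0$ onto the positive density operators, getting positivity of that Jacobian by a continuity argument (it is proportional to the identity at the origin and nowhere singular), and then obtains the subspace Jacobian as a principal block of the full one; you instead compute the Hessian directly via Duhamel's formula, recognize the Kubo--Mori covariance, and identify its kernel as exactly $\mathbb{R}\identity$, which handles any $\mathfrak{h}_S$ with $\identity\notin\mathfrak{h}_S$ in one stroke (the paper reduces this general case to the traceless case via a translation remark in a footnote). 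The larger divergence is surjectivity: the paper's final lemma is a topological enclosure argument --- an explicit estimate with $\beta_\epsilon = \frac{2}{\epsilon}\ln\left({2NR \over \epsilon}\right)$ shows that the image of the sphere $|\vec{\beta}|=\beta$ lies within $\epsilon$ of $\partial E_S$, so the image of the enclosed ball must eventually capture any given interior point --- whereas you run the classical Legendre-duality argument, minimizing the tilted potential $g(\vec{x})-\vec{e}\cdot\vec{x}$, whose coercivity follows from the strict half-space inequalities holding uniformly over the compact unit sphere. Note that your route needs only the easy inclusion $E_S\subset H_{\hat{u}}$ (expectation value at least the minimum eigenvalue), the standard fact that interior points satisfy these inequalities strictly, and continuity of $\lambda_{min}(\mathcal{O}_{\hat{u}})$ in $\hat{u}$; the hard direction of Theorem \ref{boundary} is never used. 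What each buys: your variational argument is shorter and standard in convex analysis, while the paper's enclosure argument is more hands-on but yields quantitative information about how fast thermal states approach $\partial E_S$ as $\beta\to\infty$. Your injectivity step (monotonicity of the gradient of a strictly convex function) is essentially the paper's line-integral argument, so there the two proofs coincide.
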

This is the geometrical version of Theorem \ref{expthm}, expressed without reference to a particular basis. Taking $\mathfrak{h}_S$ to be the subspace of traceless Hermitian operators, or any $N^2-1$ dimensional subspace not containing the identity, the theorem implies that the image is the interior of the full space of density operators.

\subsubsection*{The inverse problem}

In section 6, we describe how the results of section 5 may be used to determine whether or not a given set of expectation values $\vec{e}$ is in $E_S$ and if so, to describe the most general state with these expectation values. The Quantum Marginal Problem is a special case of this, as we review below.

A basic observation is that if $C(\vec{E})$ is any function (e.g. $C(\vec{E}) = |\vec{E} - \vec{e}|^2$)  with global minimum 0 at $\vec{e}$ that is its only critical point, the function $C(\vec{E}(\vec{\beta}))$ has a critical point in $\mathbb{R}^n$ if and only if $\vec{e}$ is in the interior of $E_S$, and in this case the critical point is the global minimum. This relies on the fact that $\vec{E}(\vec{\beta})$ is a diffeomorphism so that $\partial_i E_j$ is non-singular. We can now make use of any algorithm (e.g. gradient descent) that will succeed in minimizing $C(\vec{E}(\vec{\beta}))$ in cases where there is a global minimum that is the only critical point. If a minimum is found at some $\vec{\beta}$, the state $\rho_{\vec{\beta}}$ achieves the desired expectation values. This approach was discussed and used (without proof of its effectiveness) recently in \cite{enhanced_neg_energ1,enhanced_neg_energ} and \cite{zeng2023maximum}. 

\begin{figure}
    \centering
    \includegraphics[width = \textwidth]{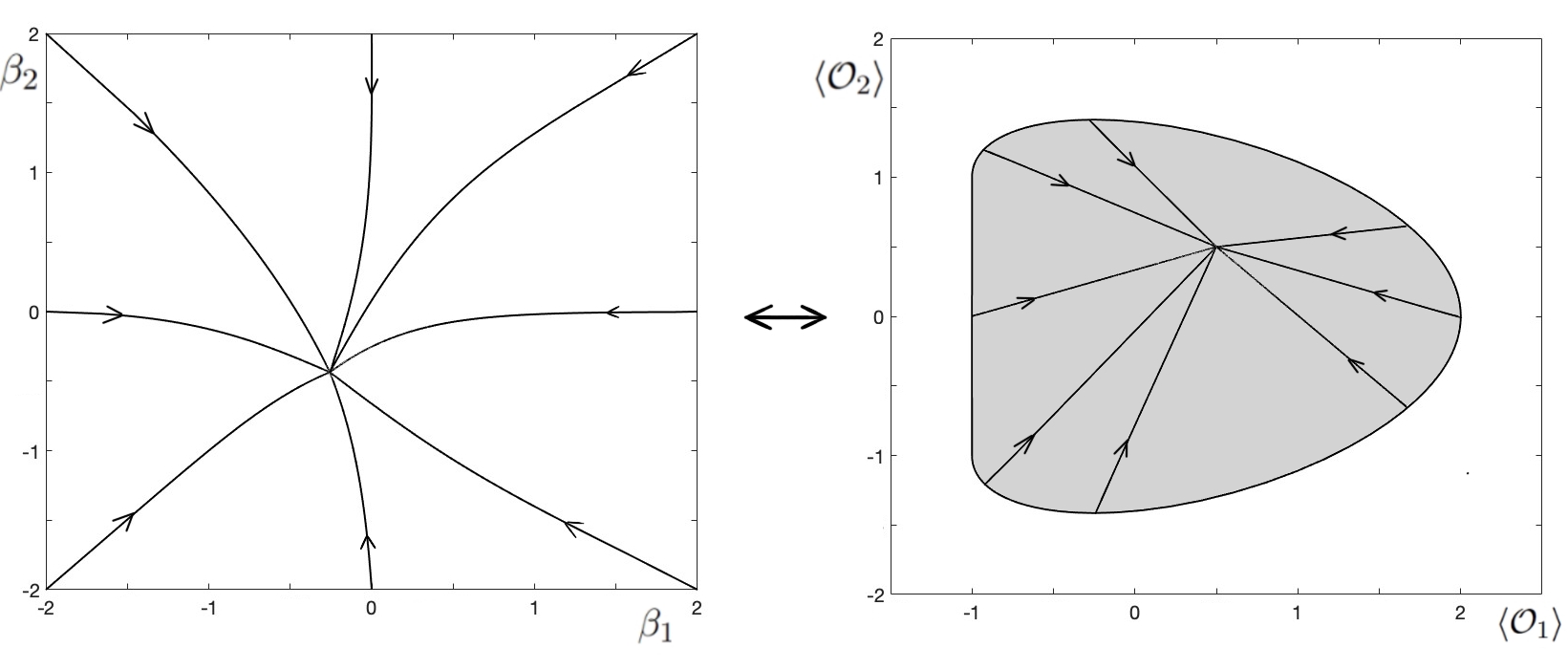}
    \caption{The flow defined in (\ref{flow1}), corresponding to gradient flow for the function $\Delta(\vec{E}) = |\vec{E}-\vec{e}|^2/2$ in the space of expectation values, we have chosen $\vec{e} = (0.5,0.5)$. Pictured are the flow lines passing through $\vec{\beta} = (\pm 2, 2), (\pm 2, 0), (\pm 2,-2), (0, \pm 2) $.}
    \label{fig:flowplot}
\end{figure}

A closely related approach is to define a flow in the space of states (\ref{betastates}) that is the image under the inverse map $\vec{E}(\vec{\beta}) \to \vec{\beta}$ of the negative gradient flow for the function $\Delta(\vec{E}) = |\vec{E}-\vec{e}|^2/2$ in the space of expectation values. When $\vec{e}$ is an allowed set of expectation values in the interior of $E_S$, this flow for an arbitrary initial $\vec{\beta}$ converges exponentially fast to the unique $\beta$ for which the state $\rho_{\vec{\beta}}$. The main result is:
\begin{theorem}
Consider the flow resulting from the equation 
\begin{equation}
\label{flow1}
\partial_t \beta_i = -(J^{-1})_{ji} (E_j(\vec{\beta}) - e_j) \qquad \qquad J_{ij} = \partial_i E_j \; .
\end{equation}
with some arbitrary initial $\vec{\beta}$. Then along the flow, we have $\Delta(\vec{E}(\vec{\beta}(t))) = \Delta_0 e^{-2t}$ and
\begin{enumerate}
    \item
    If $\vec{e} \in int(E_S)$,
the flow converges to the unique $\vec{\beta} = \vec{\beta}_\infty$ with $\vec{E}(\vec{\beta}_\infty) = \vec{e}$. 
\item
If $\vec{e} \in \partial E_S$, $|\vec{\beta}(t)|$ diverges for $t \to \infty$ and $\lim_{t \to \infty} \Delta(\vec{E}(\vec{\beta}(t))) = 0$.
\item
If $\vec{e} \notin \partial E_S$, $|\vec{\beta}(t)|$ diverges at some finite time $t_0$ and $\lim_{t \to t_0} \Delta(\vec{E}(\vec{\beta}(t))) > 0$.
\end{enumerate}
\end{theorem}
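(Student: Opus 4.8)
The plan is to change variables from the parameters $\vec\beta$ to the expectation values $\vec E$, where the flow becomes completely explicit. First I would compute, using the chain rule together with the flow equation \eqref{flow1}, that along any solution
\begin{equation}
\dot E_j = \sum_i (\partial_i E_j)\,\dot\beta_i = -\sum_{i,k} J_{ij}(J^{-1})_{ki}(E_k-e_k) = -\sum_k \delta_{kj}(E_k-e_k) = -(E_j-e_j),
\end{equation}
where I use only that $J$ is symmetric and invertible, which holds everywhere on $\mathbb{R}^n$ by Theorem \ref{expthm}. Thus in expectation-value space the flow is the linear contraction $\dot{\vec E} = -(\vec E-\vec e)$ (the negative gradient flow of $\Delta$, as designed), with explicit solution $\vec E(\vec\beta(t)) = \vec e + (\vec E(\vec\beta_0)-\vec e)e^{-t}$. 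Feeding this into $\Delta = \tfrac12|\vec E-\vec e|^2$ gives $\Delta(\vec E(\vec\beta(t))) = \Delta_0 e^{-2t}$ immediately, wherever the flow exists; this already settles the $\Delta$-formula and the limiting value of $\Delta$ in all three cases.

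Next I would promote this to an explicit formula for $\vec\beta(t)$ itself. Since $\vec E: \mathbb{R}^n \to \operatorname{int}(E_S)$ is a diffeomorphism (Theorem \ref{expthm}), I can define $\vec\gamma(t) = \vec E^{-1}\!\big(\vec e + (\vec E(\vec\beta_0)-\vec e)e^{-t}\big)$ on the set of $t$ for which the argument lies in $\operatorname{int}(E_S)$; a one-line check shows $\vec\gamma$ solves \eqref{flow1} with $\vec\gamma(0)=\vec\beta_0$, so by uniqueness of solutions $\vec\beta(t)=\vec\gamma(t)$ on their common interval. Hence the solution is literally the $\vec E^{-1}$-preimage of the straight half-open segment running from the interior point $\vec E(\vec\beta_0)$ toward $\vec e$, and the maximal interval of existence is exactly the set of $t$ for which this segment stays inside $\operatorname{int}(E_S)$. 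The entire problem now reduces to tracking where this segment sits relative to the compact convex set $E_S$ (convexity established in Section 3).

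The three cases then follow from convex geometry together with one topological lemma: because $\vec E^{-1}$ is continuous and $\operatorname{int}(E_S)$ is open, any sequence of interior points converging to a point of $\partial E_S$ has $\vec E^{-1}$-images leaving every compact set, i.e. $|\vec\beta|\to\infty$ (if a subsequence stayed bounded, its limit would map under $\vec E$ to a boundary point, contradicting that the image of $\vec E$ is the open interior). For Case 1 ($\vec e$ interior) and Case 2 ($\vec e\in\partial E_S$), the standard fact that the segment from an interior point to any point of $\overline{E_S}$ stays in the interior except possibly at its far endpoint shows the segment lies in $\operatorname{int}(E_S)$ for all $t\in[0,\infty)$, so the flow exists for all time. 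In Case 1 the segment's limit $\vec e$ is itself interior, so $\vec\beta(t)\to \vec E^{-1}(\vec e)=\vec\beta_\infty$, the unique preimage; in Case 2 the limit $\vec e$ lies on the boundary and is not in the image of $\vec E$, so the lemma gives $|\vec\beta(t)|\to\infty$. For Case 3 ($\vec e\notin E_S$ — I read the stated ``$\vec e\notin\partial E_S$'' as the complementary outside case), the segment crosses $\partial E_S$ at a unique interior-to-exterior boundary point $\vec E_b$, reached at a finite $t_0=-\log(1-\lambda_0)$; the flow exists only on $[0,t_0)$, the lemma gives $|\vec\beta(t)|\to\infty$ as $t\to t_0^-$, and $\Delta\to\Delta_0 e^{-2t_0}=\tfrac12|\vec E_b-\vec e|^2>0$ since $\vec E_b\neq\vec e$.

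The main obstacle, and the only place requiring real care, is the global-existence-versus-finite-blowup dichotomy and the promotion of ``$\vec\beta(t)$ does not converge'' to ``$|\vec\beta(t)|\to\infty$''. Both are handled cleanly by the explicit solution $\vec\beta(t)=\vec E^{-1}(\text{segment})$ and the properness lemma above, so that no direct estimation of the nonlinear ODE \eqref{flow1} is ever needed; the remaining ingredients (convexity and the global diffeomorphism property) are quoted from Section 3 and Theorem \ref{expthm}. I would also verify carefully the boundary convexity fact used in Cases 1 and 2, namely that $(1-\lambda)x+\lambda y\in\operatorname{int}(E_S)$ for $x\in\operatorname{int}(E_S)$, $y\in\overline{E_S}$ and $\lambda\in[0,1)$, which is elementary but is precisely what guarantees the trajectory never prematurely touches $\partial E_S$.
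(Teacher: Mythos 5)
Your proposal is correct and follows essentially the same route as the paper's proof: pass to expectation-value space where the flow becomes the explicit contraction $\vec{E}(t)=\vec{e}+(\vec{E}_0-\vec{e})e^{-t}$, identify $\vec{\beta}(t)$ as the preimage of this segment under the diffeomorphism of Theorem \ref{expthm} (invoking ODE uniqueness, just as the paper implicitly does), and analyze the three cases by convexity; your reading of case 3 as the exterior case $\vec{e}\notin E_S$ also agrees with how the paper's own proof treats it. The one genuine difference is how ``$\vec{E}(t)$ approaches $\partial E_S$'' gets promoted to ``$|\vec{\beta}(t)|\to\infty$''. The paper argues geometrically that the spheres $|\vec{\beta}|=R$ map under the diffeomorphism to nested surfaces inside $E_S$, which the trajectory must eventually escape; you instead prove a properness lemma: if a subsequence of $\vec{\beta}(t_n)$ stayed bounded, a convergent sub-subsequence plus continuity of $\vec{E}$ would place a boundary point of $E_S$ in the image of $\vec{E}$, contradicting that this image is the open interior. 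Your version is the more airtight of the two (the paper's ``concentric surfaces'' step is left at an intuitive level), and likewise your explicit statement of the convexity fact that $(1-\lambda)\vec{x}+\lambda\vec{y}\in\operatorname{int}(E_S)$ for $\vec{x}\in\operatorname{int}(E_S)$, $\vec{y}\in E_S$, $\lambda\in[0,1)$ pins down what the paper asserts only in passing, namely that the solution formula remains sensible for all $t$ in cases 1 and 2. Both routes deliver the same conclusions; yours trades the paper's geometric picture for a compactness argument that requires no picture.
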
 
Using numerical algorithms to implement the flow in (\ref{flow1}) provides a practical method to find a state $\rho_{\vec{\beta}}$ achieving a give set of expectation values or showing that such a state does not exist. Figure \ref{fig:flowplot} shows an example of the flow for the operators $\ref{defops}$ with the choice $\vec{e} = (0.5,0.5)$. For a pair of random Hermitian operators on a 1000 dimensional Hilbert space, we find that a discretized version of the flow (\ref{flow1}) with a relatively large time step $\delta t \sim 0.4$ typically converges to $\Delta < 10^{-3}$ in $\sim 10$ time steps. In cases where the expectation values can be realized, Proposition \ref{generalstate} in section 6 explains how to describe the most general state with these expectation values starting from the state $\rho_{\vec{\beta}}$.

\subsubsection*{Uncertainty relations and positivity constraints}

In section 7, we provide some insight into how uncertainty relations constrain the allowed expectation values of operators. We recall how a very general class of uncertainty relations can be derived from the basic result that $\langle ({\cal O} - \langle {\cal O} \rangle)^2 \rangle$ is non-negative for any operator and any state. This positivity implies the positivity of the matrix
\begin{equation}
    M_{ij} = \tr(\rho {\cal O}_i {\cal O}_j) - \tr(\rho {\cal O}_i) \tr(\rho {\cal O}_j)
\end{equation}
which can be expressed as a matrix $M_{ij}(\vec{x})$ of quadratic polynomials in $x^i = \langle {\cal O} \rangle$ via the operator algebra. In theorem (\ref{uncert}), we show (Theorem 7) that for $\tilde{T}$ some basis of traceless operators, $\vec{x} \in E_{\tilde{T}}$ if and only if the matrix $M_{ij}(\vec{x})$ is non-negative. 

\subsubsection*{Related works}

There have been many works in the past relating to constraining the space of possible expectation values for a set of operators. In the physics literature, the problem that we consider here appears to have been discussed first in \cite{wichmann1963density}. In fact, many of the results we review here in sections 2 through 5 were already derived in that paper.

In operator theory, the set $E_S$ (or sometimes the set $\hat{E}_S$) is referred to as the ``joint numerical range'' of the operators in $S$. A recent work with extensive references to the mathematics literature is \cite{Plaumann_2021}. A recent overview of the physics literature with many references is the thesis \cite{Szymanski:2022sgn}. Detailed results characterizing $E_S$ for the special case of three and four dimensional Hilbert spaces, with some more general discussion may be found in \cite{Szyma_ski_2018, eltschka2021shape,bengtsson2012geometry,goyal2016geometry,sharma2021four, boya2008geometry}. States of the form (\ref{betastates}) have found many applications in past work, e.g. \cite{Swingle_2014, yunger2016microcanonical,enhanced_neg_energ,zeng2023maximum}. They are referred to as generalized Gibbs states or non-abelian thermal states \cite{yunger2016microcanonical}. A mathematical discussion of properties of these families of states may be found in \cite{Weis}. 

\section{Simple examples}

To develop intuition, we begin by discussing a few special cases. 

\subsection{Single operator}

For a single operator ${\cal O}$, $E_{\cal O}$ is the interval $[\lambda_{min}, \lambda_{max}]$ bounded by the minimum and maximum eigenvalues of ${\cal O}$. We can attain all of these expectation values by considering the one parameter family of states $\rho = p v_{min} v_{min}^\dagger + (1-p) v_{max} v_{max}^\dagger$ for $p \in [0,1]$, where $v_{min}$ and $v_{max}$ are eigenvectors corresponding to the minimum and maximum eigenvalues.  

For general $S = \{{\cal O}_i\}$ the expectation value of each operator is constrained by this result, so we have that $E_S \subset E_{{\cal O}_1} \times \cdots \times E_{{\cal O}_n}$.

\subsection{Two-dimensional Hilbert space}

For a two-dimensional Hilbert space, we can represent a general density operator explicitly in terms of the Pauli operators $\sigma_i$ (normalized to have eigenvalues $\pm 1$) as 
\begin{align*}
    \rho = {1 \over 2} (\identity + x_i \sigma_i) =  {1 \over 2}\begin{pmatrix}
        1+ x_3 & x_1-i x_2\\
        x_1 + i x_2 & 1 - x_3
    \end{pmatrix}\quad\text{with}\quad \vec{x} \in \mathbb{R}^3, |\vec{x}| \le 1.
\end{align*}
Here, we are using the standard matrix representation associated with the orthonormal basis of $\sigma_3$ eigenvectors, and the latter constraint comes from requiring that the eigenvalues of $\rho$ are non-negative.

Taking $S = \{ \sigma_1, \sigma_2, \sigma_3 \}$, we can use the fact that $\sigma_i$ are traceless and satisfy $\tr(\sigma_i \sigma_j) = 2 \delta_{ij}$, to obtain 
\begin{equation}
    (\langle \sigma_1 \rangle,\langle \sigma_2 \rangle,\langle \sigma_3 \rangle ) = (\tr(\rho \sigma_1),\tr(\rho \sigma_2),\tr(\rho \sigma_3)) = (x_1,x_2,x_3)
\end{equation} 
so we conclude that
\begin{equation}
    E_S = \{\vec{x}\subset \mathbb{R}^3|\vec{x} \le 1\} \; .
\end{equation} 
The pure states (with $\rho^2 = \rho$) correspond to $|\vec{x}|=1$, so $\hat{E}_S$ is the boundary of the ball. 

As another example, consider an arbitrary pair of Hermitian operators ${\cal O}_1 = c^{(1)}_0 \identity + c^{(1)}_i \sigma_i$ and ${\cal O}_2 = c^{(2)}_0 \identity + c^{(2)}_i \sigma_i$ where $c_i$ are real. For the general density matrix above, we have
\begin{equation}
(\langle {\cal O}_1 \rangle,\langle {\cal O}_1 \rangle) = (c_0^{(1)},c_0^{(2)}) + (x_i c^{(1)}_i,x_i c^{(2)}_i).
\end{equation}
This is the general affine map from $\mathbb{R}^3$ parameterized by $x_i$ to $\mathbb{R}^2$. Under such a map, image of the unit sphere and the unit ball are the same, and this image is always an ellipse plus its interior, or its degeneration to a line segment or point. These are then the most general possibilities for the geometry of $E_S$ and $\hat{E}_S$ when $S$ is a pair of Hermitian operators acting on a two-dimensional Hilbert space.

\subsection{Commuting operators}
\label{sec:polygon}

As another simple example, we consider a general set of mutually commuting Hermitian operators acting on a Hilbert space of general dimension $N$. We have
\begin{proposition}
\label{polyhedron}
    For $S$ a set of commuting operators $\{{\cal O}_1,...,{\cal O}_n\}$ acting on an $N$-dimensional Hilbert space, $E_S$ is the convex hull of the points $\vec{x}^{(i)} =  (\lambda^{(i)}_1, \dots, \lambda^{(i)}_n)$ where $\vec{\lambda}^{(i)}$ is the vector of $\vec{O}$ eigenvalues for the $i$th vector in a set of orthonormal simultaneous eigenvectors for the operators in $S$.
\end{proposition}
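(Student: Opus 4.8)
The plan is to exploit simultaneous diagonalizability. Since the operators in $S$ mutually commute, there is an orthonormal basis $\{|i\rangle\}_{i=1}^N$ of ${\cal H}$ of simultaneous eigenvectors, with ${\cal O}_k |i\rangle = \lambda^{(i)}_k |i\rangle$. I would first show the inclusion $E_S \subseteq \text{conv}\{\vec{x}^{(i)}\}$. Given any density matrix $\rho$, let $p_i = \langle i | \rho | i \rangle$ be its diagonal entries in this basis; since $\rho$ is non-negative with unit trace, the $p_i$ are non-negative and sum to $1$. Computing the expectation value of ${\cal O}_k$ directly gives $\tr(\rho {\cal O}_k) = \sum_i \langle i | \rho {\cal O}_k | i \rangle = \sum_i p_i \lambda^{(i)}_k$, where the off-diagonal entries of $\rho$ drop out precisely because $|i\rangle$ is an eigenvector of ${\cal O}_k$. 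Hence $\vec{E}(\rho) = \sum_i p_i \vec{x}^{(i)}$ is a convex combination of the points $\vec{x}^{(i)}$, establishing the first inclusion.

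For the reverse inclusion $\text{conv}\{\vec{x}^{(i)}\} \subseteq E_S$, I would simply realize each convex combination by an explicit state. Given weights $p_i \geq 0$ with $\sum_i p_i = 1$, take the diagonal (in fact classical) density matrix $\rho = \sum_i p_i |i\rangle\langle i|$, which is manifestly non-negative with unit trace. By the same computation as above, $\tr(\rho {\cal O}_k) = \sum_i p_i \lambda^{(i)}_k$, so every point of the convex hull is attained. Combining the two inclusions gives $E_S = \text{conv}\{\vec{x}^{(i)}\}$.

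The only subtlety worth flagging is the case of degenerate simultaneous eigenvalues: if several basis vectors share the same eigenvalue tuple $\vec\lambda$, the corresponding points $\vec{x}^{(i)}$ coincide, but this is harmless since a repeated point simply does not enlarge the convex hull, and the argument above never required the $\vec{x}^{(i)}$ to be distinct. I expect no real obstacle here; the main conceptual point is that commutativity lets us pass entirely to the diagonal data $(p_i, \vec{x}^{(i)})$, reducing the quantum problem to the classical statement that the set of expectation values of a random variable taking values $\vec{x}^{(i)}$ with probabilities $p_i$ is exactly the convex hull of the $\vec{x}^{(i)}$. This also makes transparent the claim stated earlier in the paper that $E_S$ is an $n$-dimensional convex polytope, since the convex hull of finitely many points in $\mathbb{R}^n$ is a polytope.
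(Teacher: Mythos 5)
Your proposal is correct and follows essentially the same route as the paper's own proof: both pass to the orthonormal simultaneous eigenbasis, observe that $\tr(\rho\,{\cal O}_k)=\sum_i \rho_{ii}\lambda^{(i)}_k$ depends only on the non-negative, unit-sum diagonal entries of $\rho$ in that basis, and realize arbitrary convex combinations with diagonal density matrices. No gaps; your remark about degenerate eigenvalue tuples is a harmless extra observation that the paper leaves implicit.
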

\begin{proof}
    Since the operators in $S$ commute, we can choose an orthonormal basis of vectors $v^{(i)}$ in ${\cal H}$ that are each eigenvectors for all operators in  $S$. Thus, we have ${\cal O}_k v^{(i)} = \lambda^{(i)}_k v^{(i)}$ for some $\lambda^{(i)}_k$s. We can represent a general density operator $\rho = \sum_{ij} \rho_{ij} v^{(i)} (v^{(i)})^\dagger$ where $\rho_{ij}$ is a non-negative unit trace matrix. For this density operator, we have that
    \begin{equation}
    \tr(\rho {\cal O}_k) = \sum_i \rho_{ii} \lambda^{(i)}_k \; .
    \end{equation}
    Since $\tr(\rho) = 1$, we must have  $\sum_i \rho_{ii} = 1$ and since $\rho$ is positive, we have $\rho_{ii} = (v^{(i)})^{\dagger} \rho v^{(i)} \ge 0$. Thus $\langle \vec{\cal O} \rangle$ is contained in the convex hull of the points $\vec{\lambda}^{(i)}$.

    Conversely, consider any point $\vec{x}$ in the convex hull of the points $\vec{\lambda}^{(i)}$. We can represent $\vec{x}$ as $\sum_i p_i \vec{\lambda}^{(i)}$ for non-negative $p_i$ with $\sum p_i = 1$. Choosing the state with $\rho_{ij}= p_i \delta_{ij}$, we see that $\langle \vec{O} \rangle = \vec{x}$. Thus any point in the convex hull of the points $\vec{\lambda}^{(i)}$ is in $E_S$. 
\end{proof}
For the case of two operators, the region $E_S$ is thus a convex polygon in $R^2$ where the number of vertices is less than or equal to the dimension $N$ of the Hilbert space. 

\section{Basic properties of $E_S$}

We now consider the general case. We would like to characterize $E_S$ for a set $S$ of $n$ Hermitian operators ${\cal O}_i$ acting on a Hilbert ${\cal H}$ space of dimension $N$. 

Let $\mathfrak{h}$ be the $N^2$-dimensional real vector space of Hermitian operators on ${\cal H}$ and $\mathfrak{h}_0$ be the subspace of traceless operators. Quantum states can be associated with particular operators in $\mathfrak{h}$: a general state corresponds to a unit-trace non-negative hermitian operator $\rho$ (the density operator for that state), while pure states correspond to states with $\rho^2 = \rho$ (or equivalently, states of the form $\rho = v v^\dagger$ for some vector $v \in {\cal H}$ with $v^\dagger v =1$). 

The full space of states has dimension $N^2-1$, while the space of pure states has dimension $2N-2$.\footnote{To see the latter, we recall that the $N$-dimensional complex vector $v$ is normalized and that the transformation $v \to e^{i \theta} v$ does not affect the state.} It will be useful to note the following:
\begin{proposition}
\label{convex}
    As a subset of $\mathfrak{h}$, the density operators form the convex hull of the set of pure states. 
\end{proposition}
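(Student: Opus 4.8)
The plan is to prove the two inclusions separately: that every pure state is a density operator (so the convex hull of pure states is contained in the set of density operators, using the fact established in Proposition~\ref{convex} that the density operators form a convex set), and conversely that every density operator lies in the convex hull of pure states. The first inclusion is essentially immediate, since a pure state $\rho = v v^\dagger$ with $v^\dagger v = 1$ is manifestly Hermitian, non-negative, and unit-trace, hence a density operator; and the density operators form a convex set because a convex combination of unit-trace non-negative Hermitian operators is again unit-trace, Hermitian, and non-negative. So the real content is the reverse inclusion.

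For the reverse inclusion, the key tool is the spectral theorem. Given any density operator $\rho$, I would diagonalize it as $\rho = \sum_i p_i v^{(i)} (v^{(i)})^\dagger$, where the $v^{(i)}$ form an orthonormal eigenbasis and the $p_i$ are the (real) eigenvalues. The non-negativity of $\rho$ forces each $p_i \ge 0$, and the unit-trace condition gives $\sum_i p_i = \tr(\rho) = 1$. Each rank-one projector $v^{(i)} (v^{(i)})^\dagger$ is a pure state, so this spectral decomposition exhibits $\rho$ explicitly as a convex combination of pure states. This shows every density operator is in the convex hull of pure states, completing the argument.

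The two inclusions together give that the set of density operators equals the convex hull of the pure states. I would remark that the spectral decomposition is not the unique way to write $\rho$ as a convex combination of pure states (this non-uniqueness reflects the freedom in preparing a mixed state as an ensemble of pure states), but for the purpose of this proposition a single such decomposition suffices. The step requiring the most care is simply the bookkeeping: verifying that the eigenvalues furnished by the spectral theorem genuinely serve as convex-combination weights, i.e. that non-negativity of the operator translates to non-negativity of each $p_i$ and that the trace condition normalizes them. There is no serious obstacle here; the result follows directly once the spectral theorem is invoked.
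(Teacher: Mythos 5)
Your proof is correct and takes essentially the same approach as the paper: the spectral theorem supplies the decomposition showing every density operator lies in the convex hull of pure states, and the reverse inclusion is a direct check that convex combinations of unit-trace non-negative Hermitian operators remain unit-trace, non-negative, and Hermitian. One small caution: your phrase ``using the fact established in Proposition~\ref{convex}'' is circular as written (that proposition is exactly what is being proved), but this is harmless since you immediately re-derive the convexity of the set of density operators from first principles.
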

\begin{proof}
Any density operator has an orthonormal set of eigenvectors $v_i$ with non-negative eigenvalues $p_i$ summing to one. In terms of these, the density operator can be represented as $\sum_i p_i v_i v_i^\dagger$ and thus lies in the convex hull of the pure states. Conversely, for any set of pure states $\rho_i = v_i v_i^\dagger$ and any set of non-negative $p_i$ summing to one, the state $\rho = \sum_i p_i v_i v_i^\dagger$ is a valid density operator, since $\tr(\rho) = \sum_i p_i v_i^\dagger v_i  = 1$ and for any $w \in {\cal H}$, $w^\dagger \rho w = \sum_i p_i |w^\dagger v_i |^2 \ge 0$.
\end{proof}

It will be convenient to choose a basis $T$ for $\mathfrak{h}$. We will consider $T_0 = \identity$ together with $N^2-1$ orthogonal traceless Hermitian matrices (collectively denoted as $\tilde{T}$), normalized so that 
\begin{equation}
\label{tnorm}
    \tr(T_a T_b) = N \delta_{ab} \; .
\end{equation}
A general Hermitian operator can be represented as
\begin{equation}
    {\cal O} = c_0 \identity + \sum_a c_a \tilde{T}_a
\end{equation}
while a general state can be represented as
\begin{equation}
\label{defrho}
    \rho = {1 \over N} (\identity + x_a \tilde{T}_a) \; .
\end{equation}
The coordinates $x_a = \tr(\rho \tilde{T}_a)$ have restrictions arising from the positivity of $\rho$. The allowed set is $E_{\tilde{T}} \subset \mathbb{R}^{N^2 - 1}$. We now have a useful general result: 
\begin{proposition}
    For a set $S = \{{\cal O}^{(i)} \}$ of $n$ Hermitian operators, the set $E_S \subset \mathbb{R}^n$ is the image of under some affine transformation of $E_{\tilde{T}}$ where $\tilde{T}$ is any set of orthogonal basis elements for $\mathfrak{h}_0$, normalized as in (\ref{tnorm}). Also, $\hat{E}_S$ is the image of $\hat{E}_{\tilde{T}}$. 
\end{proposition}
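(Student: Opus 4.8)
The plan is to exhibit the affine map explicitly by expanding the operators of $S$ in the chosen basis $\{\identity\} \cup \tilde{T}$ and tracing through the definition of expectation value. Since $T_0 = \identity$ together with the $\tilde{T}_a$ span all of $\mathfrak{h}$, each ${\cal O}^{(i)} \in S$ has a unique expansion ${\cal O}^{(i)} = c_0^{(i)} \identity + \sum_a c_a^{(i)} \tilde{T}_a$, where the orthogonality relation (\ref{tnorm}) recovers the coefficients as $c_a^{(i)} = \tr({\cal O}^{(i)} \tilde{T}_a)/N$ and $c_0^{(i)} = \tr({\cal O}^{(i)})/N$.

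First I would compute the expectation value of ${\cal O}^{(i)}$ in a general state $\rho = \tfrac{1}{N}(\identity + x_b \tilde{T}_b)$. Using $\tr(\rho) = 1$ and $x_a = \tr(\rho \tilde{T}_a)$, linearity of the trace gives $\tr(\rho\, {\cal O}^{(i)}) = c_0^{(i)} + \sum_a c_a^{(i)} x_a$. Collecting these into a single expression, the vector of expectation values is $\vec{e} = A\vec{x} + \vec{b}$, with $A_{ia} = c_a^{(i)}$ and $b_i = c_0^{(i)}$. This is the desired affine transformation, and it depends only on $S$ and the fixed choice of basis $\tilde{T}$, not on the state $\rho$.

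Next I would observe that the parametrization $\rho \leftrightarrow \vec{x}$ of (\ref{defrho}) is a bijection between density operators and the points of $E_{\tilde{T}}$: the map $x_a = \tr(\rho \tilde{T}_a)$ inverts to $\rho = \tfrac{1}{N}(\identity + x_a \tilde{T}_a)$ precisely because $\{\identity\} \cup \tilde{T}$ is a basis of $\mathfrak{h}$, and $E_{\tilde{T}}$ is by definition the image of the positive unit-trace operators under this parametrization. Hence, as $\rho$ ranges over all density operators $\vec{x}$ sweeps out $E_{\tilde{T}}$ and $\vec{e} = A\vec{x} + \vec{b}$ sweeps out exactly $E_S$, which proves that $E_S$ is the affine image of $E_{\tilde{T}}$. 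Restricting the same correspondence to the operators with $\rho^2 = \rho$, i.e.\ the pure states, the identical map sends $\hat{E}_{\tilde{T}}$ onto $\hat{E}_S$.

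There is no serious obstacle here: the statement is essentially a change-of-basis bookkeeping exercise, and its real content is that expectation values depend \emph{affinely} — not merely smoothly — on the density matrix, a direct consequence of linearity of the trace together with the fixed normalization $\tr(\rho)=1$. The one point that deserves a moment of care is verifying that $\rho \leftrightarrow \vec{x}$ is genuinely a bijection onto $E_{\tilde{T}}$, so that one and the same affine map is applied to the same underlying sets in both the $E_S$ and $\hat{E}_S$ cases; this follows immediately from $\{\identity\} \cup \tilde{T}$ being a basis of $\mathfrak{h}$.
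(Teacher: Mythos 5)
Your proposal is correct and follows essentially the same route as the paper's proof: expand each ${\cal O}^{(i)}$ in the basis $\{\identity\} \cup \tilde{T}$, use linearity of the trace to get $\tr(\rho\, {\cal O}^{(i)}) = c_0^{(i)} + \sum_a c_a^{(i)} x_a$, and note that restricting to pure states restricts the domain of the same affine map to $\hat{E}_{\tilde{T}}$. The only difference is that you spell out the coefficient-extraction formulas and the bijection $\rho \leftrightarrow \vec{x}$ explicitly, which the paper leaves implicit.
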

\begin{proof}
    Representing 
    \begin{equation}
         {\cal O}^{(i)} = c_0^{(i)} \identity + \sum_a c_a^{(i)} \tilde{T}_a 
    \end{equation}
    and a general state $\rho$ as in (\ref{defrho}) we have
    \begin{equation}
    \tr(\rho {\cal O}^{(i)}) = c^{(i)}_0 + x_a c^{(i)}_a.
    \end{equation}
    The coordinates $x_a$ for allowed states $\rho$ define the set $E_{\tilde{T}}$, so we see that $E_S$ is the image of this under the affine transformation defined by $c_a^{(i)}$. $\hat{E}_S$ is defined to be the subset of $E_S$ obtained by restricting to pure states; this corresponds to restricting the domain of the affine map to $\hat{E}_{\tilde{T}}$.
\end{proof}
We further note that collections of $n$ operators are in one-to-one correspondence with possible affine maps, so the possibilities for $E_S$ are exactly the possible images of $E_{\tilde{T}}$ under an affine map. It follows immediately from Proposition \ref{convex} and the previous result that
\begin{corollary}
The set $E_S$ is the convex hull of the set $\hat{E}_S$.
\end{corollary}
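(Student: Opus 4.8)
The plan is to reduce the statement to a single elementary fact, namely that an affine map commutes with the operation of taking convex hulls, and then to combine this with the two structural results already in hand. First I would record the lemma that for any affine map $A : V \to W$ of real vector spaces and any subset $X \subseteq V$ one has $A(\mathrm{conv}(X)) = \mathrm{conv}(A(X))$. Writing $A = L + b$ with $L$ linear, an affine map preserves combinations whose coefficients sum to one, since $A\!\left(\sum_i \lambda_i x_i\right) = \sum_i \lambda_i A(x_i)$ whenever $\sum_i \lambda_i = 1$; reading this identity in one direction sends a convex combination of points of $X$ to a convex combination of their images, and reading it in the other direction realizes every convex combination of images as the image of a convex combination, giving both inclusions at once.

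With this lemma the argument is a short composition. By Proposition \ref{convex}, the density operators $\mathcal{D} \subset \mathfrak{h}$ are exactly the convex hull of the pure states $\hat{\mathcal{D}}$. The map $\Phi : \mathfrak{h} \to \mathbb{R}^n$ given by $\Phi(\rho) = (\tr(\rho\,{\cal O}_1), \dots, \tr(\rho\,{\cal O}_n))$ is manifestly linear in $\rho$, hence affine, and by the Definition and the preceding proposition it sends $\mathcal{D}$ onto $E_S$ and $\hat{\mathcal{D}}$ onto $\hat{E}_S$ (equivalently, one factors $\Phi$ through the coordinate identification $\rho = \tfrac1N(\identity + x_a \tilde{T}_a)$ used to define $E_{\tilde{T}}$, so that $E_S$ and $\hat{E}_S$ appear as affine images of $E_{\tilde{T}}$ and $\hat{E}_{\tilde{T}}$). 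Applying the lemma to $\Phi$ then yields
\begin{equation}
E_S = \Phi(\mathcal{D}) = \Phi(\mathrm{conv}(\hat{\mathcal{D}})) = \mathrm{conv}(\Phi(\hat{\mathcal{D}})) = \mathrm{conv}(\hat{E}_S),
\end{equation}
which is precisely the claim.

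I do not expect any genuine obstacle here: the entire content is the interaction of convexity with linear and affine maps, and Proposition \ref{convex} has already done the substantive work of exhibiting $\mathcal{D}$ as a convex hull. The only point deserving a moment's care is to confirm that the \emph{same} affine map simultaneously carries $\mathcal{D}$ to $E_S$ and $\hat{\mathcal{D}}$ to $\hat{E}_S$ — that is, that $\hat{E}_S$ really is the image of the pure states rather than some larger or smaller set — but this is exactly what the preceding proposition records, so nothing new needs to be verified.
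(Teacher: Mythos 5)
Your proof is correct and matches the paper's (implicit) argument: the paper states the corollary follows immediately from Proposition \ref{convex} together with the affine-image characterization of $E_S$ and $\hat{E}_S$, which is exactly the reasoning you spell out, with the standard fact that affine maps commute with convex hulls supplying the glue. You have simply made explicit the elementary lemma the paper leaves unstated.
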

These results generalize our observations in the two-dimensional case that possible sets $\hat{E}_S$ ($E_S$)  are images of the Bloch sphere (ball) under affine transformations. In that case, the Bloch ball ($E_{\tilde{T}}$ for the Pauli operators) was simply the interior of the Bloch sphere $\hat{E}_{\tilde{T}}$. 

In higher dimensions, $E_{\tilde{T}}$ of dimension $N^2 - 1$ is the convex hull of the $(2N-2)$-dimensional $\hat{E}_{\tilde{T}}$ rather than the interior. 

In two-dimensions, $\hat{E}_{\tilde{T}}$ has a simple algebraic characterization as the unit sphere $\sum x_i^2 =1$ when choosing $\tilde{T}$ as the Pauli operators. In higher dimensions, we can still give an algebraic characterization of $\tilde{T}$. Since $\tr(\rho \tilde{T}_a) = x_a$, the desired characterization is a set of equations on $x_a$ equivalent to the condition that $\rho$ in (\ref{defrho}) represents a pure quantum state. We can convert this condition into a set of algebraic equations by inserting (\ref{defrho}) into one of the following equivalent conditions:
\begin{itemize}
\item
The condition that $\rho^2 = \rho$. This results in a set of quadratic equations in the coordinates $x_a$.
\item 
The condition that the characteristic polynomial for $\rho$ is $\lambda^{N-1}(\lambda -1)$. This results in a set of $N-1$ equations of degree $2,3, \dots, N$ in the coordinates $x_a$.
\item 
The condition that all $2 \times 2$ subdeterminants of $\rho$ vanish in the matrix representation associated with some basis of $\mathfrak{h}$. This results in a set of quadratic equations in the coordinates $x_a$. 
\end{itemize}

We provide a more explicit derivation of the resulting equations in Appendix A. A more comprehensive discussion of the algebraic characterization of $\hat{E}_{\tilde{T}}$ can be found in \cite{Plaumann_2021} and references therein.

Geometrically, $\hat{E}_{\tilde{T}}$ can be understood as the intersection of surfaces specified by the various equations. The resulting geometry is evidently quite complicated. According to Proposition \ref{polyhedron}, we can obtain an arbitrary polyhedron with $N$ vertices in $\mathbb{R}^n$ ($n \le N$) as the image of $\hat{E}_{\tilde{T}}$ under an affine map from $\mathbb{R}^N$ to $\mathbb{R}^n$. 

In summary, we have seen that $E_S$ is the image of an affine transformation from $E_{\tilde{T}}$, a higher-dimensional generalization of the Bloch ball. Further, $E_{\tilde{T}}$ is the convex hull of $\hat{E}_{\tilde{T}}$ which generalizes the Bloch sphere and can be specified by a set of algebraic equations. This characterization of $E_S$ is not particularly convenient from a calculational perspective, since it is difficult to solve the non-linear equations that determine $\hat{E}_{\tilde{T}}$. 

In the following sections, we provide a  characterization of $E_S$ that is more useful for calculations when the dimension of ${\cal H}$ is larger than two. For simplicity, we will restrict to the case where $S$ is a set of linearly independent traceless operators. For more general $S$, we can always express the operators as ${\cal O}_i = c^{(0)}_i \identity + c^{(a)}_i \hat{\cal O}_a$ where $\hat{\cal O}_a$ are traceless and linearly independent. Then $E_S$ for the more general case is the image of an affine map on $E_S$ for these independent traceless operators. With this restriction, we have
\begin{proposition}
\label{closedconvex}
    If $S$ is a set of $n$ linearly independent traceless Hermitian operators, $E_S$ is a compact, convex subset of $R^n$ of dimension $n$ containing the origin as an interior point.
\end{proposition}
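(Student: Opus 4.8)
The plan is to establish the four claimed properties—convexity, compactness, full dimension $n$, and the origin being an interior point—mostly by leveraging results already in hand, reserving the real work for the last two, which go together. Convexity is immediate: the preceding Corollary identifies $E_S$ as the convex hull of $\hat{E}_S$, and a convex hull is by definition convex. Equivalently, one checks directly that if $\vec{e}_1 = \tr(\rho_1 \vec{\cal O})$ and $\vec{e}_2 = \tr(\rho_2 \vec{\cal O})$, then $p\rho_1 + (1-p)\rho_2$ is a valid density operator realizing $p\vec{e}_1 + (1-p)\vec{e}_2$.

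For compactness, I would observe that the set of density operators is a compact subset of $\mathfrak{h}$: it is cut out by the closed conditions $\tr\rho = 1$ and $\rho \ge 0$, and it is bounded since every eigenvalue lies in $[0,1]$. The map $\rho \mapsto (\tr(\rho {\cal O}_1), \dots, \tr(\rho {\cal O}_n))$ is linear, hence continuous, so $E_S$ is the continuous image of a compact set and is therefore compact.

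The heart of the argument is to show that the origin is an interior point, which simultaneously forces $E_S$ to have dimension $n$ (a convex set containing an open ball has affine hull equal to all of $\mathbb{R}^n$). Because the ${\cal O}_i$ are traceless, the maximally mixed state $\rho_0 = \identity/N$ gives $\tr(\rho_0 {\cal O}_i) = 0$ for all $i$, so the origin lies in $E_S$. To probe a neighborhood, I would consider the family
\be
\rho_{\vec{s}} = {1 \over N}\left(\identity + \sum_i s_i {\cal O}_i\right),
\ee
which is Hermitian with unit trace for every $\vec{s}$. Since $\rho_{\vec{0}} = \identity/N$ is strictly positive and positive-definiteness is an open condition on Hermitian operators, $\rho_{\vec{s}}$ remains a valid density operator for all $\vec{s}$ in some neighborhood $U$ of the origin. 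A direct computation gives
\be
\tr(\rho_{\vec{s}}\, {\cal O}_j) = {1 \over N}\sum_i G_{ij}\, s_i, \qquad G_{ij} = \tr({\cal O}_i {\cal O}_j),
\ee
so the expectation-value map restricted to this family is the linear map $\vec{s} \mapsto G\vec{s}/N$.

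The key step—and the only place where linear independence is genuinely used—is that the Gram matrix $G$ of the linearly independent operators ${\cal O}_i$, taken with respect to the Hilbert–Schmidt inner product $\langle A,B\rangle = \tr(AB)$, is positive definite and hence invertible. Consequently $\vec{s} \mapsto G\vec{s}/N$ is a linear isomorphism of $\mathbb{R}^n$, so it carries the open neighborhood $U$ onto a neighborhood of the origin contained in $E_S$, establishing that the origin is interior and that $E_S$ is full dimensional. The main obstacle, such as it is, lies in cleanly justifying that $\rho_{\vec{s}}$ stays positive on an open set of $\vec{s}$—an openness/continuity argument—and in connecting linear independence to the nondegeneracy of $G$; the remaining assertions then follow from the earlier structural results with essentially no further computation.
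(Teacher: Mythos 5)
Your proposal is correct and follows essentially the same route as the paper's proof: convexity and compactness from the earlier structural results, and the interior-point/dimension claim via the family $\rho = \identity/N + \sum_i \epsilon_i {\cal O}_i$, openness of positivity at the maximally mixed state, and invertibility of the Gram matrix $\tr({\cal O}_i {\cal O}_j)$ guaranteed by linear independence. Your explicit remark that this matrix is positive definite for the Hilbert--Schmidt inner product is a slightly cleaner justification of the paper's ``full rank'' assertion, but the argument is the same.
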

\begin{proof}
    We have already shown that $E_S$ is convex. Since the map $\rho \to \tr({\cal O}_i \rho)$ and the space of allowed density operators is a compact set of $\mathfrak{h}$, $E_S$ is compact. 
    To show that $E_S$ has dimension $n$ and contains $0$ as an interior point, consider $\rho = \identity/N + \sum_i \epsilon_{i} {\cal O}_i$. This is Hermitian with unit trace and is positive for $\epsilon_i$ in a sufficiently small neighborhood $U$ of the origin in $\mathbb{R}^n$. We have that $\tr(\rho {\cal O}_i) = \sum_j \tr({\cal O}_i {\cal O}_j) \epsilon_j$. Since ${\cal O}_i$ are linearly independent, the matrix $M_{ij} = \tr({\cal O}_i {\cal O}_j)$ is full rank. The image of $U$ under the linear transformation $M$ thus has dimension $n$, contains 0  (the image of $\epsilon_i=0$) as an interior point, and is contained in $E_S$ by its definition. Thus, $E_S$ has dimension $n$ and contains the origin as an interior point.
\end{proof}

\section{The boundary of $E_S$ from ground states}

In this section, we provide a useful characterization of $E_S$ that provides insight into the geometry of its boundary. 

The characterization is based on the observation that for every operator \begin{equation}
    {\cal O}_{\hat{e}} = \sum_{{\cal O}_i \in S} \hat{e}_i \cdot {\cal O}_i
\end{equation}
with $\hat{e}$ a unit vector in $\mathbb{R}^n$, the expectation value of ${\cal O}_{\hat{e}}$ in any state is always larger than or equal to the least eigenvalue $\lambda_{min}(\hat{e})$ of ${\cal O}_{\hat{e}}$. This translates to the geometrical condition that $E_S$ is contained in the region $\hat{e} \cdot \vec{x} \ge \lambda_{min}(\hat{e})$. We will show that $E_S$ is the intersection of all such regions (originally proved in \cite{wichmann1963density}). 

Points on the boundary of $E_S$ are contained in hyperplanes $\hat{e} \cdot \vec{x} = \lambda_{min}(\hat{e})$; these points correspond to states that have minimum expectation value for some non-zero operator in $\text{span}(S)$, i.e. they correspond to quantum ground states for non-zero Hamiltonians in $\text{span}(S)$.

The main result is captured by Theorem \ref{boundary} stated in the introduction. We now prove this:
\begin{proof}
We begin by proving (2.).
   Consider a point $p \in \partial E_S$. Since $E_S$ is a convex region of $\mathbb{R}^n$, the Supporting Hyperplane Theorem ensures that 
     there is a ``supporting hyperplane'' $\hat{e} \cdot \vec{x} = y$ containing $p$ such that $E_S$ is entirely contained in the region $\hat{e} \cdot \vec{x} \ge y$. For any state $\rho_p$ with $\tr(\rho_p \vec{\cal O}) = \vec{x}_p$, we have $\tr(\rho_p {\cal O}_{\hat{e}}) = \hat{e} \cdot \vec{x}_p = y$ and for any other state, we have
    $\tr(\rho {\cal O}_{\hat{e}}) = \hat{e} \cdot \vec{x} \ge y$. Thus, $y$ is the minimum expectation value of ${\cal O}_{\hat{e}}$, equal to $\lambda_{min}({\cal O}_{\hat{e}})$. The hyperplane $\hat{e} \cdot \vec{x} = \lambda_{min}({\cal O}_{\hat{e}})$ is $\partial H_{\hat{e}}$, so $p \in \partial H_{\hat{e}} \cap E_S = B_{\hat{e}}$. This is a compact, convex subset of the hyperplane $\partial H_{\hat{e}}$ since it is the intersection of the closed, convex set $\partial H_{\hat{e}}$ with the compact, convex set $E_S$. 

   Since $B_{\hat{e}}$ is the set of allowed expectation values which also minimize $\tr(\rho \hat{e} \cdot \vec{\cal O})$, it is immediate that $\rho$ has expectation values in $B_{\hat{e}}$ if and only if $\rho \in G_{\hat{e}}$, giving (3.).
    
     Next, we show (1.). Let $I$ be the intersection in (\ref{first}). Then $E_S \subset I$ since for any state $\rho$ and any $\hat{e}$ the expectation value ${\cal O}_{\hat{e}}$ must be larger than the minimum eigenvalue of ${\cal O}_{\hat{e}}$. This gives $\hat{e} \cdot \vec{x} \ge \lambda_{min} ({\cal O}_{\hat{e}})$ where $x_i = \tr({\cal O}_i \rho)$ is the point in $E_S$ corresponding to $\rho$. Thus, $\vec{x} \in H_{\hat{e}}$ for all $\hat{e}$.

    Next, we show that the boundary of $E_S$ is contained in the boundary of $I$. If $\vec{x} \in \partial E_S$, we have shown that $\vec{x} \in B_{\hat{e}}$ for some $\hat{e}$ by (\ref{ESset1}), so $\vec{x} \in \partial H_{\hat{e}}$. Since $E_S \subset I$, we have that $\vec{x} \in I$. Since $\vec{x} \in \partial H_{\hat{e}}$ and $I$ is entirely contained in $H_{\hat{e}}$,  $\partial H_{\hat{e}}$ is a supporting hyperplane for $I$. Thus, $\vec{x}$ lies on the boundary of $I$.

    A topological argument now shows that $\partial E_S = \partial I$. According to Proposition \ref{closedconvex}, $E_S$ is a compact convex set of $\mathbb{R}^n$ with non-vanishing interior, so its boundary is homeomorphic to $S^{n-1}$.  The set $I$ is an intersection of closed convex sets, so is closed and convex. It is also bounded, since for any direction $\hat{e}$ and any point $\vec{x}$ in the intersection, we have $\vec{x} \in H_{\hat{e}} \cap H_{-\hat{e}}$ so $\lambda_{min}({\cal O}_{\hat{e}}) \le \hat{e} \cdot \vec{x} \le \lambda_{max}({\cal O}_{\hat{e}})$. Thus, $I$ is also a compact convex subset of $\mathbb{R}^n$. Since $I$ contains $E_S$, it also has a non-vanishing interior. It follows that the boundary of $I$ is also homeomorphic to $S^{n-1}$. Under the homeomorphism between $\partial I$ and $S^{n-1}$, $\partial E_S$ maps to some subset of $S^{n-1}$. No proper subset of $S^{n-1}$ is homeomorphic to $S^{n-1}$ (a corollary of the Borsuk-Ulam theorem), so it must be that $\partial E_S = \partial I$. The assertion (\ref{first}) now follows, since $E_S$ and $I$ are each equal to the convex hull of their boundary.

    It remains to show (4.). According to Theorem \ref{boundary}, $B_{\hat{e}}$ is the set of expectation values for the operators in $S$ when we restrict to the states in $G_{\hat{e}}$. If $\{v_a\}$ is an orthonormal basis for ${\cal H}_{\hat{e}}$, we can write the most general $\rho \in G_{\hat{e}}$ as $\rho = \sum_{a,b} \rho_{ab} v_a v_b^\dagger$ where $\rho_{ab}$ is a unit trace non-negative Hermitian matrix. Since $\pi_{\hat{e}} = \sum_a v_a v^\dagger_a$, we have that $\rho =  \pi_{\hat{e}} \rho \pi_{\hat{e}}$ in this case. Then for $\rho \in G_{\hat{e}}$,
   \begin{equation}
        \tr(\rho {\cal O}^i )  = \tr([\pi_{\hat{e}} \rho \pi_{\hat{e}}] {\cal O}^i ) = \tr( \rho [\pi_{\hat{e}} {\cal O}^i \pi_{\hat{e}}]).
    \end{equation}  
    Here, $\rho$ restricted to ${\cal H}_{\hat{e}}$ can be an arbitrary density operator on ${\cal H}_{\hat{e}}$, so the set $B_{\hat{e}}$ of such expectation values is exactly the set $E_{S_{\hat{e}}}$ for the operators in $S_{\hat{e}}$ acting on ${\cal H}_{\hat{e}}$.
\end{proof}

When ${\cal O}_{\hat{e}}$ has a non-degenerate minimum eigenvalue, $G_{\hat{e}}$ is a single state $\rho = v_{\hat{e}} v_{\hat{e}}^\dagger$ and $B_{\hat{e}}$ is a single point $\vec{x} = v_{\hat{e}}^\dagger \vec{\cal O} v_{\hat{e}}$. More generally, (4.) shows that the set $B_{\hat{e}}$ is simply the set $E_{S}$ for a reduced set of operators acting on a smaller Hilbert space, so all of the results of this paper characterizing $E_S$ can also be applied to characterize each hyperplanar face of $E_S$.

When ${\cal H}_{\hat{e}}$ has dimension $m$, the maximum dimension of $B_{\hat{e}}$ is $m^2 -1$, since this is the dimension of $G_{\hat{e}}$. The dimension of $B_{\hat{e}}$ is also bounded above by $n-1$, since $B_{\hat{e}}$ is contained in the hyperplane $\hat{e} \cdot x = \lambda_{min}({\cal O}_{\hat{e}})$. We can understand this from the fact that a linear combination $\sum_i\hat{e}_i (\pi_{\hat{e}} {\cal O}^i \pi_{\hat{e}})$ of the operators in $S_{\hat{e}}$ is proportional to the identity when acting on ${\cal H}_{\hat{e}}$, so we have at most $n-1$ linearly independent traceless operators in $span(S)$.

In the case where $B_{\hat{e}}$ is of maximum dimension $n-1$, it provides a hyperplanar face for $E_S$. The sides of the polyhedra in the example of Section \ref{sec:polygon} provide an example of this. When the dimension $2N-2$ of the space of pure states is smaller than $n-1$, the dimension of the boundary of $E_S$, only a measure 0 subset of $E_S$ can correspond to pure states, so in this case with $2N  \le n \le N^2 - 1$, generic points in $\partial E_S$ must lie in sets $B_{\hat{e}}$ corresponding to ${\cal O}_{\hat{e}}$ with degenerate eigenvalues.

\subsubsection*{Displaying $E_S$}

Theorem \ref{boundary}  gives a practical method to calculate and display $\partial E_S$ for various examples. For example, when $S$ is a pair of operators, we define $\hat{e}_\theta = (\cos(\theta), \sin(\theta))$, determine the ground state eigenvector(s) $v_\theta$ for $\hat{e}_\theta \cdot {\cal O}$ for (some discrete set of) $\theta$s  in $[0, 2 \pi]$ and calculate $\vec{x} = v_\theta^\dagger {\cal O} v_\theta$. 

As an explicit example, consider ${\cal H} = \mathbb{R}^3$ with $S = \{{\cal O}_1,{\cal O}_2\}$ and operators defined in $\ref{defops}$. Taking $\hat{e} = (\cos(\theta),\sin(\theta))$, we find that for $\theta = 0$ (i.e. $\hat{e} = (1,0)$, we have ${\cal O}_{\hat{e}} = \hat{e} \cdot {\cal O} = {\cal O}_1$ so there is a two-dimensional space of eigenvectors with minimum eigenvalue $-1$. The space $G_{\hat{e}}$ is represented by the set of density matrices of the form
\begin{equation}
    \rho =  {1 \over 2} \left( \begin{array}{ccc}
     (1 + z) & x - i y & 0 \cr
      x + i y & {1 \over 2}(1 - z) & 0 \cr
      0 & 0 & 2
    \end{array} \right)
\end{equation}
with $x^2 + y^2 + z^2 \le 1$. For these, we have $(\langle {\cal O}_1 \rangle,\langle {\cal O}_2 \rangle) = (-1,x)$ so $B_{\hat{e}}$ is the line segment $\{(-1,x) | |x| \le 1\}$. For all other $\hat{e}$, the minimum eigenvalue is non-degenerate with some normalized eigenvector $v_\theta$, and $B_{\hat{e}}$ is the single point corresponding to $\vec{x} = v_\theta^\dagger {\cal O} v_\theta$. These points trace out a smooth curve connecting $(-1,-1)$ with $(-1,1)$ as $\theta$ increases from $0$ to $2 \pi$, as shown in Figure \ref{fig:example}. The fact that $\lim_{\theta \to 0^+} B_{\hat{e}(\theta)} \ne \lim_{\theta \to 0^-} B_{\hat{e}(\theta)}$ is related to the familiar fact that the eigenvector with minimum eigenvalue can change discontinuously as we pass through a point where the minimum eigenvalue is degenerate. 

\section{The interior of $E_S$ from thermal states}
\label{sec:boundary}

In the previous section, we have understood that the boundary points of $E_S$ are realized by ground states of Hamiltonians $\hat{e} \cdot {\cal O}$ formed from linear combinations the operators in $S$.
In this section, we will see that there is actually an isomorphism between the $T>0$ thermal states of these Hamiltonians (reviewed below) and the interior of $E_S$. These thermal states for Hamiltonians in $span(S)$ are generally a much-restricted set of states compared to the full set of density operators (if $dim(S) \ll N^2-1$). This leads to an efficient way to determine whether a certain set $\vec{x}$ of expectation values is in $E_S$ and if so to construct a state that realizes these expectation values.

We begin with a few definitions.
\begin{definition}
    For a set $S = \{{\cal O}_i\}$ of $n$ Hermitian operators, we define the partition function $Z_S:\mathbb{R}^n \to \mathbb{R}$ by
    \begin{equation}
    Z_S(\vec{\beta}) = \tr e^{-\beta_i {\cal O}_i} \; ,
\end{equation}
the thermal state $\rho_{\vec{\beta}}$ by
\begin{equation}
\label{thermal}
    \rho_{\vec{\beta}} = {1 \over Z_S(\vec{\beta}) }e^{-\beta_i {\cal O}_i} \; ,
\end{equation}
and the thermal expectation value map $E_S:\mathbb{R}^n \to \mathbb{R}^n$ by 
\begin{equation}
\label{Ebeta}
E_i(\vec{\beta}) = \tr({\cal O}_i \rho_{\vec{\beta}}) =  -{\partial \over \partial \beta_i} \ln Z_S(\vec{\beta}).
\end{equation}
\end{definition}
To make contact with ordinary thermodynamics, 
these quantities are the thermal partition function, thermal state, and thermal expectation values for a quantum system with temperature $1/|\vec{\beta}|$ and Hamiltonian $H = \hat{e} \cdot \vec{\cal O}$ where $\hat{e} = \vec{\beta}/\beta$. The limits $\beta \to \infty$ of these states give some of the ground states of the previous section.\footnote{Specifically, they give the ground states with non-degenerate least eigenevalue and states $\rho = \sum_i v_i v^\dagger_i /m$ where $v_i$ is any orthonormal basis of the degenerate ground states otherwise.} 

A central result (proved originally in \cite{wichmann1963density}) is:\footnote{The slightly different version in the introduction follows immediately from this, since for a general set of linearly independent operators with $\identity \notin \text{span}(S)$, we can write ${\cal O}_i = {\cal O}_i' + c_i \identity$ with $\{{\cal O}_i'\}$ linearly independent and traceless. The map $\vec{\beta} \to \vec{E}(\vec{\beta})$ for $\{{\cal O}_i'\}$ is just the map $\vec{\beta} \to \vec{E}(\vec{\beta})$ for $\{{\cal O}_i\}$ composed with translations.}

\noindent
{\bf Theorem 2.} {\it
If $S = \{{\cal O}_i \}$  is a set of $n$ linearly independent traceless operators, the map $\vec{\beta} \to \vec{E}(\vec{\beta})$ 
is a diffeomorphism between $\mathbb{R}^n$ and the interior of $E_S$ with symmetric negative definite Jacobian $\partial E_i / \partial \beta_j = \partial_i \partial_j \ln Z(\vec{\beta})$ .
}

This map $\beta \to \vec{E}(\vec{\beta})$ provides a canonical set of coordinates $\vec{\beta}$ on the interior of $E_S$. The theorem has various immediate physical consequences. We have 
\begin{corollary}
    For $S = \{{\cal O}_i \}$ a set of $n$ linearly independent traceless operators and $\vec{E}$ a set of allowed expectation values in the interior of $E_S$ with $\vec{E} \ne 0$, there is a unique Hamiltonian $H = \hat{e} \cdot \vec{\cal O}$ ($|\hat{e}| = 1$)  and a temperature $T$ such that the thermal state of the quantum system with Hamiltonian $H$ and temperature $T$ has expectation values $\vec{E}$ for the operators in $S$.
\end{corollary}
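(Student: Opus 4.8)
The plan is to read the statement off directly from Theorem 2, together with a polar-coordinate decomposition of the parameter vector $\vec{\beta}$. The essential content—existence and uniqueness of a $\vec{\beta}$ realizing the prescribed expectation values—is already supplied by the theorem; the only real work is to repackage $\vec{\beta} \in \mathbb{R}^n$ as a pair $(H,T) = (\hat{e}\cdot\vec{\cal O},\, 1/\beta)$ and to check that the hypothesis $\vec{E} \neq 0$ is exactly what makes this repackaging well defined and unique.

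First I would invoke Theorem 2: since $\vec{\beta} \mapsto \vec{E}(\vec{\beta})$ is a diffeomorphism from $\mathbb{R}^n$ onto the interior of $E_S$, and $\vec{E}$ lies in that interior, there is a unique $\vec{\beta} \in \mathbb{R}^n$ with $\vec{E}(\vec{\beta}) = \vec{E}$. Next I would locate the preimage of the origin. Because each ${\cal O}_i$ is traceless, $\rho_{\vec{0}} = \identity/N$, so $E_i(\vec{0}) = \tr({\cal O}_i)/N = 0$; that is, the interior point $\vec{0}$ is hit precisely by $\vec{\beta} = \vec{0}$, the infinite-temperature state. The assumption $\vec{E} \neq 0$ therefore forces $\vec{\beta} \neq \vec{0}$ by injectivity, which is exactly the condition permitting a polar decomposition.

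Then I would set $\beta = |\vec{\beta}| > 0$ and $\hat{e} = \vec{\beta}/|\vec{\beta}|$, so that $\sum_i \beta_i {\cal O}_i = \beta\, \hat{e}\cdot\vec{\cal O}$ and hence $\rho_{\vec{\beta}} = e^{-\beta H}/\tr(e^{-\beta H})$ with $H = \hat{e}\cdot\vec{\cal O}$ and $T = 1/\beta$. This is the desired thermal state. For uniqueness, I would observe that the map $(\hat{e},\beta) \mapsto \beta\hat{e}$, restricted to $|\hat{e}| = 1$ and $\beta > 0$, is a bijection onto $\mathbb{R}^n \setminus \{\vec{0}\}$; combined with the uniqueness of $\vec{\beta}$ from Theorem 2 and the restriction to positive temperature, this pins down $(\hat{e},\beta)$, and hence $(H,T)$, uniquely.

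There is no genuine obstacle here: the statement is an immediate corollary of the diffeomorphism property. The only points requiring care are bookkeeping ones—confirming that tracelessness sends $\vec{\beta} = \vec{0}$ to the origin, so that $\vec{E} \neq 0$ excludes the single degenerate case, and fixing the temperature convention $T > 0$ so that the splitting $\vec{\beta} = \beta\hat{e}$ is unambiguous rather than identifying $(\hat{e},\beta)$ with $(-\hat{e},-\beta)$.
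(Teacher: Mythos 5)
Your proof is correct and follows essentially the same route as the paper: invoke Theorem 2 for the unique $\vec{\beta}$, then pass to polar coordinates $(T,\hat{e}) = (1/|\vec{\beta}|,\, \vec{\beta}/|\vec{\beta}|)$. In fact you are slightly more careful than the paper, which states the bijection $\vec{\beta} \leftrightarrow (T,\hat{e})$ without spelling out why $\vec{E} \neq 0$ guarantees $\vec{\beta} \neq \vec{0}$ (your observation that tracelessness gives $\vec{E}(\vec{0}) = 0$, so injectivity excludes the degenerate case).
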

\begin{proof}
    Since the map (\ref{Ebeta}) is a diffeomorphism, it is one-to-one and onto. Thus, we have a bijection between $\vec{E} \in int(E_S)$ and vectors $\vec{\beta} \in \mathbb{R}^n$. We also have a bijection between $\vec{\beta}$ and pairs $(T,\hat{e})$ via $T = 1/|\vec{\beta}|$ and $\hat{e} = \vec{\beta}/ |\vec{\beta}|$.
\end{proof}
The positivity of the Jacobian implies that for Gibbs states, energy always increases with temperature, and the expectation value of an operator ${\cal O}$ always increases with the chemical potential for that operator:
\begin{corollary}
    In any quantum system with Hamiltonian $H$ and some Hermitian operator ${\cal O}$, the Gibbs state  $\rho = {\cal N} \exp(-\beta H + \mu {\cal O})$ satisfies 
    \begin{equation}
   {d \langle {\cal O} \rangle \over d \mu} > 0 \qquad \qquad -{d \langle H \rangle \over d \beta} > 0.
    \end{equation}
\end{corollary}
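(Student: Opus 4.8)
The plan is to read both inequalities off Theorem~2, which guarantees that the Jacobian $J_{ij} = \partial E_i/\partial\beta_j$ is symmetric and negative definite. The only elementary fact I need about such a matrix is that its diagonal entries are strictly negative: for each $i$, $J_{ii} = \hat{e}_i^{\,T} J \hat{e}_i < 0$ where $\hat{e}_i$ is a coordinate unit vector. Everything else is matching conventions and sign bookkeeping.

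First I would match the Gibbs state to the thermal states of (\ref{thermal}). Set ${\cal O}_1 = H$ and ${\cal O}_2 = {\cal O}$, and identify $\beta_1 = \beta$, $\beta_2 = -\mu$, so that $-\beta_1{\cal O}_1 - \beta_2{\cal O}_2$ reproduces the exponent $-\beta H + \mu{\cal O}$ and $\rho$ becomes $\rho_{\vec\beta}$ for $S = \{H,{\cal O}\}$. With $E_1 = \langle H\rangle$ and $E_2 = \langle{\cal O}\rangle$, the two quantities in the corollary are diagonal entries of $J$ up to the sign introduced by $\beta_2 = -\mu$: at fixed $\mu$, $-d\langle H\rangle/d\beta = -\partial E_1/\partial\beta_1 = -J_{11}$, while at fixed $\beta$, since $\partial/\partial\beta_2 = -\partial/\partial\mu$, one has $d\langle{\cal O}\rangle/d\mu = -\partial E_2/\partial\beta_2 = -J_{22}$.

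To invoke Theorem~2 I must reduce to linearly independent traceless operators. Writing $H = H_0 + c_H\identity$ and ${\cal O} = {\cal O}_0 + c_{\cal O}\identity$ with $H_0,{\cal O}_0$ traceless, the identity pieces commute through the exponential and contribute only a factor $e^{-\beta_1 c_H - \beta_2 c_{\cal O}}$ to $Z_S$; hence $\ln Z_S$ differs from its traceless counterpart by a term linear in $\vec\beta$, leaving the Hessian, and therefore $J$, unchanged. Provided $\{H_0,{\cal O}_0\}$ are linearly independent, Theorem~2 applies and gives $J_{11}<0$ and $J_{22}<0$; combined with the signs above, this yields $-d\langle H\rangle/d\beta = -J_{11} > 0$ and $d\langle{\cal O}\rangle/d\mu = -J_{22} > 0$.

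I expect the only real subtlety to be the degenerate cases falling outside the two-operator hypothesis of Theorem~2, together with keeping signs straight through $\beta_2 = -\mu$. Each inequality in fact needs only the single-operator ($n=1$) specialization: the first is the statement that $\mu \mapsto \ln\tr e^{-\beta H + \mu{\cal O}}$ has strictly positive second derivative, whose value is exactly $\partial\langle{\cal O}\rangle/\partial\mu$ (the Kubo--Mori variance of ${\cal O}$ in $\rho$), and symmetrically $\beta \mapsto \ln\tr e^{-\beta H + \mu{\cal O}}$ gives the second. Since this diagonal second derivative is strictly positive whenever the relevant operator is not a multiple of $\identity$, independently of the off-diagonal structure, this route is robust: it requires neither linear independence of $H$ and ${\cal O}$ nor the full negative-definiteness of $J$, and so also covers cases such as ${\cal O}\propto H$. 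Thus the corollary holds as long as $H$ and ${\cal O}$ are each non-trivial observables, and the proof is essentially a direct reading-off of diagonal entries from the negative-definite Jacobian of Theorem~2.
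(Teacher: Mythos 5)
Your proof is correct and follows essentially the same route as the paper: the paper likewise derives both inequalities from the negative definiteness of the Jacobian $\partial E_i/\partial \beta_j$ guaranteed by Theorem 2, evaluating the quadratic form $-\vec{c}^{\,T} J \vec{c} > 0$ along coordinate directions, which is exactly your diagonal-entry argument after the same sign bookkeeping with $\beta_2 = -\mu$. Your extra care---the reduction to traceless parts (noting the Hessian of $\ln Z$ is unchanged by identity components) and the degenerate case ${\cal O} \propto H$ handled by strict convexity along a single non-identity direction---addresses hypotheses that the paper's one-line proof silently assumes in applying Theorem 2 to $\{H,{\cal O}\}$.
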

\begin{proof}
    The two assertions are special cases of the more general observation that 
    \begin{equation}
        {d \over d \epsilon} \tr(\sum_j c_j {\cal O}_j \rho_{\vec{\beta} - \epsilon \vec{c}})|_{\epsilon=0} = -\sum_{ij} c_j \partial_i E_i c_j >0
    \end{equation}
    which follows since $-\partial_i E_j$ is positive definite. 
\end{proof}
The result implies a positivity condition on the connected thermal two-point function for arbitrary operators:
\begin{corollary}
    In a quantum system with Hamiltonian $H$ and some Hermitian operator ${\cal O}$, define $\langle {\cal O} \rangle_\beta = \tr(\rho_\beta {\cal O})$, $\Delta {\cal O} = {\cal O} - \langle {\cal O} \rangle_\beta$ where $\rho_\beta$ is the thermal state $e^{- \beta H}/\tr(e^{-\beta H})$, and ${\cal O}(\tau) = e^{H \tau} {\cal O} e^{-H \tau}$. Then 
    \begin{equation}
    \int_0^\beta d \tau \langle \Delta {\cal O}(\tau ) \Delta {\cal O}(0) \rangle_{\beta} > 0.
    \end{equation}
\end{corollary}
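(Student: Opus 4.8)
The plan is to reduce this positivity statement to the previous corollary by recognizing the integrated two-point function as the Kubo--Mori (canonical correlation) susceptibility, i.e.\ as $\beta$ times the derivative $d\langle {\cal O}\rangle/d\mu$ at $\mu = 0$ for the one-parameter Gibbs family $\rho(\mu) \propto e^{-\beta H + \mu {\cal O}}$ appearing in that corollary. The first step is to apply the standard integral (Duhamel) formula for differentiating a matrix exponential: with the non-commuting pieces $-\beta H$ and $\mu {\cal O}$,
\be
\frac{d}{d\mu}\, e^{-\beta H + \mu {\cal O}}\Big|_{\mu=0} = \int_0^1 ds\, e^{-s\beta H}\, {\cal O}\, e^{-(1-s)\beta H},
\ee
which after the substitution $\tau = s\beta$ becomes $\tfrac{1}{\beta}\int_0^\beta d\tau\, e^{-\tau H}{\cal O}\,e^{-(\beta-\tau)H}$.

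Next I would compute $d\langle {\cal O}\rangle/d\mu$ at $\mu=0$ directly from $\langle {\cal O}\rangle_\mu = \tr({\cal O} e^{-\beta H+\mu{\cal O}})/Z(\mu)$ by the quotient rule. The derivative of $Z$ contributes, via cyclicity of the trace, a term $-\langle {\cal O}\rangle_\beta^2$, while the derivative of the numerator contributes $\tfrac{1}{\beta}\int_0^\beta d\tau\, \tr\!\big({\cal O}\, e^{-\tau H}{\cal O}\, e^{-(\beta-\tau)H}\big)/Z$. The key bookkeeping step is to match this integrand against the definitions in the statement: by cyclicity, $\tr({\cal O} e^{-\tau H}{\cal O} e^{-(\beta-\tau)H}) = \tr(e^{-(\beta-\tau)H}{\cal O} e^{-\tau H}{\cal O})$, which is exactly $Z\langle {\cal O}(\tau){\cal O}(0)\rangle_\beta$ for ${\cal O}(\tau) = e^{H\tau}{\cal O} e^{-H\tau}$. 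Using in addition that $\langle {\cal O}(\tau)\rangle_\beta = \langle {\cal O}\rangle_\beta$ (again by cyclicity, since $e^{-\beta H}$ commutes with $e^{H\tau}$), the two contributions assemble into
\be
\frac{d\langle {\cal O}\rangle}{d\mu}\Big|_{\mu=0} = \frac{1}{\beta}\int_0^\beta d\tau\, \Big(\langle {\cal O}(\tau){\cal O}(0)\rangle_\beta - \langle {\cal O}\rangle_\beta^2\Big) = \frac{1}{\beta}\int_0^\beta d\tau\, \langle \Delta{\cal O}(\tau)\Delta{\cal O}(0)\rangle_\beta,
\ee
where the last equality uses $\Delta{\cal O}(\tau) = {\cal O}(\tau) - \langle {\cal O}\rangle_\beta$.

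The final step is then immediate: since $\beta > 0$ and the previous corollary gives $d\langle {\cal O}\rangle/d\mu > 0$, the claimed inequality follows. The only genuinely delicate point is the bookkeeping in the middle step — handling the non-commutativity of $H$ and ${\cal O}$ through the Duhamel formula, and then applying trace cyclicity repeatedly to identify the integral representation with the defined objects $\langle {\cal O}(\tau){\cal O}(0)\rangle_\beta$ and $\langle {\cal O}\rangle_\beta$. Everything downstream is a one-line consequence of the already-established positivity of $-\partial_i E_j$ specialized to a single operator playing the role of a chemical potential.
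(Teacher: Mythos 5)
Your proof is correct and takes essentially the same route as the paper: the paper likewise specializes the negative-definiteness of $\partial E_i/\partial\beta_j$ to $S=\{H,{\cal O}\}$ at $\vec{\beta}=(\beta,0)$ (i.e.\ the chemical-potential derivative $d\langle{\cal O}\rangle/d\mu>0$), applies the Duhamel formula $\frac{d}{d\epsilon}e^{A+\epsilon B}=\int_0^1 ds\, e^{(1-s)A}Be^{sA}$, and identifies the result with $\frac{1}{\beta}\int_0^\beta d\tau\,\langle\Delta{\cal O}(\tau)\Delta{\cal O}(0)\rangle_\beta$ via trace cyclicity. Your bookkeeping with the substitution $\tau=s\beta$ and the identity $\langle{\cal O}(\tau)\rangle_\beta=\langle{\cal O}\rangle_\beta$ matches the paper's final step exactly.
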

\begin{proof}
    Consider $S$ with ${\cal O}_1 = H$, and ${\cal O}_2 = {\cal O}$ with $\vec{\beta} = (\beta,0)$. Let $\vec{v} = (0,1)$. Then the positivity of $-\partial_i E_j$ at $\vec{\beta}$ implies
    \begin{eqnarray*}
        0 &<& - v_i \partial_i E_j v_j \cr
        &=& \left. -\partial_\epsilon { \tr({\cal O} e^{- \beta H + \epsilon {\cal O}}) \over  \tr(e^{- \beta H + \epsilon {\cal O}})} \right|_{\epsilon = 0} \cr
        &=& -{ \int_0^1 ds \tr({\cal O} e^{- (1-s) \beta H} {\cal O} e^{- s \beta H}) \over \tr(e^{-  \beta H})} + { \tr^2({\cal O}  e^{- \beta H}) \over \tr^2(e^{-  \beta H})} 
    \end{eqnarray*}
    where we have used ${d \over d \epsilon} e^{A + \epsilon B} = \int_0^1 ds e^{(1-s)A} B e^{s A}$. Using the definitions, we find that $\beta$ times the final line is $\int_0^\beta d \tau \langle \Delta {\cal O}(\tau ) \Delta {\cal O}(0) \rangle_{\beta}$. 
\end{proof}
Finally, we have a concavity result for the logarithm of the partition function:
\begin{corollary}
    For any set $S$ of linearly independent traceless operators, the function $\ln Z(\vec{\beta}) = \ln \tr(e^{-\beta_i {\cal O}_i})$ is concave.
\end{corollary}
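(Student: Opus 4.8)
The plan is to reduce the statement to a sign condition on the Hessian of $\ln Z$ and then read that sign off from Theorem \ref{expthm}. Recall that a $C^2$ function on a convex domain is concave if and only if its Hessian $\partial_i \partial_j \ln Z(\vec{\beta})$ is negative semidefinite at every point; since the domain here is all of $\mathbb{R}^n$, which is convex, the entire content of the corollary is the negative semidefiniteness of this Hessian. This is strictly weaker than the full diffeomorphism claim, and is in fact one of the assertions already packaged inside Theorem \ref{expthm}, so the hypotheses (that $S$ be a set of linearly independent traceless operators) match exactly.

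The quickest route is simply to invoke Theorem \ref{expthm}: it identifies the Hessian $\partial_i\partial_j \ln Z(\vec{\beta})$ with the Jacobian $\partial E_i/\partial \beta_j$ of the expectation-value map and states that this matrix is symmetric and negative definite for every $\vec{\beta}\in\mathbb{R}^n$. Negative definiteness is in particular negative semidefiniteness, so concavity follows in one line. In this route there is essentially no obstacle at all — the corollary is an immediate consequence of a theorem already proved.

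For a self-contained derivation that does not merely cite Theorem \ref{expthm}, the actual work is to compute the Hessian and pin down its sign. I would differentiate $E_i = \tr({\cal O}_i \rho_{\vec{\beta}})$ once more in $\beta_j$, using the identity $\tfrac{d}{d\epsilon} e^{A+\epsilon B} = \int_0^1 ds\, e^{(1-s)A} B\, e^{sA}$ already employed in the proof of the preceding corollary, to obtain
\begin{equation}
\partial E_i/\partial \beta_j = -C_{ij}, \qquad C_{ij} = \int_0^1 ds\, \tr\!\left( \Delta{\cal O}_i\, \rho_{\vec{\beta}}^{\,s}\, \Delta{\cal O}_j\, \rho_{\vec{\beta}}^{\,1-s} \right), \qquad \Delta{\cal O}_i = {\cal O}_i - E_i\,\identity .
\end{equation}
Combined with the identification of this Jacobian with the Hessian $\partial_i\partial_j\ln Z$ from Theorem \ref{expthm}, concavity reduces to showing that the Kubo--Mori correlation matrix $C_{ij}$ is positive (semi)definite.

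The main obstacle, in this self-contained route, is precisely the positivity of $C_{ij}$. I would establish it by contracting with an arbitrary $\vec{c}$ and setting $\Delta{\cal O}_c = \sum_i c_i \Delta{\cal O}_i$, a Hermitian operator, so that $\sum_{ij} c_i C_{ij} c_j = \int_0^1 ds\, \tr(\Delta{\cal O}_c\, \rho_{\vec{\beta}}^{\,s}\, \Delta{\cal O}_c\, \rho_{\vec{\beta}}^{\,1-s})$. Diagonalizing $\rho_{\vec{\beta}}$ with eigenvalues $p_a>0$ (strictly positive since $\rho_{\vec{\beta}}=e^{-\sum_k\beta_k{\cal O}_k}/Z$ is full rank for finite $\vec{\beta}$), each integrand becomes $\sum_{a,b} |(\Delta{\cal O}_c)_{ab}|^2\, p_a^{\,s} p_b^{\,1-s} \ge 0$, which is a Cauchy--Schwarz/Gram-type non-negativity. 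Strict positivity for $\vec{c}\neq 0$ follows because linear independence and tracelessness of the ${\cal O}_i$ force $\Delta{\cal O}_c\neq 0$ (a nonzero traceless combination cannot be a multiple of $\identity$). Hence $C$ is positive definite, the Hessian is negative definite via Theorem \ref{expthm}, and $\ln Z$ is concave.
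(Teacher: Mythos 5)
Your proposal follows the paper's own one-line strategy (cite Theorem \ref{expthm} to get the sign of the Hessian), and your self-contained Kubo--Mori computation is correct as far as it goes: the Jacobian is indeed $\partial E_i/\partial\beta_j = -C_{ij}$ with $C_{ij}=\int_0^1 ds\,\tr\!\left(\Delta{\cal O}_i\,\rho_{\vec\beta}^{\,s}\,\Delta{\cal O}_j\,\rho_{\vec\beta}^{\,1-s}\right)$ positive definite, by exactly the diagonalization and linear-independence argument you give. The genuine gap is the step identifying the Hessian of $\ln Z$ with this Jacobian: it is off by a sign. From the paper's own relation (\ref{Ebeta}), $E_i(\vec\beta) = -\partial_i \ln Z(\vec\beta)$, so differentiating once more gives
\begin{equation}
\partial_i\partial_j \ln Z(\vec\beta) \;=\; -\,\partial_j E_i(\vec\beta) \;=\; +\,C_{ij} \;\succ\; 0 ,
\end{equation}
i.e.\ your own formulas, combined consistently, show that the Hessian is \emph{positive} definite and hence that $\ln Z$ is strictly \emph{convex}, not concave. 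A one-variable sanity check makes this unambiguous: for a single traceless $H$, $\frac{d^2}{d\beta^2}\ln\tr e^{-\beta H} = \langle H^2\rangle_\beta - \langle H\rangle_\beta^2 = \mathrm{Var}_\beta(H) > 0$; more generally $A \mapsto \ln \tr e^A$ is convex and $\vec\beta \mapsto -\beta_i {\cal O}_i$ is linear, so the composition is convex.

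In fairness, the sign slip is inherited from the text itself: the displayed identity $\partial E_i/\partial\beta_j = \partial_i\partial_j \ln Z$ inside Theorem \ref{expthm} is inconsistent with (\ref{Ebeta}), and the corollary (with the paper's own proof, which writes the Hessian correctly as $-\partial_i E_j$ but then calls it negative definite even though Theorem \ref{expthm} asserts that $\partial_i E_j$ itself is the negative definite matrix) should conclude ``convex'' rather than ``concave.'' So you reproduced the paper's route faithfully, but neither your argument nor the paper's can establish concavity, because the statement as written is false. The salvageable and correct content of your work is precisely the chain: $C_{ij}\succ 0$ (Kubo--Mori positivity) $\Rightarrow$ Jacobian $\partial E_i/\partial\beta_j = -C_{ij} \prec 0$ (Theorem \ref{expthm}'s sign claim) $\Rightarrow$ Hessian $\partial_i\partial_j\ln Z = +C_{ij}\succ 0$ $\Rightarrow$ $\ln Z$ convex. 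Differentiating $E_i = -\partial_i\ln Z$ once more as a consistency check would have surfaced the sign immediately.
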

\begin{proof}
    We have that $\partial_i \ln Z = -E_i (\vec{\beta})$ so the Hessian $\partial_i \partial_j f = -\partial_i E_j$ is negative definite.
\end{proof}

We now proceed to the proof of Theorem \ref{expthm}.

\subsubsection*{Physical motivation}

We begin with the following physical motivation for the result \cite{wichmann1963density}:
\begin{lemma}
    If there is a state $\rho$ with expectation values $\vec{E}$ for the operators in $S$, then there is a state $\rho_{\vec{E}}$ of maximum von Neumann entropy $S_{vN} = -\tr(\rho \log \rho)$ with the same expectation values.
\end{lemma}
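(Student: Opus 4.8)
The plan is to prove existence of a maximizer by a standard compactness argument, since the lemma only asserts that a maximum-entropy state \emph{exists} (its explicit form will be pinned down later). First I would introduce the constraint set
\[
\mathcal{C}_{\vec{E}} = \{\, \rho : \rho \ge 0,\ \tr(\rho) = 1,\ \tr(\rho\, {\cal O}_i) = E_i \ \forall i \,\},
\]
the set of all density operators realizing the prescribed expectation values. By the hypothesis of the lemma this set is nonempty. I would then argue that $\mathcal{C}_{\vec{E}}$ is a compact subset of the real vector space $\mathfrak{h}$ of Hermitian operators: the full set of density operators is closed and bounded, hence compact, in $\mathfrak{h}$, and each constraint $\tr(\rho\, {\cal O}_i) = E_i$ is a closed affine condition because $\rho \mapsto \tr(\rho\, {\cal O}_i)$ is a continuous linear functional. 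Thus $\mathcal{C}_{\vec{E}}$ is the intersection of a compact set with finitely many closed sets, and is therefore itself compact.

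Next I would establish that the von Neumann entropy $S_{vN}(\rho) = -\tr(\rho \log \rho)$ is a continuous real-valued function on the set of density operators. Writing $S_{vN}(\rho) = \sum_k g(\lambda_k)$, where the $\lambda_k$ are the eigenvalues of $\rho$ and $g(x) = -x \log x$, continuity follows from two facts: the eigenvalues of a Hermitian matrix depend continuously on its entries, and $g$ extends to a continuous function on the closed interval $[0,1]$ under the standard convention $g(0) = 0$. Hence $S_{vN}$ is continuous on all of $\mathcal{C}_{\vec{E}}$.

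The conclusion then follows from the extreme value theorem: a continuous real-valued function on a nonempty compact set attains its maximum. Applying this to $S_{vN}$ restricted to $\mathcal{C}_{\vec{E}}$ produces a state $\rho_{\vec{E}} \in \mathcal{C}_{\vec{E}}$ of maximal entropy, which by construction has the prescribed expectation values.

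I expect the only point requiring genuine care to be the continuity of $S_{vN}$ at the boundary of the state space, where some eigenvalues vanish; this is handled entirely by the removable singularity of $x \log x$ at $x = 0$, so it is a minor technical check rather than a real obstacle. It is worth emphasizing that the substantive work of the section lies not in this existence lemma but in identifying the maximizer explicitly — the subsequent variational (Lagrange-multiplier) analysis will show that $\rho_{\vec{E}}$ must take the thermal form $\rho_{\vec{\beta}}$ of (\ref{thermal}) — whereas the statement proved here is a soft consequence of compactness and continuity alone.
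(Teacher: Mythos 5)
Your proposal is correct and follows essentially the same route as the paper: you show the set of states realizing $\vec{E}$ is nonempty and compact (closed constraints intersected with the compact set of density operators) and that $S_{vN}$ is continuous, then invoke the extreme value theorem. Your additional detail on the continuity of $-\tr(\rho\log\rho)$ via eigenvalue continuity and the removable singularity of $x\log x$ at $x=0$ is a fine elaboration of a step the paper simply asserts.
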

\begin{proof}
By our assumption, there is at least one state with expectation values $\vec{E}$. The set of all such states is compact since it is closed (if $\rho$ is the limit of a sequence of operators with expectation values $\vec{E}$, then the expectation values for $\rho$ are also $\vec{E}$ by the continuity of the map $\rho \to \tr(\rho \vec{O})$) and a subset of the compact set of density operators. The von Neumann entropy is a continuous function on this compact set, so it must have a maximum. 
\end{proof}
The connection to the theorem above comes from the following lemma:
\begin{lemma}
Suppose $\rho$ lies in the interior of the space of density operators and maximizes $S_{vN} = -\tr(\rho \log \rho)$ subject to the constraints $\tr(\rho {\cal O}_i) = E_i$ for ${\cal O}_i \in S$ together with the required $\tr(\rho) = 1$. Then 
\begin{equation} 
\rho = {e^{-\beta_i {\cal O}_i} \over \tr(e^{-\beta_i {\cal O}_i})}
\end{equation}
for some $\vec{\beta} \in \mathbb{R}^n$.
\end{lemma}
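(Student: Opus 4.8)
The plan is to treat the statement as a constrained optimization problem and apply the method of Lagrange multipliers, using the hypothesis that $\rho$ lies in the \emph{interior} of the space of density operators in an essential way. The constraints are the $n$ linear conditions $\tr(\rho {\cal O}_i) = E_i$ together with $\tr(\rho) = 1$; the positivity constraint $\rho \ge 0$ is \emph{inactive} because $\rho$ being interior means it is strictly positive definite. Consequently, for any Hermitian $\delta\rho$ with $\tr(\delta\rho) = 0$ and $\tr(\delta\rho\,{\cal O}_i) = 0$, the perturbed operator $\rho + \epsilon\,\delta\rho$ remains a positive, unit-trace Hermitian operator satisfying the constraints to first order; these conditions describe exactly the tangent space to the constraint surface at $\rho$. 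First I would note that since $\rho$ maximizes $S_{vN}$ over this surface and is an interior point, it is a critical point of $S_{vN}$ restricted to the surface, so the Lagrange-multiplier method applies with no inequality constraints to worry about.

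The key computation is the first variation of the von Neumann entropy. Writing $S_{vN}(\rho) = \tr h(\rho)$ with $h(x) = -x\log x$, I would use the standard functional-calculus fact that for smooth $h$ and Hermitian $\rho$,
\begin{equation}
\left.\frac{d}{d\epsilon}\right|_{\epsilon = 0} \tr h(\rho + \epsilon\,\delta\rho) = \tr\big(h'(\rho)\,\delta\rho\big),
\end{equation}
which holds even when $\delta\rho$ does not commute with $\rho$, because the trace symmetrizes the off-diagonal contributions. With $h'(x) = -(\log x + 1)$ this identifies the gradient of $S_{vN}$, relative to the Hilbert--Schmidt inner product $\langle A, B\rangle = \tr(AB)$, as the Hermitian operator $-(\log\rho + \identity)$. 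The interior hypothesis is used again here: positive definiteness of $\rho$ guarantees that $\log\rho$ exists and that $h(\rho)$ is differentiable.

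The Lagrange condition then states that this gradient lies in the span of the gradients of the constraint functions. The constraint $\tr(\rho {\cal O}_i) = E_i$ has gradient ${\cal O}_i$ and the constraint $\tr(\rho) = 1$ has gradient $\identity$ with respect to the same inner product, so there exist real multipliers $\beta_i$ and $\alpha$ with
\begin{equation}
-(\log\rho + \identity) = \sum_i \beta_i {\cal O}_i + \alpha\, \identity.
\end{equation}
Solving gives $\log\rho = -(1+\alpha)\identity - \sum_i \beta_i {\cal O}_i$, so $\rho = e^{-(1+\alpha)}\, e^{-\sum_i \beta_i {\cal O}_i}$, and imposing $\tr(\rho) = 1$ fixes the scalar prefactor to be $1/\tr(e^{-\sum_i \beta_i {\cal O}_i})$, yielding exactly the claimed Gibbs form.

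The main obstacle, and the place requiring care, is the justification that the interior assumption both permits unrestricted Hermitian variations and keeps $\rho \mapsto \tr(\rho\log\rho)$ differentiable: a maximizer on the boundary could instead satisfy only a one-sided (Kuhn--Tucker) condition and need not be of the exponential form. Equally, I would state the variation formula carefully rather than assume naively that $\delta\tr(\rho\log\rho) = \tr(\delta\rho\,\log\rho)$, since the naive step misses the $+\identity$ term arising from differentiating the explicit $\rho$ factor (a short computation with $\delta\log\rho = \int_0^\infty (\rho+s)^{-1}\delta\rho\,(\rho+s)^{-1}\,ds$ shows $\tr(\rho\,\delta\log\rho) = \tr(\delta\rho)$). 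Reducing everything to the scalar function $h(x) = -x\log x$ via functional calculus settles both analytic points at once; the Lagrange-multiplier structure itself is then routine.
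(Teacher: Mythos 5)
Your proposal is correct and follows essentially the same route as the paper: the paper's proof is exactly the method of Lagrange multipliers applied to $I = -\tr(\rho\log\rho) - \beta_i(\tr(\rho\,{\cal O}_i)-E_i) + \beta_0(\tr(\rho)-1)$, yielding $\log\rho = -\beta_i{\cal O}_i + \text{const}\cdot\identity$ and then fixing the constant by normalization. Your version simply spells out details the paper leaves implicit --- the role of the interior hypothesis in making the positivity constraint inactive, and the functional-calculus justification of the variation formula including the $+\identity$ term (which the paper silently absorbs into its multiplier $\beta_0$).
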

\begin{proof}
The method of Lagrange multipliers shows that the action
\begin{equation}
    I =  -\tr(\rho \log \rho) - \beta_i (\tr(\rho {\cal O}_i) - E_i) + \beta_0 (\tr(\rho) -1)
\end{equation}
is stationary with respect to variations of $\rho$. This gives
\begin{equation}
      - \log \rho - \beta_i  {\cal O}_i  - \beta_0  = 0
\end{equation}
so $\rho = \exp(-\beta_0)\exp(-\beta_i  {\cal O}_i)$. The condition $\tr(\rho) = 1$ implies $\exp(\beta_0) = \tr(\exp(-\beta_i {\cal O}_i))$.
\end{proof}
According to the lemmas, any set of allowed expectation values can be reproduced by a state that maximizes von Neumann entropy state subject to these expectation values, and such states take the form (\ref{thermal}) provided that their density matrix is non-singular. This implies that we can find a thermal state to reproduce any set of expectation values except possibly when these only arise from degenerate states. The theorem is stronger, since it implies that any expectation values in the interior of $E_S$ can be produced by some unique state of the form (\ref{thermal}), and the map from $\vec{\beta}$ to $\vec{E}$ is a diffeomorphism (differentiable with a differentiable inverse) with a negative definite Jacobian.

To complete the proof of the stronger result Theorem (\ref{expthm}), it will be useful to give a more geometrical description of the map (\ref{Ebeta}).

\subsubsection*{Geometrical description}

To begin, we recall that $\mathfrak{h}$ is the set of Hermitian operators on ${\cal H}$ and define $\mathfrak{h}^*$ to be the dual vector space of linear maps from $\mathfrak{h}$ to $\mathbb{R}$. For a set $S  =\{ {\cal O}_i\}$ of $n$ linearly independent operators, we define the subspace $\mathfrak{h}_S \subset \mathfrak{h}$ to be $span(S)$ and $\mathfrak{h}^*_S$ to be the dual vector space. The map $\vec{E}(\vec{\beta})$ in Theorem \ref{expthm} can be understood as a composition 
\begin{equation}
\mathbb{R}^n \xrightarrow{C_S} \mathfrak{h}_S \xrightarrow{\iota_S} \mathfrak{h} \xrightarrow{df} \mathfrak{h}^* \xrightarrow{\iota^*_S} \mathfrak{h}^*_S \xrightarrow{{\cal E}_S} \mathbb{R}^n \; .
\end{equation}
Here,
\begin{itemize}
    \item 
$C_S$ is the isomorphism $C_S(\vec{\beta}) = -\sum \beta_i {\cal O}_i$ whose inverse assigns coordinates to elements in $\mathfrak{h}_S$ for the basis $\{ -{\cal O}_i\}$.
\item $\iota_S$ is the inclusion map.
\item $df$ is the derivative of 
\begin{equation}
f({\cal O}) = \log \tr e^{\cal O} \; ,
\end{equation}
given by 
\begin{equation}
df({\cal O}) = {\tr( \cdot \; e^{\cal O}) \over \tr(e^{\cal O})} \; .
\end{equation}
\item 
$\iota^*_S$ is the restriction map defined by restricting the action of map in $\mathfrak{h}^*$ to the subspace $\mathfrak{h}_S$.
\item 
${\cal E}_S$ is defined by $[{\cal E}_S({\cal O})]_i = \tr({\cal O} {\cal O}_i)$. This is map is the isomorphism that assigns coordinates to ${\cal O} \in \mathfrak{h}_S$ in the dual basis $\hat{\cal O}_i$ defined by $\hat{\cal O}_i \cdot {\cal O}_j = \delta_{ij}$.
\end{itemize}
The set of valid quantum states is a subset ${\cal D}^* \subset \mathfrak{h}^*$ defined by maps ${\cal O} \to \tr(\rho {\cal O})$ where $\rho$ is a non-negative unit trace Hermitian operator. Restricting the maps in ${\cal D}^*$ to their action on $\mathfrak{h}_S$ defines ${\cal D}^*_S = \iota^*_S {\cal D}^* \subset \mathfrak{h}^*_S$. 

Since $e^{\cal O}/\tr(e^{\cal O})$ is positive and has unit trace, the image of $\mathfrak{h}$ under $df$ lies in ${\cal D}^*$ and the image of $\mathfrak{h}_S$ under $\iota^*_S \circ df \circ \iota_S$ lies in ${\cal D}^*_S$. The content of Theorem \ref{expthm} amounts to a strengthening of the latter statement:

\noindent
{\bf Theorem 3}
{\it
For a subspace $\mathfrak{h}_S$ of the space of traceless Hermitian operators, the map $df_S = \iota^*_S \circ df \circ \iota$ is a diffeomorphism from $\mathfrak{h}_S$ to the interior of ${\cal D}^*_S$ with symmetric, positive definite Jacobian.}

To see the equivalence of this statement with Theorem \ref{expthm}, we note that the set $E_S$ of allowed expectation values for the operators in $S$ is by our definitions the image of ${\cal D}^*_S$ under ${\cal E}_S$, so Theorem \ref{expthm} is exactly Proposition 1 expressed in coordinates.

\begin{figure}
    \centering
    \includegraphics[width = .7 \textwidth]{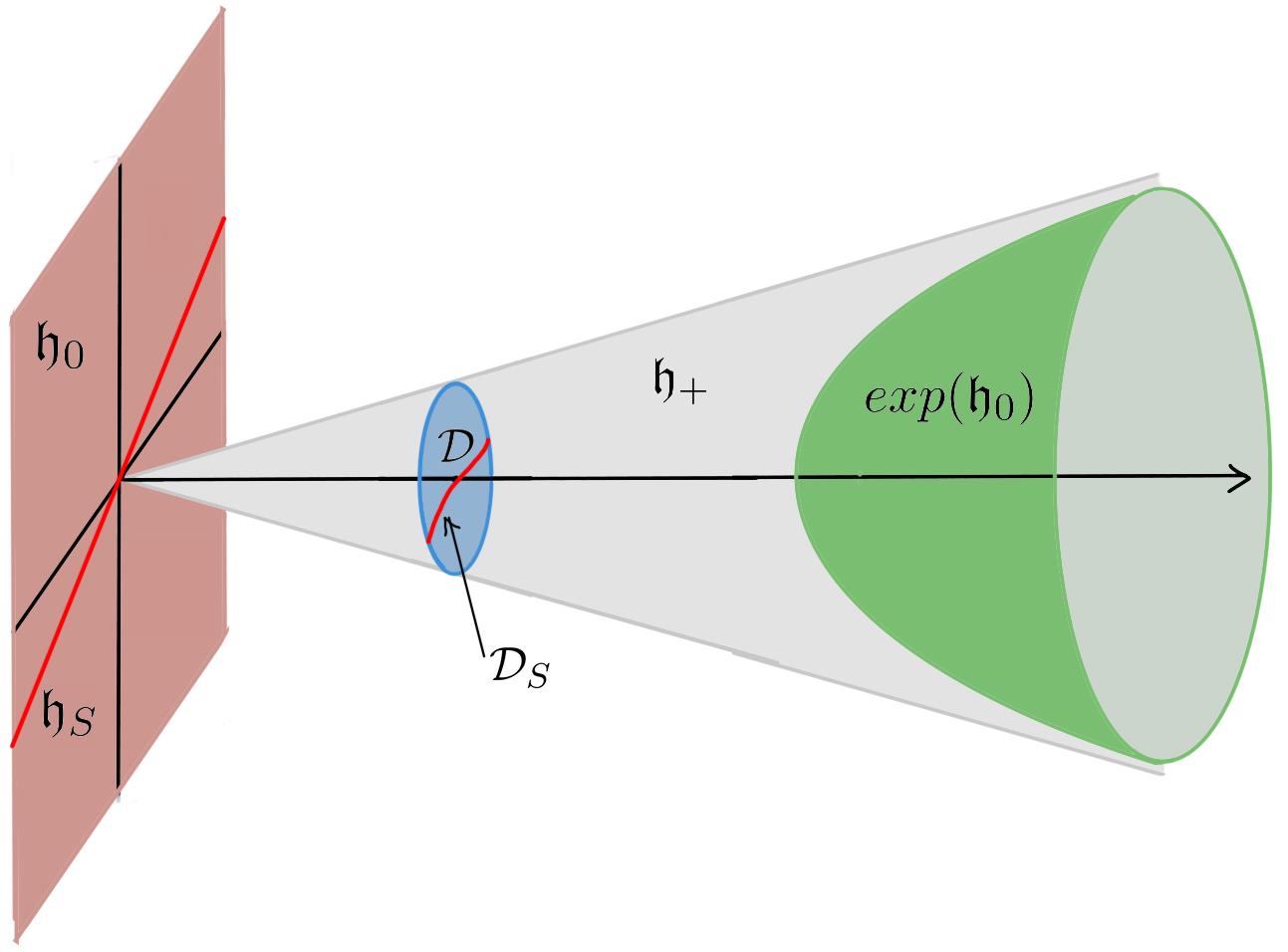}
    \caption{Space $\mathfrak{h}$ of Hermitian operators, with the direction corresponding to the identity operator pointing to the right. The exponential map is a diffeomorphism from $\mathfrak{h}$ to positive Hermitian operators $\mathfrak{h}_+$. The image (shown in green) of $exp$ on the traceless Hermitian operators $\mathfrak{h}_0$ maps onto the space ${\cal D}_+$ of positive density operators (blue) under the normalization map ${\cal N}:{\cal O} \to {\cal O}/\tr({\cal O})$. For a subspace $\mathfrak{h}_S \subset \mathfrak{h}_0$, the image of ${\cal N} \circ \exp$ is a subset ${\cal D}_S \subset {\cal D}_+$. For any allowed set of expectation values in the interior of $E_S$, there is a unique state in ${\cal D}_S$ that achieves these expectation values.}
    \label{fig:mathpic}
\end{figure}

To understand better the map $df$ and set up our proof below, it will be useful to further decompose $df$ into a composition
\begin{equation}
\mathfrak{h} \xrightarrow{df} \mathfrak{h}^* \qquad = \qquad  \mathfrak{h} \xrightarrow{exp} \mathfrak{h}_+ \xrightarrow{\cal N} \mathfrak{h} \xrightarrow{i} \mathfrak{h}^* \; .
\end{equation}
Here, the exponential map $exp: {\cal O} \to e^{\cal O}$ takes the space $\mathfrak{h}$ of Hermitian operators on ${\cal H}$ to the space $\mathfrak{h}_+$ of positive Hermitian operators, which is a convex cone in $\mathfrak{h}$. The normalization map ${\cal N} : {\cal O} \to {\cal O}/\tr({\cal O})$ takes the space of positive Hermitian operators to the space ${\cal D}_+$ of positive density operators whose closure is the full space ${\cal D}$ of density operators. Finally, $i$ is the canonical isomorphism ${\cal O} \leftrightarrow \tr(\cdot \; {\cal O})$.

Figure \ref{fig:mathpic} provides a visualization of the action of $\exp$ and ${\cal N}$ on $\mathfrak{h}$ and on a subspace $\mathfrak{h}_S$.

As we show below, the composition ${\cal N} \circ exp$ is a diffeomorphism from the space $\mathfrak{h}_0$ of traceless Hermitian operators to the space of positive density operators. Thus, each positive density operator arises from some unique traceless operator under the map. Restricting ${\cal N} \circ exp$ to the subspace $\mathfrak{h}_S$, we get a diffeomorphism between $\mathfrak{h}_S$ and a subset ${\cal D}_S$ of the positive density operators. Theorem \ref{dualprop} states that for any way of assigning expectation values to the operators in $\mathfrak{h}_S$ associated with a valid quantum state in ${\cal D}_+$, there is some unique state in ${\cal D}_S$ that makes the same assignments. Further, that the full map between $\mathfrak{h}_S$ and ${\cal D}_S$ is a diffeomorphism with positive Jacobian. We are now ready for the proof of Theorems \ref{expthm} and \ref{dualprop}.

\subsection{Proof of Theorems \ref{expthm} and \ref{dualprop}}

We begin with
\begin{lemma}\label{lemma:exp_diff}
    The exponential map is a diffeomorphism between the real vector space $\mathfrak{h}$ of Hermitian operators acting on an $N$-dimensional Hilbert space and the set of positive Hermitian operators $\mathfrak{h}_+ \subset \mathfrak{h}$.
\end{lemma}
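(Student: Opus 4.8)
The plan is to establish that $\exp: \mathfrak{h} \to \mathfrak{h}_+$ is a smooth bijection with smooth inverse, where $\mathfrak{h}$ is the real vector space of Hermitian operators and $\mathfrak{h}_+$ is the open convex cone of positive (positive-definite) Hermitian operators. I would organize this into three parts: smoothness and well-definedness, bijectivity, and smoothness of the inverse via the inverse function theorem.

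First I would observe that $\exp$ is smooth on all of $\mathfrak{h}$ since it is given by an everywhere-convergent power series in the matrix entries, and that it indeed maps into $\mathfrak{h}_+$: if ${\cal O}$ is Hermitian with real eigenvalues $\lambda_a$ and orthonormal eigenvectors $u_a$, then $e^{\cal O} = \sum_a e^{\lambda_a} u_a u_a^\dagger$ is Hermitian with strictly positive eigenvalues $e^{\lambda_a}$, hence positive-definite. Next I would prove bijectivity by exhibiting an explicit inverse. A positive Hermitian operator $P$ has a spectral decomposition $P = \sum_a \mu_a u_a u_a^\dagger$ with $\mu_a > 0$, and the operator $\log P := \sum_a (\log \mu_a) u_a u_a^\dagger$ is Hermitian and satisfies $\exp(\log P) = P$; conversely $\log(\exp {\cal O}) = {\cal O}$ because $\exp$ preserves eigenspaces and $\log$ inverts the strictly monotonic scalar map $\lambda \mapsto e^\lambda$ on each eigenvalue. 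The strict monotonicity of $\lambda \mapsto e^\lambda$ on $\mathbb{R}$ is what guarantees the eigenvalue correspondence is one-to-one, so $\exp$ and $\log$ are mutually inverse set maps.

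Finally I would upgrade this to a diffeomorphism using the inverse function theorem, for which it suffices to show the derivative $d(\exp)_{\cal O}: \mathfrak{h} \to \mathfrak{h}$ is invertible at every ${\cal O}$. Using the standard first-order variation formula
\begin{equation}
d(\exp)_{\cal O}(X) = \int_0^1 e^{(1-s){\cal O}} \, X \, e^{s {\cal O}} \, ds,
\end{equation}
I would diagonalize ${\cal O}$ and pass to the eigenbasis: writing $X$ in components $X_{ab}$ relative to the eigenvectors $u_a$ of ${\cal O}$ with eigenvalues $\lambda_a$, the integral evaluates componentwise to $d(\exp)_{\cal O}(X)_{ab} = \frac{e^{\lambda_a} - e^{\lambda_b}}{\lambda_a - \lambda_b} X_{ab}$ (interpreted as $e^{\lambda_a}$ when $\lambda_a = \lambda_b$). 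The scalar factors $\frac{e^{\lambda_a}-e^{\lambda_b}}{\lambda_a-\lambda_b}$ are strictly positive for all $a,b$ because $\lambda \mapsto e^\lambda$ is strictly increasing, so the derivative is a diagonal linear map with positive entries in this basis, hence invertible. By the inverse function theorem the local inverse is smooth at every point, and since the global inverse $\log$ agrees with these local inverses, $\log$ is smooth on $\mathfrak{h}_+$. The main obstacle — though a mild one — is the careful evaluation of the derivative and the confirmation that its eigenvalues (the divided differences) are nonzero; this is exactly where the positivity of the exponential's slope enters, and it is the same monotonicity fact underlying bijectivity, so the two halves of the argument reinforce each other rather than requiring independent input.
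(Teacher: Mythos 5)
Your proof is correct and follows the same overall strategy as the paper's: show $\exp$ is a smooth bijection onto $\mathfrak{h}_+$, then show the derivative $d(\exp)_{\mathcal{O}}(X)=\int_0^1 e^{(1-s)\mathcal{O}}\,X\,e^{s\mathcal{O}}\,ds$ is invertible by passing to the eigenbasis of $\mathcal{O}$, where it acts by the positive divided differences $(e^{\lambda_a}-e^{\lambda_b})/(\lambda_a-\lambda_b)$, and conclude via the inverse function theorem. The one place you genuinely diverge is bijectivity: the paper proves injectivity and surjectivity separately, with injectivity handled by a commutator argument (from $e^A=e^B$ one gets $[W^{-1}U,e^D]=0$, which forces $W^{-1}U$ to be block diagonal on the degenerate blocks, whence $[W^{-1}U,D]=0$ and $A=B$), whereas you exhibit the functional-calculus logarithm $\log P=\sum_a(\log\mu_a)\,u_a u_a^\dagger$ as an explicit two-sided inverse. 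Your route is more economical: the uniqueness of spectral projections absorbs the degenerate-eigenvalue bookkeeping that the paper's commutator argument does by hand, and the single fact that the scalar exponential is injective on $\mathbb{R}$ powers both your inverse construction and the positivity of the derivative, as you note. The paper's version makes the degeneracy issue more explicit, which is arguably safer pedagogically; to make yours airtight you should state that $\log P$ depends only on the spectral projections of $P$ (not on the choice of orthonormal eigenbasis), which is the implicit content of your eigenspace-preservation remark.
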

\begin{proof}
    We make use of the standard isomorphism between $\mathfrak{h}$ and the real vector space of $N\times N$ Hermitian matrices viewed as a smooth manifold. Let $H^+$ be the set of positive definite Hermitian matrices. Since positivity is an open condition, $H^+$ has the structure of a smooth manifold. Let $\text{exp}:H\to H^+$ be the exponential map. Smoothness of $\text{exp}$ follows from the uniform convergence of the series expansion of the exponential. 
    
    We first show that $\text{exp}$ is injective. Suppose we have  $A,B\in H$ such that $e^A = e^B$. Since $A$ and $B$ are Hermitian, they are unitarily diagonalizable. Further, $A$ and $B$ must have the same eigenvalues, since if $\{\lambda_i\}$ are the eigenvalues of $H$, the eigenvalues of $e^H$ are $\{e^{\lambda_i}\}$ and the exponential function is invertible. Thus, can write $A = U D U^{-1}$ and $B = W D W^{-1}$ where $D$ is a diagonal matrix with diagonal entries ordered from smallest to largest. Now, $e^A = e^B$ gives $U e^{D} U^{-1} = W e^{D} W^{-1}$ or $[W^{-1} U, e^{D}] = 0$. The vanishing of this commutator implies that $W^{-1} U$ is block diagonal, with blocks corresponding the sets of rows for which $e^D$ has equal eigenvalues. Since $D$ has equal eigenvalues for these same sets of rows, we also have $[W^{-1} U, D] = 0$ or $U D U^{-1}= W D W^{-1}$. Thus, $A=B$.

    We now show surjectivity. Let $B\in H^+$. Then there exists a unitary operator $U$ such that $B=UDU^{-1}$ for some diagonal $D$ with positive, real entries. Since the exponential is a bijection from the reals to positive reals, we can find $\lambda_i\in\mathbb{R}$ such that $e^{\lambda_i}=d_i$ where $d_i$ is the $i$th diagonal of $D$. We can then construct $A = U\text{diag}(\lambda_1,...,\lambda_N)U^{-1}\in H$ and observe that 
    \begin{align*}
       e^A = U\text{diag}(e^{\lambda_1},...,e^{\lambda_N})U^{-1}=U\text{diag}(d_1,...,d_N)U^{-1} = B.
    \end{align*}
    Hence we have that $\text{exp}$ is surjective and by the previous paragraphs, it is a smooth bijection.

    It now suffices to show that for $\phi = \text{exp}$, the derivative $d\phi_X$ is invertible for each $X\in H$. This implies that $\phi$ is a local diffeomorphism by the inverse function theorem and hence a global diffeomorphism as we have already shown that $\phi$ is a smooth bijection. We have that \cite{rossmann2002lie}
    \begin{equation}
        d\phi_X(Y) = \left. {d \over d \epsilon} e^{X + \epsilon Y} \right|_{\epsilon = 0} = \int ds \; e^{(1-s)X} Y e^{sX} = e^X {1 - e^{-ad(X)} \over ad{X}} [Y].
    \end{equation}
    where $ad(X):M \to [X, M]$. 
    We will show that all the eigenvalues of this map (acting on the full space of $N \times N$ matrices) are all positive. Let $v_i$ be an orthonormal basis of eigenvectors for $X$ with corresponding eigenvalues $\lambda_i$. Then $v_i v_j^\dagger$ provide an orthonormal basis for $N \times N$ matrices and 
    \begin{equation}
        d\phi_X(v_i v_j^\dagger) = \int ds \; e^{(1-s)X} v_i v_j^\dagger e^{sX} = \int ds \; (e^{(1-s)\lambda_i}  e^{s \lambda_j}) v_i v_j^\dagger \; .
    \end{equation}
    Thus, $v_i v_j^\dagger$ is an eigenvector for $d\phi_X$ and the eigenvalue $\int ds \; (e^{(1-s)\lambda_i}  e^{s \lambda_j})$ is clearly positive.  
\end{proof}
Next, we show that 
\begin{lemma}
\label{posjac}
    The map ${\cal N} \circ \exp$ is a diffeomorphism from the space $\mathfrak{h}_0$ of traceless Hermitian operators to the set ${\cal D}_+$ of positive density operators, with symmetric positive definite Jacobian.
\end{lemma}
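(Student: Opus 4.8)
The plan is to establish three things in turn: that $g \equiv {\cal N}\circ\exp$ restricted to $\mathfrak{h}_0$ is a smooth bijection onto ${\cal D}_+$, that its Jacobian is everywhere symmetric and positive definite, and finally that it is a diffeomorphism. The organizing observation, which makes symmetry automatic, is that $g$ is the gradient of the scalar function $f(X) = \log\tr e^X$: computing $df_X(Y) = \tr(Y e^X)/\tr(e^X) = \tr(\rho Y)$ and identifying $\mathfrak{h}^*$ with $\mathfrak{h}$ through the inner product $\langle A,B\rangle = \tr(AB)$ shows $\nabla f(X) = e^X/\tr(e^X) = g(X)$. Hence the Jacobian of $g$ is the Hessian of $f$, symmetric by equality of mixed partials, and the whole problem reduces to analyzing this Hessian.

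For bijectivity I would lean directly on Lemma~\ref{lemma:exp_diff}. Given $\rho \in {\cal D}_+$, that lemma provides a unique $A \in \mathfrak{h}$ with $e^A = \rho$; setting $X = A - (\tr A/N)\identity \in \mathfrak{h}_0$ gives $e^X = e^{-\tr A/N}\rho$ and hence ${\cal N}(e^X) = \rho$, establishing surjectivity. For injectivity, if ${\cal N}(e^X) = {\cal N}(e^{X'})$ then $e^X = c\, e^{X'}$ for some $c > 0$; writing $c = e^{\gamma}$ gives $e^X = e^{X' + \gamma\identity}$, so injectivity of $\exp$ forces $X = X' + \gamma\identity$, and tracelessness of $X, X'$ forces $\gamma = 0$. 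Smoothness of $g$ follows from smoothness of $\exp$ together with smoothness of ${\cal N}$ on the open cone $\mathfrak{h}_+$ where the trace is positive. Note also that $g$ lands in the unit-trace affine slice, so $dg_X$ maps $\mathfrak{h}_0$ into $\mathfrak{h}_0 = T_\rho {\cal D}_+$, and the domain and codomain of the Jacobian agree.

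The substance is positive definiteness of the Hessian on $\mathfrak{h}_0$. I would compute the second directional derivative and diagonalize: writing $X = \sum_i \lambda_i |i\rangle\langle i|$, $p_i = e^{\lambda_i}/\tr e^X$, and $Y_{ij} = \langle i|Y|j\rangle$, the standard identity ${d \over d\epsilon}e^{X+\epsilon Y} = \int_0^1 e^{(1-s)X}Y e^{sX}\,ds$ yields
\begin{equation}
d^2 f_X(Y,Y) = \sum_{i \ne j} |Y_{ij}|^2\, \frac{p_i - p_j}{\lambda_i - \lambda_j} + \Big( \sum_i p_i Y_{ii}^2 - \big(\textstyle\sum_i p_i Y_{ii}\big)^2 \Big).
\end{equation}
Both pieces are manifestly nonnegative: the off-diagonal coefficients are difference quotients of the increasing function $e^\lambda$ (with the degenerate limit $\lambda_j \to \lambda_i$ giving $p_i > 0$), and the diagonal bracket is the variance of the values $Y_{ii}$ under the probability weights $p_i$, nonnegative by Cauchy--Schwarz. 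The form therefore vanishes only when every $Y_{ij} = 0$ for $i \ne j$ and all $Y_{ii}$ are equal, i.e.\ when $Y \propto \identity$; so the Hessian is positive semidefinite on $\mathfrak{h}$ with kernel exactly the identity direction, hence positive definite on the traceless subspace $\mathfrak{h}_0$.

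Finally, since $dg_X$ is everywhere invertible (symmetric positive definite) and $g$ is a global smooth bijection onto ${\cal D}_+$, the inverse function theorem upgrades $g$ to a local diffeomorphism at each point and hence to a global diffeomorphism, exactly as in the proof of Lemma~\ref{lemma:exp_diff}. I expect the main obstacle to be the Hessian computation: carefully reducing the integral over $s$ to the difference-quotient coefficients, correctly treating degenerate eigenvalues, and separating the diagonal and off-diagonal contributions so that the kernel can be pinned down precisely as the identity direction.
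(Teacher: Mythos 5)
Your proof is correct, but it reaches the conclusion by a genuinely different route than the paper. The paper's proof never invokes the inverse function theorem or computes a Hessian: it exhibits an explicit smooth inverse (essentially $\rho \mapsto \log\rho - \frac{1}{N}\tr(\log\rho)\,\identity$), which makes the diffeomorphism property immediate, and then obtains positivity of the Jacobian by a soft argument --- computing the Jacobian at the origin of $\mathfrak{h}_0$ (where it equals $\frac{1}{N}\,\identity$) and appealing to continuity and non-vanishing of the determinant along the connected domain. You instead get bijectivity from Lemma \ref{lemma:exp_diff} plus a short normalization argument, and you get symmetry and positive definiteness pointwise from the identification of the Jacobian with the Hessian of $f(X) = \log\tr e^X$, diagonalized explicitly into difference-quotient and variance terms. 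Each approach has advantages. The paper's is shorter and avoids the eigenbasis computation entirely; however, as literally stated its final step only establishes positivity of the determinant, which for a symmetric matrix is weaker than positive definiteness (one needs the additional remark that the eigenvalues of a continuous family of invertible symmetric matrices cannot cross zero), and symmetry itself is not argued within the lemma. Your gradient/Hessian formulation repairs both points at once: symmetry is automatic from equality of mixed partials, positive definiteness holds manifestly at every point, and identifying the kernel of the full Hessian on $\mathfrak{h}$ as exactly the identity direction explains precisely why restricting to the traceless subspace $\mathfrak{h}_0$ is the right move. Your Hessian formula is also the connected correlation (Kubo--Mori) form that the paper later uses in its corollary on thermal two-point functions, so your computation unifies the lemma with that result. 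The one step where you lean on heavier machinery is the use of the inverse function theorem to upgrade the smooth bijection to a diffeomorphism; the paper's explicit inverse makes that step unnecessary, though your version is equally rigorous.
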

\begin{proof}
For a positive operator ${\cal O}$, we have that $\tr({\cal O})> 0$, so the normalization map ${\cal N}:{\cal O} \to {\cal O}/\tr({\cal O})$ is clearly smooth on $\mathfrak{h}_+$. For any ${\cal O} \in \mathfrak{h}_0$, the image $\text{exp}({\cal O})/\tr(\text{exp}({\cal O}))$ is positive, Hermitian, and trace 1, so is in ${\cal D}_+$. Thus, the composition ${\cal N} \circ \exp$ is a smooth map from $\mathfrak{h}_0$ to ${\cal D}_+$. Next, a simple calculation shows that the map
\begin{align*}
     \rho \mapsto \frac{\rho}{e^{\frac{1}{N}\tr(\log(\rho))}}
\end{align*}
${\cal D}_+$ to $\mathfrak{h}_0$ is an inverse for ${\cal N} \circ \exp$. This is smooth, since it is a composition of smooth maps. Thus ${\cal N} \circ \exp$ is a smooth map from $\mathfrak{h}_0$ onto ${\cal D}_+$ with a smooth inverse, so it is a diffeomorphism. To show that the Jacobian is positive, note that for any ${\cal O} \in \mathfrak{h}_0$, we have
\begin{equation}
    \left.{d \over d \epsilon}{\cal N} \circ \exp(\epsilon {\cal O})\right|_{\epsilon = 0} = \left. {d \over d \epsilon} {e^{\epsilon {\cal O}} \over \tr (e^{\epsilon {\cal O}})} \right|_{\epsilon = 0} = {1 \over N } {\cal O} 
\end{equation}
so the Jacobian evaluated at the origin of $\mathfrak{h}_0$ is proportional to the identity map, and therefore positive. Since ${\cal N} \circ \exp$ is a smooth diffeomorphism, its determinant is continuous and non-zero everywhere, so must be everywhere positive. 
\end{proof}
Now, we restrict the domain to $\mathfrak{h}_S$ and consider the map $df_S = \iota^*_S \circ i \circ {\cal N} \circ \exp \circ \iota_S$. We have:
\begin{lemma}
    For a subspace $\mathfrak{h}_S \subset \mathfrak{h}_0$, the map $df_S$ is a diffeomorphism from $\mathfrak{h}_S$ to its image in $\mathfrak{h}_S^*$ with positive Jacobian.
\end{lemma}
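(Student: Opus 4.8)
The plan is to reduce the entire statement to the previous lemma by computing the Jacobian of $df_S$ explicitly and recognizing it as the restriction of the (already established) symmetric positive definite Jacobian of ${\cal N} \circ \exp$. Writing $\Phi = {\cal N} \circ \exp$ and differentiating the composition $df_S = \iota^*_S \circ i \circ \Phi \circ \iota_S$ at a point $X \in \mathfrak{h}_S$ by the chain rule, the inclusion $\iota_S$ and the isomorphism $i$ are linear (so they equal their own derivatives), while $d\Phi_X$ is the map supplied by Lemma \ref{posjac}. Tracing $v \in \mathfrak{h}_S$ through, the derivative $d(df_S)_X$ sends $v$ to the functional $u \mapsto \tr(u\, d\Phi_X(v))$ on $\mathfrak{h}_S$. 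In other words, the Jacobian of $df_S$ at $X$ is the bilinear form
\[
B_X(u,v) = \langle u, d\Phi_X(v)\rangle, \qquad \langle A,B\rangle := \tr(AB),
\]
on $\mathfrak{h}_S$.

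First I would establish that $B_X$ is symmetric and positive definite. By Lemma \ref{posjac}, the endomorphism $d\Phi_X$ of $\mathfrak{h}_0$ is symmetric and positive definite with respect to the trace inner product. Since $\mathfrak{h}_S \subset \mathfrak{h}_0$, the form $B_X$ is exactly the restriction of the symmetric positive definite form $(u,v) \mapsto \langle u, d\Phi_X(v)\rangle$ to the subspace $\mathfrak{h}_S$, and restricting a symmetric positive definite form to any subspace preserves both properties. This yields the ``symmetric, positive Jacobian'' claim directly, and in particular shows $B_X$ is nondegenerate, so $d(df_S)_X$ is invertible for every $X$. Invertibility together with the inverse function theorem then makes $df_S$ a local diffeomorphism, hence an open map whose image is open in $\mathfrak{h}_S^*$.

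It remains to upgrade this to a global statement by proving injectivity, which I expect to be the one nontrivial point. For this I would run a monotonicity argument exploiting the convexity of the domain: given distinct $X, Y \in \mathfrak{h}_S$, set $w = X - Y \neq 0$ and consider $\phi(t) = [df_S(Y + tw)](w)$ for $t \in [0,1]$, which is well defined because the segment $Y + tw$ stays in the linear subspace $\mathfrak{h}_S$. Differentiating gives $\phi'(t) = B_{Y+tw}(w,w) > 0$, since each $B$ is positive definite and $w \neq 0$, so $\phi$ is strictly increasing and $\phi(1) \neq \phi(0)$; that is, $[df_S(X)](w) \neq [df_S(Y)](w)$, forcing $df_S(X) \neq df_S(Y)$. (Equivalently, $df_S$ is the gradient of the strictly convex function $X \mapsto \log\tr e^X$ on $\mathfrak{h}_S$, and the gradient of a strictly convex function is injective.) An injective local diffeomorphism is a diffeomorphism onto its open image, which completes the argument.

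The main obstacle is precisely this global injectivity: an invertible Jacobian gives only local injectivity, and although the symmetry of $B_X$ signals that $df_S$ is a gradient, one still needs the positive definiteness of $B_X$ \emph{along the entire segment} joining two points to exclude distinct preimages of the same functional. The fact that $\mathfrak{h}_S$ is a linear subspace, hence convex, is exactly what makes this segment argument available, so no additional input beyond Lemma \ref{posjac} and the convexity of the domain is required.
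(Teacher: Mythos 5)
Your proposal is correct and follows essentially the same route as the paper: the paper likewise restricts the positive definite Jacobian of ${\cal N} \circ \exp$ to $\mathfrak{h}_S$ (in an adapted orthonormal basis, where your bilinear form $B_X$ becomes the upper $n \times n$ block, with positivity via the same padding-by-zero argument), invokes the inverse function theorem for local diffeomorphism, and proves global injectivity by integrating the quadratic form along the segment joining two points. Your coordinate-free phrasing and the gradient-of-a-strictly-convex-function remark are only cosmetic differences.
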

\begin{proof}
    Let $\{{\hat{\cal O}}_i\}$ ($i = 1, \dots,N^2-1$) be an orthonormal basis for $\mathfrak{h}_0$ such that the restriction to ($i = 1, \dots,n$) gives an orthonormal basis for $\mathfrak{h}_S$.   Using the coordinate representation associated with these bases, the map ${\cal N} \circ \exp$ on $\mathfrak{h}_0$ is represented by the map $y: \mathbb{R}^{N^2-1} \to \mathbb{R}^{N^2-1}$ with
    \begin{equation}
        y^i(\vec{x}) = {\tr(\hat{{\cal O}}_i e^{x^j \hat{{\cal O}}_j}) \over \tr( e^{x^j \hat{{\cal O}}_j})}.
    \end{equation}
    By Lemma \ref{posjac} the Jacobian $J_{ij} = \partial y_i / \partial x_j$ is positive. The map $df_S: \mathfrak{h}_S \to \mathfrak{h}_S^*$ is represented by the same formula, with the basis elements restricted to $i=1 \dots n$. The Jacobian $J_S$ is the upper $n \times n$ block of the Jacobian $J$ matrix. Given any $\vec{v} \in \mathbb{R}^n$, we can define $\vec{v}_+ = (\vec{v}, \vec{0}) \in \mathbb{R}^{N^2-1}$ so that $\vec{v}^T J_S \vec{v} = \vec{v}_+^T J \vec{v}_+ > 0$. Since the Jacobian is positive definite, it is invertible everywhere, so  $df_S$ is a local diffeomorphism. To see that the function is a bijection, we note that for $\Delta \vec{x} \equiv \vec{x}_2 - \vec{x}_1 \ne 0$, 
    \begin{equation}
       \Delta \vec{x} \cdot(\vec{y}(\vec{x}_2) - \vec{y}(\vec{x}_1)) = \Delta \vec{x} \cdot \int_0^1 ds \partial_s \vec{y}(\vec{x}_1 + s \Delta \vec{x}) = \int_0^1 ds \Delta \vec{x} \cdot [J_S (\vec{x}_1 + s\Delta \vec{x}) \Delta \vec{x}] > 0
    \end{equation}
    so $\vec{y}(\vec{x}_2) \ne \vec{y}(\vec{x}_1)$.
\end{proof}
To complete the proof of Theorem \ref{expthm} and \ref{dualprop}, we need to show that the image of $\mathfrak{h}_S$ under $df_S$ is the interior of ${\cal D}_S^*$, i.e. that every set of expectation values in the interior of $E_S$ is attained by a state of the form \ref{thermal} with ${\cal O}_i \in S$:
\begin{lemma}
The map ${\cal E}_S \circ df_S$ on $\mathfrak{h}_S$ is surjective onto the interior of $E_S$.
\end{lemma}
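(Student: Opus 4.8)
The plan is to identify the image $\Omega = {\cal E}_S\circ df_S(\mathfrak{h}_S)$ and show directly that it equals $\mathrm{int}(E_S)$ by a connectedness argument. From the preceding lemma, $df_S$ is a diffeomorphism onto its image with everywhere-invertible Jacobian, and ${\cal E}_S$ is a linear isomorphism; hence $\Omega$ is open. Since every $\rho_{\vec\beta}$ is a genuine density operator, $\Omega\subseteq E_S$, and being open it lies in $\mathrm{int}(E_S)$. Moreover $\Omega$ is connected (the continuous image of $\mathfrak{h}_S\cong\mathbb{R}^n$) and nonempty, as $\vec\beta = 0$ gives $\rho = \identity/N$ with $\vec E(0)=0\in\mathrm{int}(E_S)$ by Proposition \ref{closedconvex}. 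Because $\mathrm{int}(E_S)$ is the interior of a convex set it is connected; so if I can show $\Omega$ is also \emph{closed} in $\mathrm{int}(E_S)$, then $\Omega$ is a nonempty clopen subset of a connected set and must equal all of $\mathrm{int}(E_S)$.

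To prove closedness, I would take a sequence $\vec e^{(k)} = \vec E(\vec\beta^{(k)})\in\Omega$ with $\vec e^{(k)}\to\vec e\in\mathrm{int}(E_S)$ and show $\vec e\in\Omega$. If the parameters $\vec\beta^{(k)}$ admit a bounded subsequence, compactness yields $\vec\beta^{(k_j)}\to\vec\beta$ and continuity of $\vec E$ gives $\vec E(\vec\beta)=\vec e$, so $\vec e\in\Omega$ and we are done. The only remaining possibility is $|\vec\beta^{(k)}|\to\infty$, and the heart of the argument is to show this forces $\vec e$ onto $\partial E_S$, contradicting $\vec e\in\mathrm{int}(E_S)$.

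This divergent case is the main obstacle. Passing to a subsequence, write $\vec\beta^{(k)} = \beta_k\hat e^{(k)}$ with $\beta_k\to\infty$ and $\hat e^{(k)}\to\hat e$ for some unit vector $\hat e$. I claim $\hat e\cdot\vec E(\vec\beta^{(k)})\to\lambda_{min}({\cal O}_{\hat e})$. The clean tool is the free-energy bound: for any Hermitian $B$ and $\beta>0$,
\begin{equation}
\frac{\tr(B\,e^{-\beta B})}{\tr(e^{-\beta B})} \;\le\; \lambda_{min}(B) + \frac{\ln N}{\beta},
\end{equation}
which follows from $-\tfrac1\beta\ln\tr(e^{-\beta B})\le\lambda_{min}(B)$ together with $S_{vN}(\rho_{\vec\beta})\le\ln N$ (the free energy equals $\langle B\rangle - \tfrac1\beta S_{vN}$), while the reverse inequality $\tr(B\rho)\ge\lambda_{min}(B)$ is immediate. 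Applying this with $B={\cal O}_{\hat e^{(k)}}$ and using $\lambda_{min}({\cal O}_{\hat e^{(k)}})\to\lambda_{min}({\cal O}_{\hat e})$ by continuity of eigenvalues shows $\tr({\cal O}_{\hat e^{(k)}}\rho_{\vec\beta^{(k)}})\to\lambda_{min}({\cal O}_{\hat e})$; the error from replacing ${\cal O}_{\hat e^{(k)}}$ by ${\cal O}_{\hat e}$ is bounded by $\|{\cal O}_{\hat e}-{\cal O}_{\hat e^{(k)}}\|_{op}\to 0$ since $\rho_{\vec\beta^{(k)}}$ has unit trace. Hence $\hat e\cdot\vec e = \lim \hat e\cdot\vec E(\vec\beta^{(k)}) = \lambda_{min}({\cal O}_{\hat e})$.

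Finally I would invoke Theorem \ref{boundary}: the equality $\hat e\cdot\vec e = \lambda_{min}({\cal O}_{\hat e})$ together with $\vec e\in E_S\subseteq H_{\hat e}$ places $\vec e\in\partial H_{\hat e}\cap E_S = B_{\hat e}\subseteq\partial E_S$, contradicting $\vec e\in\mathrm{int}(E_S)$. This rules out divergence, establishes closedness of $\Omega$ in $\mathrm{int}(E_S)$, and completes the proof. I expect the delicate point to be precisely the simultaneous limit $\beta_k\to\infty$, $\hat e^{(k)}\to\hat e$ in the divergent case; the free-energy bound, being uniform in the operator up to its minimal eigenvalue, is what lets the changing direction be handled cleanly rather than through eigenvalue-gap estimates that could degenerate as the spectrum of ${\cal O}_{\hat e^{(k)}}$ varies.
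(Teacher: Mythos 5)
Your proof is correct, but it takes a genuinely different route from the paper's. The paper argues by topological enclosure: it considers the images $E_\beta$ of the spheres $|\vec{\beta}|=\beta$ under ${\cal E}_S\circ df_S$, proves with an explicit eigenvalue-splitting estimate (with $\beta_\epsilon = \tfrac{2}{\epsilon}\ln(2NR/\epsilon)$, $R$ the spectral range) that every point of $E_\beta$ lies within $\epsilon$ of $\partial E_S$ once $\beta>\beta_\epsilon$, and then concludes that for any interior point $P$ the image of the enclosed ball must contain $P$, since the segment $OP$ keeps a fixed distance $\Delta$ from $\partial E_S$ while $E_\beta$ does not, and $0$ lies in the image of the ball. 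Your clopen/connectedness argument replaces this enclosure topology with a softer global mechanism: openness comes for free from the positive-definite Jacobian of the preceding lemma, and all the work is concentrated in sequential closedness, where the divergent-parameter case is killed by the free-energy bound $\langle B\rangle_\beta \le \lambda_{min}(B) + \ln N/\beta$. Both proofs rest on the same physical fact --- large-$\beta$ thermal states have expectation values pinned near the supporting hyperplane $\hat{e}\cdot\vec{x}=\lambda_{min}({\cal O}_{\hat{e}})$ --- but your entropy bound is uniform in the direction $\hat{e}$, which is exactly what lets you handle a subsequence with varying $\hat{e}^{(k)}\to\hat{e}$ without the paper's explicit spectral splitting, while the paper's version buys a quantitative rate (useful in spirit for the flow results of section 6) and avoids any appeal to sequential compactness. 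One small point worth making explicit in your write-up: a point of $E_S$ lying on a supporting hyperplane is automatically in $\partial E_S$ (any neighborhood of it contains points violating $\hat{e}\cdot\vec{x}\ge\lambda_{min}$), which is the step that converts your limit $\hat{e}\cdot\vec{e}=\lambda_{min}({\cal O}_{\hat{e}})$ into the desired contradiction; you cite Theorem \ref{boundary} for this, which suffices.
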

\begin{proof} 
For $S = \{{\cal O}_i\}$, the space $\mathfrak{h}_S = span(S)$ consists of the zero operator together with operators of the form 
\begin{equation}
    {\cal O}(\beta, \hat{e}) = - \beta \hat{e} \cdot \vec{\cal O}  
\end{equation}
where $\hat{e}$ is a unit vector in $\mathbb{R}^n$ and $\beta \in (0,\infty]$. For fixed $\beta > 0$, this set of operators (which we denote by $\mathfrak{h}_\beta$) has the topology of a sphere $S^{n-1}$ in $\mathfrak{h}_S$. We define $E_{\beta} \in E_S$ to be the image of $\mathfrak{h}_\beta$ under the diffeomorphism ${\cal E}_S \circ df_S$ and $B_{\beta}$ to be the image of the ball bounded by $\mathfrak{h}_\beta$. To prove the lemma, we show that any point in the interior of $E_S$ lies in $B_\beta$ for sufficiently large $\beta$.

First, since the zero operator is inside the sphere $\mathfrak{h}_\beta$ and maps to $0 \in \mathbb{R}^n$ under ${\cal E}_S \circ df_S$, we have that 0 is an interior point of $B_\beta$. To show that $\lim_{\beta \to \infty} B_\beta = \text{int}(E_S)$, we will demonstrate that for any $\epsilon > 0$, there is some $\beta_\epsilon$ such that for $\beta > \beta_\epsilon$, every point in  $E_{\beta}$ is within $\epsilon$ of the boundary of $E_S$. 

In this case, for any point $P$ in the interior of $E_S$, the line segment $OP$ is in the interior of $E_S$ by the convexity of $E_S$. Letting $\Delta$ be the minimum distance from $OP$ to the boundary of $E_S$, we can choose $\epsilon = \Delta$ and $\beta > \beta_\epsilon$ to ensure that every point in $E_{\beta}$ is within $\epsilon$ of the boundary of $E_S$. Since $0 \in B_\beta$, $P$ must also be in $B_\beta$, otherwise $E_\beta$ would intersect $OP$, which is impossible since all points in $OP$ have distance $> \Delta$ to the boundary of $E_S$ and all points in $E_\beta$ have distance less than $\Delta$ to the boundary.

To complete the proof, let $R$ be the range of the set of all eigenvalues for all operators in $S$ and let
\begin{equation}
\beta_\epsilon = {2 \over \epsilon} \ln \left({ 2 N R \over \epsilon} \right) \; ,
\end{equation}
and suppose $\beta > \beta_\epsilon$.

Each point $\vec{x}$ in $E_\beta$, corresponds to some $\hat{e}$. For this $\hat{e}$, let $\{\lambda_i\}$ be the eigenvalues of $\hat{e} \cdot \vec{O}$ and $\lambda_{min}$ be the minimum eigenvalue. We recall from section \ref{sec:boundary} that $\hat{e} \cdot x = \lambda_{min}$ is a supporting hyperplane for $E_S$. We will show that for $\beta > \beta_\epsilon$, $\vec{x}$ is within $\epsilon$ of this supporting hyperplane. Since $\vec{x}$ is in the interior of $E_S$, and the nearest point $P$ on the supporting hyperplane is not in the interior of $E_S$, the perpendicular segment between $x$ and $P$ must contain a point on the boundary of $E_S$. The distance between $x$ and this boundary point must be less than $\epsilon$.

To show that $\vec{x} = {\cal E}_S \circ df_S({\cal O}(\beta,\hat{e}))$ is within $\epsilon$ of the supporting hyperplane, we demonstate that $|\hat{e} \cdot \vec{x} - \lambda_{min}| < \epsilon$. We have
\begin{eqnarray*}
|\hat{e} \cdot \vec{x} - \lambda_{min}| &=&\left|{\tr(\hat{e} \cdot \vec{O} e^{ -\beta \hat{e} \cdot \vec{O}}) \over \tr(e^{ -\beta \hat{e} \cdot \vec{O}})} -\lambda_{min} \right| 
\cr
&=& \left| \sum_i (\lambda_i- \lambda_{min}) {e^{- \beta \lambda_i} \over \sum_j e^{- \beta \lambda_j}} - \lambda_{min} \right| \cr
&=& \left| \sum_{\lambda_i < \lambda_{min}  +\epsilon/2 } (\lambda_i- \lambda_{min}) {e^{- \beta \lambda_i} \over \sum_j e^{- \beta \lambda_j}} + \sum_{\lambda_i \ge \lambda_{min}  +\epsilon/2 } (\lambda_i- \lambda_{min}) {e^{- \beta \lambda_i} \over \sum_j e^{- \beta \lambda_j}} \right| \cr
&<& \left| {\epsilon \over 2} \sum_{\lambda_i < \lambda_{min}  +\epsilon/2 }  {e^{- \beta \lambda_i} \over \sum_j e^{- \beta \lambda_j}} + R \sum_{\lambda_i \ge \lambda_{min}  +\epsilon/2 } {e^{- \beta (\lambda_{min}  +\epsilon/2))} \over \sum_j e^{- \beta \lambda_j}} \right| \cr
&\le& (\epsilon/2 + R e^{- \beta \epsilon/2} N) \cr
&<& \epsilon.
\end{eqnarray*}
\end{proof}

\section{The inverse problem}

In this section, we consider the following practical question
\begin{question}
    Given a set $S$ of n operators ${\cal O}_i$, and $\vec{e} \in \mathbb{R}^n$, determine whether $\vec{e} \in E_S$ and if so, determine the most general state $\rho$ with these expectation values.
\end{question}
Making use of the results of the previous section, we provide a useful approach to answering this. According to Theorem \ref{expthm}, $\vec{e} \in \text{int}(E_S)$ if and only if there there is a state $\rho_{\vec{\beta}}$ of the form (\ref{thermal}) with expectation values $\vec{e}$, so we can proceed by looking for such a state. 

An efficient approach is to consider the flow in the space of $\vec{\beta}$s associated with the negative gradient flow for the function $\Delta(\vec{E}) = (\vec{E} - \vec{e})^2/2$ in the space of expectation values. That is, we would like to define a flow $\partial_t \vec{\beta} = \vec{v}(\vec{\beta})$ such that 
\begin{equation}
    \label{Eflow}
    \partial_t E_i = -\partial_i \Delta= -( E_i - e_i)
\end{equation} 
and thus $\vec{E}(t) = \vec{e} + (\vec{E}_0 - \vec{e})e^{-t}$. This leads to:
\begin{theorem}
Consider the flow resulting from the equation 
\begin{equation}
\label{flow}
\partial_t \beta_i = -(J^{-1})_{ji} (E_j(\vec{\beta}) - e_j) \quad \text{with}\quad J_{ij} = \partial_i E_j \; .
\end{equation}
with some arbitrary initial $\vec{\beta}$. Then along the flow, we have $\Delta(\vec{E}(\vec{\beta}(t))) = \Delta_0 e^{-2t}$ and
\begin{enumerate}
    \item
    If $\vec{e} \in int(E_S)$,
the flow converges to the unique $\vec{\beta} = \vec{\beta}_\infty$ with $\vec{E}(\vec{\beta}_\infty) = \vec{e}$. 
\item
If $\vec{e} \in \partial E_S$, $|\vec{\beta}(t)|$ diverges for $t \to \infty$ and $\lim_{t \to \infty} \Delta(\vec{E}(\vec{\beta}(t))) = 0$.
\item
If $\vec{e} \notin \partial E_S$, $|\vec{\beta}(t)|$ diverges at some finite time $t_0$ and $\lim_{t \to t_0} \Delta(\vec{E}(\vec{\beta}(t))) > 0$.
\end{enumerate}
\end{theorem}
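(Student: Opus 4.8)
The plan is to reduce the whole problem to the geometry of a single straight segment in expectation-value space, and then read off the three cases from convexity of $E_S$ together with the fact (Theorem~\ref{expthm}) that $\vec{E}$ is a diffeomorphism of $\mathbb{R}^n$ onto $\mathrm{int}(E_S)$. First I would record the decay of $\Delta$. With $J_{ij}=\partial_i E_j$ symmetric and (negative definite, hence) everywhere invertible by Theorem~\ref{expthm}, the vector field $v_i(\vec{\beta})=-(J^{-1})_{ji}(E_j(\vec{\beta})-e_j)$ is smooth on all of $\mathbb{R}^n$, so a unique local solution exists by Picard--Lindel\"of. Along it,
\[
\partial_t E_k = (\partial_m E_k)\,\partial_t\beta_m = -J_{mk}(J^{-1})_{jm}(E_j-e_j)=-(E_k-e_k),
\]
using $J=J^{T}$ and $J^{-1}=(J^{-1})^{T}$. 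Hence $\vec{E}(\vec{\beta}(t))=\vec{e}+(\vec{E}_0-\vec{e})e^{-t}$ and $\Delta(\vec{E}(\vec{\beta}(t)))=\Delta_0 e^{-2t}$ on the maximal interval of existence, where $\vec{E}_0=\vec{E}(\vec{\beta}(0))\in\mathrm{int}(E_S)$.

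The decisive step is the reduction $\vec{\beta}(t)=\vec{E}^{-1}(\gamma(t))$, where $\gamma(t)=\vec{e}+(\vec{E}_0-\vec{e})e^{-t}$ traces the segment from $\vec{E}_0$ toward $\vec{e}$. A direct differentiation shows that $\tilde{\beta}(t)=\vec{E}^{-1}(\gamma(t))$ solves the same ODE with the same initial data wherever $\gamma(t)\in\mathrm{int}(E_S)$ (here one uses $D\vec{E}^{-1}=J^{-1}$ and $\dot\gamma=-(\gamma-\vec{e})$), so by ODE uniqueness the flow coincides with $\vec{E}^{-1}(\gamma(t))$ as long as the right-hand side is defined. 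Thus the behaviour of the flow in $\vec{\beta}$-space is completely governed by where the segment sits relative to $E_S$. Parametrising $s=e^{-t}\in(0,1]$, $\gamma=(1-s)\vec{e}+s\vec{E}_0$ is exactly the segment $[\vec{e},\vec{E}_0]$, and convexity of $E_S$ yields the trichotomy: (1) if $\vec{e}\in\mathrm{int}(E_S)$ the whole closed segment lies in $\mathrm{int}(E_S)$; (2) if $\vec{e}\in\partial E_S$ the half-open segment $(\vec{e},\vec{E}_0]$ lies in the interior with limit point $\vec{e}\in\partial E_S$; (3) if $\vec{e}\notin E_S$ the segment crosses $\partial E_S$ at a unique $s_*\in(0,1)$, i.e. at a finite time $t_0=-\ln s_*$.

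The one nontrivial input, which I expect to be the main obstacle, is a \emph{properness} statement: whenever $\gamma(t)$ converges to a point $p\in\partial E_S$, one must have $|\vec{\beta}(t)|\to\infty$. I would argue by contradiction: if some sequence $t_k$ had $\vec{\beta}(t_k)$ bounded, a subsequence would converge to $\vec{\beta}^{*}\in\mathbb{R}^n$, and continuity of $\vec{E}$ would force $\vec{E}(\vec{\beta}^{*})=p\in\partial E_S$, contradicting that the image of $\vec{E}$ is the \emph{open} interior of $E_S$ (Theorem~\ref{expthm}). This is precisely the fact that the diffeomorphism $\vec{E}$ does not extend continuously over the boundary, and it is what converts "$\gamma$ approaches $\partial E_S$" into blow-up of $\vec{\beta}$.

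With this lemma the three cases close quickly. In case (1) the segment is a compact subset of $\mathrm{int}(E_S)$, so $\vec{\beta}(t)=\vec{E}^{-1}(\gamma(t))$ stays bounded, giving global existence and, by continuity of $\vec{E}^{-1}$, convergence to $\vec{\beta}_\infty=\vec{E}^{-1}(\vec{e})$, the unique point with $\vec{E}(\vec{\beta}_\infty)=\vec{e}$. In case (2) the flow again exists for all $t$ (each $\gamma([0,T])$ is compact in the interior), but $\gamma(t)\to\vec{e}\in\partial E_S$, so the lemma gives $|\vec{\beta}(t)|\to\infty$ while $\Delta=\Delta_0 e^{-2t}\to0$. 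In case (3) the solution exists on $[0,t_0)$ with $\gamma(t)\to\gamma(t_0)\in\partial E_S$, so $|\vec{\beta}(t)|\to\infty$ at the finite time $t_0$, and $\lim_{t\to t_0}\Delta=\tfrac12|\gamma(t_0)-\vec{e}|^2>0$ since $\gamma(t_0)\neq\vec{e}$.
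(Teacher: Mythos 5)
Your proof is correct, and its backbone is the same as the paper's: identify the solution of (\ref{flow}) with $\vec{E}^{-1}(\gamma(t))$, where $\gamma(t)=\vec{e}+(\vec{E}_0-\vec{e})e^{-t}$ is the straight segment in expectation-value space, then read off the trichotomy from convexity of $E_S$ and the diffeomorphism property of Theorem~\ref{expthm}. Where you genuinely depart from the paper is the blow-up step. The paper argues that $|\vec{\beta}(t)|\to\infty$ by noting that the spheres $|\vec{\beta}|=R$ map under $\vec{E}$ to nested surfaces inside $E_S$, so a trajectory whose image approaches $\partial E_S$ must eventually lie outside every such surface; this is geometrically appealing but stated loosely (it tacitly uses that the image of each closed ball is a compact set at positive distance from $\partial E_S$). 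Your properness lemma replaces this with a direct compactness contradiction: a bounded subsequence $\vec{\beta}(t_k)$ would have a limit point $\vec{\beta}^{*}$, and continuity of $\vec{E}$ would force $\vec{E}(\vec{\beta}^{*})=p\in\partial E_S$, contradicting that the image of $\vec{E}$ is the \emph{open} set $\mathrm{int}(E_S)$. This is shorter, fully rigorous as stated, and excludes bounded subsequences, hence gives genuine divergence rather than mere unboundedness. You also make explicit two points the paper waves through as ``straightforward to check'': local existence and uniqueness (Picard--Lindel\"of) for the vector field, which is what licenses the identification of the flow with $\vec{E}^{-1}\circ\gamma$ on its maximal interval, and the fact that in case (3) the maximal interval of existence is exactly $[0,t_0)$ with $t_0=-\ln s_*$. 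In short, both write-ups follow the same route, but yours is tighter precisely at the one step where the paper's argument is heuristic.
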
 
\begin{proof}
    Since $\vec{E}(\vec{\beta})$ is a diffeomorphism, it has an inverse $\vec{\beta}(\vec{E})$ with 
    \begin{equation}
    \label{floweq}
        {\partial \beta_i \over \partial E_j }{\partial E_j \over \partial \beta_k } = \delta_{ik}.
    \end{equation}
    Using this, it is straightforward to check that 
    \begin{equation}
        \label{betasol}
        \vec{\beta}(t) = \vec{\beta}[\vec{e} + (\vec{E}(\vec{\beta}_0)- \vec{e})e^{-t}]
    \end{equation}
    solves the flow equation (\ref{flow}) and thus gives the solution to the initial value problem with $\vec{\beta}(0) = \vec{\beta}_0$. For this solution, $\vec{E}(t) = \vec{e} + (\vec{E}_0 - \vec{e})e^{-t}$ as desired, where $\vec{E}_0 = \vec{E}(\vec{\beta}_0)$, so we have $\Delta(\vec{E}(\vec{\beta}(t))) = \Delta_0 e^{-2t}$. Now consider the various cases in the proposition.
    \begin{enumerate}
        \item 
        For any initial point $\vec{E}_0$, the trajectory of the flow (\ref{Eflow}) is the linear path from $\vec{E}_0$ to $\vec{e}$. Since $E_S$ is convex, if $\vec{e} \in int(E_S)$ and $\vec{E}_0 \in int(E_S)$, the entire flow $\vec{E}(t)$ lies within $int(E_S)$. The solution (\ref{betasol}) is the image of this flow under the diffeomorphism $\vec{\beta}(\vec{E}):int(E_S) \to \mathbb{R}^n$, so is well-defined for all $t$ and approaches $\vec{\beta}(\vec{e}) \equiv \vec{\beta}_\infty$ for $t \to \infty$ by the continuity of the map $\vec{\beta}(\vec{E})$. 
        \item 
        If $\vec{e} \in \partial E_S$ and $\vec{E}_0 \in E_S$, again the entire flow $\vec{E}(t)$ lies within $E_S$ and (\ref{betasol}) is sensible for all $t \in [0,\infty)$. Thus,  $\Delta(\vec{E}(\vec{\beta}(t))) = \Delta_0 e^{-2 t}$ has a limit of $0$ for $t \to \infty$. Under the diffeomorphism $\vec{E}(\beta)$, the concentric spheres $|\vec{\beta}| = R$ for increasing $R$ map to concentric surfaces inside $E_S$, so for the flow $\vec{E}(t)$ which reaches the boundary of $E_S$ $\vec{\beta}(\vec{E})$ must eventually remain outside each surface $|\vec{\beta}| = R$. Thus $|\vec{\beta}(t)|$ diverges for $t \to \infty$.
        \item 
        If $\vec{e} \notin \partial E_S$ and $\vec{E}_0 \in E_S$, the linear segment between $\vec{e}$ and $\vec{E}_0$ must contain a point $p$ on the boudnary of $E_S$ in its interior. Thus, the trajectory (\ref{Eflow}) reaches the boundary of $E_S$ at some finite time $t_0$ and by the observations in the previous case, $|\vec{\beta}|$ must diverge here. In this case, $\vec{E}_p \ne \vec{e}$, so $\Delta(\vec{E_p}) \ne 0$ and $\lim_{t \to \infty} \Delta(\vec{E}(\vec{\beta}(t)) > 0$.  
    \end{enumerate}
\end{proof}
In practice, to find a state with specified expectation values, we can implement this flow numerically by discretizing the flow equation with some chosen time step. The quantities appearing on the right side of (\ref{floweq}) can be calculated explicitly by finding eigenvectors and eigenvalues $\{(|n \rangle, \lambda_n)\}$ for $M \equiv \vec{\beta} \cdot {\cal O}$ and calculating
\begin{eqnarray*}
    Z &=& \tr(e^{-M}) = \sum_n e^{-\lambda_n} \cr
    E_i &=& {1 \over Z} \tr({\cal O}_i e^{-M}) 
    = {1 \over Z} \sum_n e^{-\lambda_n} \langle n | {\cal O}_i | n \rangle \cr J_{ij} &=& -{1 \over Z}\int_0^1 ds \tr({\cal O}_i e^{-sM} {\cal O}_j e^{-(1-s)M}) + E_i E_j \cr
    &=& -{1 \over Z}\sum_{m,n} \langle m| {\cal O}_i | n \rangle \langle n| {\cal O}_j | m \rangle \int_0^1 ds \tr({\cal O}_i e^{-s \lambda_{m}} {\cal O}_j e^{-(1-s)\lambda_n }) + {1 \over Z^2} \tr({\cal O}_i e^{-M}) \tr({\cal O}_j e^{-M}) \cr
    &=& -{1 \over Z} \sum_{m,n} \langle m| {\cal O}_i | n \rangle \langle n| {\cal O}_j | m \rangle h(\lambda_m,\lambda_n) + E_i E_j
\end{eqnarray*}
where
\begin{equation}
    h(\lambda_m,\lambda_n) = \left\{ \ba{cc} {e^{-\lambda_m} - e^{- \lambda_n} \over \lambda_n - \lambda_m} & \lambda_m \ne \lambda_n \cr
    e^{-\lambda_n} & \lambda_m = \lambda_n \ea \right..
\end{equation}
We have have implemented the flow (\ref{floweq}) numerically using these expressions; Figure \ref{fig:flowplot}s provides an example shows the resulting flows starting from various $\vec{\beta}$.

There are various alternative flows we could consider with similar results. If $C(\vec{E})$ is any function on the space of expectation values with a unique minimum at $\vec{E} = \vec{e}$ and no other critical points, the function $C(\vec{E}(\vec{\beta}))$ will have a critical point if and only if $\vec{e} \in int(E_S)$, and the critical point will be a global minimum in this case. Thus, we could numerically implement the gradient flow for the function $C(\vec{E}(\vec{\beta}))$. The recent work \cite{enhanced_neg_energ} provides a detailed example where this method is employed to understand the space of allowed energy distributions in a quantum spin chain; there, minimization of the function $C(\vec{E}(\vec{\beta}))$ was carried out by a combination of gradient descent and Newton method algorithms. 

For $\vec{e} \in \text{int}(E_S)$, the method above provides a state $\rho_{\vec{\beta}_\infty}$ of the form (\ref{thermal}) with expectation values $\vec{e}$. Starting from this state, it is straightforward to describe the most general state with these expectation values. We define $(\mathfrak{h}_S)^\perp$ to be the subspace of $\mathfrak{h}_0$ orthogonal to $\mathfrak{h}_S = \text{span}(S)$. Then we have
\begin{proposition}
\label{generalstate}
  For a set $S$ of operators and a set of expectation values $\vec{e}$ in the interior of $E_S$, the most general quantum state that has these expectation values is
\begin{equation}
    \rho = {e^{- \vec{\beta}(\vec{e}) \cdot \vec{\cal O}} \over \tr(e^{- \vec{\beta}(\vec{e}) \cdot \vec{\cal O}})} + \rho^\perp
\end{equation}
where $\vec{\beta}(\vec{e})$ is the inverse of the map (\ref{Ebeta}) and $\rho^\perp$ lies in the subset of $(\mathfrak{h}_S)^\perp$ determined by requiring that $\rho$ is non-negative. 
\end{proposition}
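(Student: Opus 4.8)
The plan is to reduce the statement to an orthogonal decomposition of the space of Hermitian operators with respect to the trace inner product $\langle A, B \rangle = \tr(A B)$, working under the standing assumption of this section that $S = \{{\cal O}_i\}$ is a set of linearly independent traceless operators so that $\vec{\beta}(\vec{e})$ is well defined. First I would write $\mathfrak{h} = \mathbb{R}\identity \oplus \mathfrak{h}_S \oplus (\mathfrak{h}_S)^\perp$, where $\mathfrak{h}_S = \text{span}(S) \subset \mathfrak{h}_0$ and the inner product restricts to a positive definite, hence non-degenerate, form on $\mathfrak{h}_S$. A general density matrix then decomposes uniquely as $\rho = \frac{1}{N}\identity + \rho_S + \rho^\perp$ with $\rho_S \in \mathfrak{h}_S$ and $\rho^\perp \in (\mathfrak{h}_S)^\perp$. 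The key observation is that expectation values see only the middle piece: since each ${\cal O}_i$ is traceless and orthogonal to $(\mathfrak{h}_S)^\perp$, we have $\tr(\rho {\cal O}_i) = \tr(\rho_S {\cal O}_i)$.

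The next step is a uniqueness argument for the $\mathfrak{h}_S$-component. Let $\rho$ be any state with expectation values $\vec{e}$ and let $\rho_{\vec{\beta}(\vec{e})}$ be the thermal representative, which exists because $\vec{e} \in \text{int}(E_S)$ lets us invert the diffeomorphism of Theorem \ref{expthm}. By the previous observation, the $\mathfrak{h}_S$-components obey $\tr((\rho_S - (\rho_{\vec{\beta}(\vec{e})})_S){\cal O}_i) = 0$ for all $i$; since $\{{\cal O}_i\}$ spans $\mathfrak{h}_S$ and the form is non-degenerate there, this forces $\rho_S = (\rho_{\vec{\beta}(\vec{e})})_S$. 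The identity-components agree automatically because both operators have unit trace. Hence $\rho - \rho_{\vec{\beta}(\vec{e})} \in (\mathfrak{h}_S)^\perp$, which is exactly the asserted form $\rho = \rho_{\vec{\beta}(\vec{e})} + \rho^\perp$.

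For the converse I would verify that adding any $\rho^\perp \in (\mathfrak{h}_S)^\perp$ to the thermal state preserves every defining property of a state except possibly positivity: $\rho^\perp$ is Hermitian and traceless, so $\rho$ stays Hermitian with $\tr(\rho) = 1$, and $\tr(\rho^\perp {\cal O}_i) = 0$ by orthogonality, so the expectation values are unchanged. The states with expectation values $\vec{e}$ are therefore precisely those of the stated form for which $\rho$ is non-negative, and since positivity is an affine-linear constraint this cuts out a convex subset of $(\mathfrak{h}_S)^\perp$, nonempty because $\rho^\perp = 0$ returns the manifestly positive thermal state. The argument is entirely linear-algebraic and I expect no real obstacle; the one point demanding care is that $(\mathfrak{h}_S)^\perp$ must be taken inside the traceless subspace $\mathfrak{h}_0$ and not inside all of $\mathfrak{h}$, so that perturbations along the identity---which would violate the unit-trace condition---are excluded from the start.
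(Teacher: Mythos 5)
Your proposal is correct and takes essentially the same approach as the paper's own (much terser) proof: both rest on the observation that any two unit-trace states with identical expectation values for $S$ differ by a traceless operator orthogonal to every ${\cal O}_i$, hence by an element of $(\mathfrak{h}_S)^\perp$. Your write-up simply makes explicit the orthogonal decomposition $\mathfrak{h} = \mathbb{R}\identity \oplus \mathfrak{h}_S \oplus (\mathfrak{h}_S)^\perp$ and the converse verification that the paper leaves implicit, including the correct caution that the complement must be taken inside $\mathfrak{h}_0$.
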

\begin{proof}
For ${\cal O} \in S$, we have $\tr(\rho_1 {\cal O}) = \tr(\rho_2 {\cal O})$ so $\tr((\rho_1 - \rho_2) {\cal O}) = 0$. Thus, $(\rho_1 - \rho_2)$ is orthogonal to each operator in $S$ so is in $(\mathfrak{h}_S)^\perp$.
\end{proof}
Any specific direction away from the origin in $(\mathfrak{h}_S)^\perp$ corresponds to a one-parameter family of operators $\rho^\perp = \lambda {\cal O}^\perp$ for ${\cal O}^\perp \in (\mathfrak{h}_S)^\perp$. The allowed range of $\lambda$ for such a direction is the interval between the largest negative root and the smallest positive root of $p(\lambda) = \det(\rho_{\vec{\beta}(\vec{e})} + \lambda {\cal O}^\perp)$.

\subsubsection*{Application to the quantum marginal problem}

A particular example of Question 1 is the Quantum Marginal problem. For a  Hilbert space ${\cal H} = {\cal H}_A \otimes {\cal H}_B$
we can ask whether there is a state $\rho \in {\cal H}$ such that the reduced density operators $ \tr_B \rho = \rho_A$ and $\tr_A \rho = \rho_B$ for some specified $\rho_A$ and $\rho_B$. More generally, we can have a multipart Hilbert space and specify the reduced density operators for some or all of the proper subsystems. For the two-part case, the quantum marginal problem is equivalent to the question of whether $\vec{x} \in E_S$, where 
\begin{equation}
    S = \{\tilde{\cal T}^A_i \otimes \identity\} \cup \{\identity \otimes \tilde{\cal T}^B_i \} 
\end{equation}
and
\begin{equation}
    \vec{x} = (\tr(\rho_A \tilde{\cal T}^A_1), \dots, \tr(\rho_A \tilde{\cal T}^A_{N_A}),\tr(\rho_B \tilde{\cal T}^B_1), \dots, \tr(\rho_A \tilde{\cal T}^B_{N_B})  ) \; .
\end{equation}

\section{Uncertainty relations and positivity constraints}

Uncertainty relations in quantum mechanics, for example the Generalized Uncertainty principle (\ref{Uncert}), can be expressed as polynomial inequalities in the space of expectation values. All points in $E_S$ must satisfy these constraints. 

For the case of a two-dimensional Hilbert space, the uncertainty relation (\ref{Uncert}) applied to orthogonal operators $\hat{e}_1 \cdot \vec{\sigma}$ and $\hat{e}_2 \cdot \vec{\sigma}$ gives the relation
\begin{equation}
    \sum x_i^2  \le 1 +  (\hat{e}_1 \cdot \vec{x})^2 (\hat{e}_2 \cdot \vec{x})^2 \; .
\end{equation}
In this case, the set $E_S$ (described by $\sum x_i^2  \le 1$) is clearly contained in the set specified by this inequality, but we also have that the intersection of all such sets for different choices of $\hat{e}_i$ is exactly $E_S$, since the inequality implies in particular that $(\hat{e}_3 \cdot \vec{x})^2 \le 1$ for $\hat{e}_3 = \hat{e}_1 \times \hat{e}_2$. This motivates the question of whether $E_S$ in more general cases is determined by a collection of uncertainty relations.

To proceed, we recall that the uncertainty relation (\ref{Uncert}) can be understood as a special case of a more general set of relations that follow from the simple inequality
\begin{equation}
    \tr(\rho ({\cal O} - \bar{\cal O})^2) \ge 0 \qquad \qquad \bar{\cal O} = \tr(\rho {\cal O})
\end{equation}
which holds for any Hermitian operator ${\cal O}$. Taking ${\cal O} = v_i {\cal O}_i$ for some set of linearly independent Hermitian operators, we have that $\tr(\rho ({\cal O}_i - \bar{\cal O}_i)({\cal O}_j - \bar{\cal O}_j)) v_i v_j$ must be nonnegative for any $v_i$ so the matrix 
\begin{equation}
    \label{defMO}
    M_{ij} = \tr(\rho {\cal O}_i {\cal O}_j) - \tr(\rho {\cal O}_i) \tr(\rho {\cal O}_j)
\end{equation}
must be non-negative. Taking a pair of operators $\{{\cal O}_1,{\cal O}_2\}$, the non-negativity  of $M$ implies the non-negativity of the determinant, so we have
\begin{equation}
\left| \begin{array}{cc} \tr(\rho {\cal O}_1 {\cal O}_1) - \tr(\rho {\cal O}_1 ) \tr(\rho  {\cal O}_1) & \tr(\rho {\cal O}_1 {\cal O}_2) - \tr(\rho {\cal O}_1 ) \tr(\rho  {\cal O}_2) \cr \tr(\rho {\cal O}_2 {\cal O}_1) - \tr(\rho {\cal O}_2 ) \tr(\rho  {\cal O}_1) & \tr(\rho {\cal O}_2 {\cal O}_2) - \tr(\rho {\cal O}_2 ) \tr(\rho  {\cal O}_2) \end{array} \right| \ge 0.
\end{equation}
Expanding this out using ${\cal O}_1 {\cal O}_2 =  {1 \over 2} [{\cal O}_1, {\cal O}_2] + {1 \over 2} \{{\cal O}_1, {\cal O}_2\}$, we find 
\begin{align}\label{eq:unc_prin}
(\Delta {\cal O}_1)^2(\Delta {\cal O}_2)^2 \ge \left|\frac{1}{2}i\langle[{\cal O}_1,{\cal O}_2]\rangle\right|^2 +
    \left|\frac{1}{2}\langle\{{\cal O}_1,{\cal O}_2\}\rangle-\langle {\cal O}_1\rangle\langle {\cal O}_2\rangle\right|^2
\end{align}
where $\langle {\cal O} \rangle \equiv \tr(\rho {\cal O})$ and $(\Delta {\cal O})^2 \equiv \langle {\cal O}^2 \rangle - \langle {\cal O} \rangle^2$. The generalized uncertainty principle is an immediate consequence of this somewhat stronger bound. Thus, the non-negativity of $M_{ij}$ gives a generalization of the uncertainty relations.

We will now show that the non-negativity of $M_{ij}$ for a basis of traceless Hermitian operators ${\cal O}_i$ can be translated to a set of polynomial inequalities on $x_i = \tr(\rho {\cal O}_i)$ and that the complete set of these inequalities completely determines the region $E_S$ in this case. 

Taking $S= \{{\cal O}_i\}$ to be a basis of traceless Hermitian matrices, we have that any operator can be expressed as a linear combination of operators in $S \cup \{\identity\}$, so we can write
\begin{equation}
\label{defZ}
{\cal O}_i {\cal O}_j = Z^k_{ij} {\cal O}_k + {1 \over N}  g_{ij} \identity
\end{equation}
where $Z^k_{ij} = (Z_k^{ji})^*$ are complex coefficients. Then 
\begin{equation}
\label{defM}
    M_{ij} = Z^k_{ij} x_k + g_{ij} - x_i x_j \; . 
\end{equation}
For example, in a two dimensional Hilbert space with $S= \{\sigma_x, \sigma_y, \sigma_z\}$, we have
\begin{equation}
    M = \left( \begin{array}{ccc}
        1 - x^2 & iz - xy & -iy - xz \cr
         -iz - xy & 1 - y^2& -i - yz \cr
        iy - xz & -iz - yz & 1 - z^2
    \end{array} \right).
\end{equation}
Next, we can obtain a polynomial inequality by the requirement that any subdeterminant of $M$ is non-negative, or that $M_{ij} v_i v_j$ is non-negative for any $v$. In the two-dimensional example, the latter constraint (or the $1 \times 1$ subdeterminants) give $(\hat{e} \cdot x)^2 \le 1$ which together completely determine the Block sphere. Alternatively, non-negativity of any $2 \times 2$ subdeterminant gives $x^2 + y^2 + z^2 \le 1$.

We now show that the positivity of $M$ (equivalent to the set of polynomial relations that can be derived from this) completely determine $E_S$ for general dimentions.
\begin{theorem}
\label{uncert}
    Let $S= \{{\cal O}_i\}$ be a basis of traceless Hermitian operators and let $M_{ij}(\vec{x})$ be defined as in (\ref{defZ},\ref{defM}). Then $\vec{x} \in E_S$ if and only if $M_{ij}(\vec{x})$ is non-negative.
\end{theorem}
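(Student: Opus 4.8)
The plan is to reduce the statement to a clean linear‑algebra fact about a single Hermitian operator. Since $S = \{\mathcal{O}_i\}$ is a full basis of the $(N^2-1)$-dimensional space of traceless Hermitian operators, a vector $\vec{x}$ of expectation values determines a \emph{unique} unit-trace Hermitian operator $\rho(\vec{x})$ with $\tr(\rho(\vec{x})\mathcal{O}_i) = x_i$ for all $i$; explicitly $\rho(\vec{x})$ is $\identity/N$ plus the traceless operator dual to $\vec{x}$. Because any genuine state reproducing $\vec{x}$ must equal this candidate, we have the equivalence $\vec{x}\in E_S \iff \rho(\vec{x})\ge 0$. The first step is therefore to recognize that the theorem is the assertion $\rho(\vec{x})\ge 0 \iff M(\vec{x})\ge 0$.

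Next I would record the operator identity underlying $M$. Setting $A_i = \mathcal{O}_i - x_i\identity$, the derivation of (\ref{defM}) used only $\tr\rho = 1$ and the algebra (\ref{defZ}), both of which hold for the candidate $\rho(\vec{x})$ whether or not it is positive; hence
\begin{equation}
M_{ij}(\vec{x}) = \tr\!\big(\rho(\vec{x})\,A_i A_j\big), \qquad \tr\!\big(\rho(\vec{x})A_i\big) = 0 .
\end{equation}
For any complex vector $v$, writing $B = \sum_j v_j A_j$ and using $A_i = A_i^\dagger$ gives $v^\dagger M v = \tr(\rho B^\dagger B)$. This immediately yields the ``only if'' direction: if $\rho(\vec{x})\ge 0$ then $\tr(\rho B^\dagger B)\ge 0$ for every $B$, so $M\ge 0$ and $\vec{x}\in E_S$.

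For the converse I would first identify the complex span $V = \mathrm{span}_{\mathbb{C}}\{A_i\}$. From $\tr(\rho A_i) = 0$ we get $V\subseteq\{B : \tr(\rho B)=0\}$, and since the $A_i$ are linearly independent $\dim_{\mathbb{C}} V = N^2-1$, which matches the dimension of the kernel of the nonzero functional $B\mapsto\tr(\rho B)$; therefore $V = \{B : \tr(\rho B)=0\}$, the trace-orthogonal complement of $\rho$. The theorem thus reduces to the lemma: a unit-trace Hermitian $\rho$ is non-negative iff $\tr(\rho B^\dagger B)\ge 0$ for all $B$ with $\tr(\rho B)=0$. To prove the nontrivial direction by contraposition, suppose $\rho$ has a negative eigenvalue $\mu_1<0$ with unit eigenvector $|e_1\rangle$, choose any nonzero $|w\rangle\perp|e_1\rangle$ (possible since $N\ge 2$), and set $B = |w\rangle\langle e_1|$. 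Then $B^\dagger B = \|w\|^2\,|e_1\rangle\langle e_1|$ gives $\tr(\rho B^\dagger B) = \mu_1\|w\|^2 < 0$, while $\tr(\rho B) = \mu_1\langle e_1|w\rangle = 0$, so $B\in V$ exhibits $v^\dagger M v<0$ and $M\not\ge 0$.

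The step I expect to be the crux is the identification $V = \{B : \tr(\rho B)=0\}$: the bilinear form $v^\dagger M v$ only probes $B$ in the span of the given operators, so the violating rank-one $B$ must be shown to lie in that span. The dimension count together with the orthogonality relations $\tr(\rho A_i)=0$ is exactly what guarantees this, and it is also what forces the constraint $\tr(\rho B)=0$ on the admissible $B$ — which is precisely the constraint the construction $B=|w\rangle\langle e_1|$ with $w\perp e_1$ is engineered to meet. Everything else is routine.
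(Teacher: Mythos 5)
Your proof is correct, and its overall skeleton matches the paper's: both reduce the theorem to the statement that the unique unit-trace Hermitian operator $\rho_x$ reproducing $\vec{x}$ is positive semidefinite, and both exploit the fact that $M_{ij}(\vec{x}) = \tr(\rho_x A_i A_j)$ with $A_i = {\cal O}_i - x_i \identity$ holds regardless of positivity; the forward direction is identical. Where you genuinely diverge is the converse. The paper argues directly: for each eigenvalue $\lambda$ of $\rho_x$ with unit eigenvector $v$, it expands the \emph{traceless Hermitian} operator $v v^\dagger - \identity/N$ in the basis $S$ with real coefficients $\alpha_i$ and computes $M_{ij}\alpha_i\alpha_j = \lambda - \lambda^2 \ge 0$, so every eigenvalue lies in $[0,1]$ with no further apparatus --- membership of the test operator in $\mathrm{span}_{\mathbb{R}}(S)$ is automatic because it is traceless Hermitian and $S$ is a basis. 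You instead argue by contraposition with the non-Hermitian rank-one operator $B = |w\rangle\langle e_1|$, which forces you to prove the span identification $\mathrm{span}_{\mathbb{C}}\{A_i\} = \{B : \tr(\rho_x B) = 0\}$ by dimension counting; that count in turn rests on the $\mathbb{C}$-linear independence of the $A_i$, which you assert rather than prove (it is routine: $\{\identity\} \cup S$ is an $\mathbb{R}$-basis of the Hermitian operators, hence a $\mathbb{C}$-basis of all matrices, and the change of basis to $\{\identity\} \cup \{A_i\}$ is triangular). The trade-off: the paper's test operator is more economical and yields the sharper conclusion $\lambda \in [0,1]$, while your route isolates a clean reusable lemma --- a unit-trace Hermitian $\rho$ is non-negative iff $\tr(\rho B^\dagger B) \ge 0$ for all $B$ with $\tr(\rho B) = 0$ --- and makes structurally explicit exactly which operators the quadratic form $v^\dagger M v$ can probe.
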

\begin{proof}
    If $\vec{x} \in E_S$, there is a state $\rho$ with $\tr(\rho {\cal O}_i)= \vec{x}_i$. For this state, $M_{ij} = \tr(\rho {\cal O}_i {\cal O}_j) - \tr(\rho {\cal O}_i) \tr(\rho {\cal O}_j)$ which we have already shown is non-negative. Now suppose $M_{ij}(\vec{x})$ is non-negative for some $\vec{x}$ and consider the operator $\rho_x = {1 \over N} \identity + \sum_i x_i \tilde{\cal O}_i$ where $\{\tilde{\cal O}_i\}$ is the dual basis of traceless Hermitian operators satisfying $\tr({\cal O}_i \tilde{\cal O}_j) = \delta_{ij}$. It is easy to check that $\tr(\rho_x {\cal O}_i) = x_i$, so $\vec{x} \in E_S$ provided that $\rho_x$ is a valid state. 
    The operator $\rho_x$ is Hermitian with unit trace by construction, so it remains to prove that $\rho_x$ is non-negative. Suppose that $v$ is a unit-normalized eigenvector of $\rho_x$ with eigenvalue $\lambda$. The operator ${\cal O}_v = v v^\dagger - \identity/N$ is Hermitian and traceless, so we can write it as ${\cal O}_v = \sum_i \alpha_i {\cal O}_i$ for some coefficients $\alpha_i$. Then by positivity of $M(x)$, we have
    \begin{eqnarray*}
        0 &\le & M_{ij}(\vec{x}) \alpha_i \alpha_j \cr
        &=& [\tr(\rho_x {\cal O}_i {\cal O}_j) - \tr(\rho_x {\cal O}_i) \tr(\rho_x {\cal O}_j)] \alpha_i \alpha_j \cr
        &=& \tr(\rho_x (v v^\dagger - {\identity \over N}) (v v^\dagger - {\identity \over N})) - \tr^2(\rho_x (v v^\dagger - {\identity \over N})) \cr
        &=& \lambda - \lambda^2.
    \end{eqnarray*}
    It follows that $\lambda$ is non-negative, so $\rho_x$ is a non-negative operator. 
\end{proof}
If $S$ is a smaller set of operators that is not a complete basis for the traceless Hermitian operators, it won't generally be true that the positivity of $M$ restricted to this subset (\ref{defMO}) completely determines $E_S$, but we could in principle make use of the positivity constraints for a basis $S_+$ of operators that includes $S$ to determine $E_{S_+}$ and then obtain $E_S$ by projection. In practice, we expect the characterizations of $E_S$ given in the previous sections will usually be more efficient.

\section*{Acknowledgements}
We would like to thank Jim Bryan and Brian Swingle for helpful discussions. We acknowledge support
from the National Science
and Engineering Research Council of Canada (NSERC) and the Simons foundation via a
Simons Investigator Award and the “It From Qubit” collaboration grant.

\appendix

\section{Algebraic characterization of $E_{\tilde{T}}$}

In this appendix, we derive explicit equations that determine $E_{\tilde{T}}$ as a subset of $\mathbb{R}^{N^2 -1}$.

To write the equations, it will be convenient to define coefficients $s_{abc}$ totally symmetric in $abc$ via
\begin{equation}
\label{defsabc}
   {1 \over 2} (\tilde{T}_a \tilde{T}_b + \tilde{T}_b \tilde{T}_a)  = N(\delta_{ab} \identity + s_{abc} \tilde{T}_c) \; .
\end{equation}

We can obtain equations characterizing  $\hat{E}_{\tilde{T}}$ in a few different ways.

\subsection{Equations from $\rho^2 = \rho$.}

Using the parameterization (\ref{defrho}) in the basic condition $\rho^2 = \rho$,  we find the operator equation
\begin{equation}
{1 \over N} (\identity + x_a \tilde{T}_a) {1 \over N} (\identity + x_b \tilde{T}_b) = {1 \over N} (\identity + x_c \tilde{T}_c) \; .
\end{equation}
Taking the components of this equation along the various $T$ basis directions gives our desired equations. The identity component (obtained by taking the trace of this equation) gives
\begin{equation}
\label{eq1}
\sum_a x_a^2 = N-1 \; .
\end{equation}
while the $\tilde{T}_a$ component (obtained via multiplying the equation by $\tilde{T}_a$ and taking the trace) gives
\begin{equation}
\label{eq2}
\left(1 - {2 \over N} \right) x_a = \sum_{b,c} s_{abc} x_b x_c \; .
\end{equation}
The equations (\ref{eq1}) and (\ref{eq2}) specify $\hat{E}_{\tilde{T}}$ though generally some of these are redundant. 

\subsection{Equations from the characteristic polynomial.}

For a projection operator with trace 1, the eigenvalues are $\{1,0,\dots,0\}$. So the characteristic polynomial is $\det(\lambda \identity - \rho ) = \lambda^{N-1} (\lambda - 1)$. Writing $\rho$ explicitly using the coordinate representation above and equating the coefficients of $\lambda^{N-1}, \dots \lambda^1, \lambda^0$ gives a set of $N$ equations that should characterize the surface $E_{pure}$.

More explicitly, defining $1/q = \lambda - {1 \over N}$ and using $\det(M) = \exp(\tr(\ln(M)))$, we have 
\begin{equation}
\exp(-\sum_n {1 \over n} \left({q \over N}\right)^n \tr((\sum_a x_a \tilde{T}_a)^n))  = (1 + {q \over N})^{N-1} (1 + {q \over N} - q).
\end{equation}
Equating the coefficients of $q^n$ on each side, we obtain a set of homogeneous equations 
\begin{equation}
    S_{a_1 \cdots a_M} x_{a_1} \cdots x_{a_M} = C_M \qquad \qquad M = 2, \cdots, N
\end{equation}
where 
\begin{equation}
    S_{a_1 \cdots a_M} x_{a_1} \cdots x_{a_M} = \sum_{\sum_n n k_n = M} (-1)^{\sum_n k_n} \prod_{n=2} {1 \over k_n! n^{k_n}} \left(\tr(x_a \tilde{T}_a^n) \right)^{k_n} 
\end{equation}
and 
\begin{equation}
    C_M = N(N-1)\dots(N-M+1) \; .
\end{equation}
The completely symmetric tensors $S_{a_1 \cdots a_n}$ can be expressed in terms of $s_{abc}$ and $\delta_{ab}$ using (\ref{defsabc}). 

For example, the equation for $n=2$ reproduces (\ref{eq1}), the equation for $n=3$ gives
\begin{eqnarray}
&&\tr((\sum_a x_a \tilde{T}_a)^3) = N(N-1)(N-2)\cr \implies &&N^2 s_{abc} x_a x_b x_c = N(N-1)(N-2)
\end{eqnarray}
and the equation for $n=4$ gives
\begin{eqnarray*}
&& \tr((\sum_a x_a \tilde{T}_a)^4) - \tr^2((\sum_a x_a \tilde{T}_a)^2) = N(N-1)(N-2)(N-3) \cr
\implies && (N^2(2N - 1) \delta_{ab} \delta_{cb} + 2 N^3 s_{abe} s_{cde}) x_a x_b x_c x_d  = N(N-1)(N-2)(N-3).
\end{eqnarray*}
Apparently, the collection of 2nd, 3rd, $\dots$, and $N$th order equations here are equivalent to the larger set of quadratic equations obtained from $\rho^2 = \rho$.

\subsection{Equations from vanishing $2 \times 2$ subdeterminants.}

Another equivalent set of equations are the equations that require the determinant of each $2 \times 2$ submatrix to vanish (required for rank 1). This gives for any $k > i$ and $l > j$ a quadratic equation
\begin{equation}
    (\identity + x_a \tilde{T}_a)_{ij} (\identity + x_a \tilde{T}_a)_{kl} - (\identity + x_a \tilde{T}_a)_{il} (\identity + x_a \tilde{T}_a)_{kj} = 0 \; ,
\end{equation}
where we have chosen some matrix representation for the operators $\tilde{T}$. As for the $\rho^2 = \rho$ equations, many of these are redundant.

\subsection{Equations in the canonical basis}

We can write a somewhat more explicit set of equations by choosing a basis $\hat{T}$ for $\mathfrak{h}$ such that we have the matrix representation
\begin{equation}
    \rho = \sum_a x_a \hat{T}_a =
    \left( \begin{array}{cccc} z_{11} & z_{12} & \dots & z_{1N} \cr z_{21} & z_{22} & \dots &z_{2 N}
    \cr
    \vdots & & \ddots & \cr
    z_{N1} & z_{N2} & \dots & z_{NN} \end{array} \right)
\end{equation}
where $z_{ii} = w_i$, $z_{ij} =  (u_{ij} - i v_{ij})/\sqrt{2}$ for $i<j$ and $z_{ij} =  (u_{ij} + i v_{ij})/\sqrt{2}$ for $i>j$. We are identifying $\{x_a\} = \{w_i,u_{ij},v_{ij}\}$. Here, we are no longer choosing the identity as one of the basis operators and we have the normalization
\begin{equation}
    \tr(\hat{T}_a \hat{T}_b) = \delta_{ab} \; .
\end{equation}

In this case, $\rho^2 = \rho$ together with $\tr(\rho) = 1$ give equations
\begin{equation}
    \sum_{j} z_{ij} z_{jk} = z_{ik} \qquad \sum_{i} z_{ii} = 1
\end{equation}
that characterize $\hat{E}_{\hat{T}}$ as a subset of $\mathbb{R}^{N^2}$. 

Alternatively, the vanishing of $2 \times 2$ subdeterminants together with $\tr(\rho) = 1$ give
\begin{equation}
    z_{ij} z_{kl} - z_{il} z_{kj} = 0 \qquad \sum_{i} z_{ii} = 1
\end{equation}
for $j<i$ and $k < l$. 

Finally, the condition $\det(\lambda \identity - \rho) = \lambda^{N-1} (\lambda - 1)$ give equations
\begin{equation}
    \sum_{i} z_{ii} = 1
\end{equation}
and 
\begin{equation}
    \sum_{\sum_n n k_n = M} (-1)^{\sum_n k_n} \prod_{n=1} {1 \over k_n! n^{k_n}} \left(\tr(\rho^n) \right)^{k_n} = 0
\end{equation}
for $M \ge 2$, where $\tr(\rho^n) = z_{i_1 \i_2} z_{i_2 i_3} \cdots z_{i_n i_1}$ with index summation implied.

\section{The Eigenset of $E_S$.}

\begin{figure}
    \centering
    \includegraphics[width = .9 \textwidth]{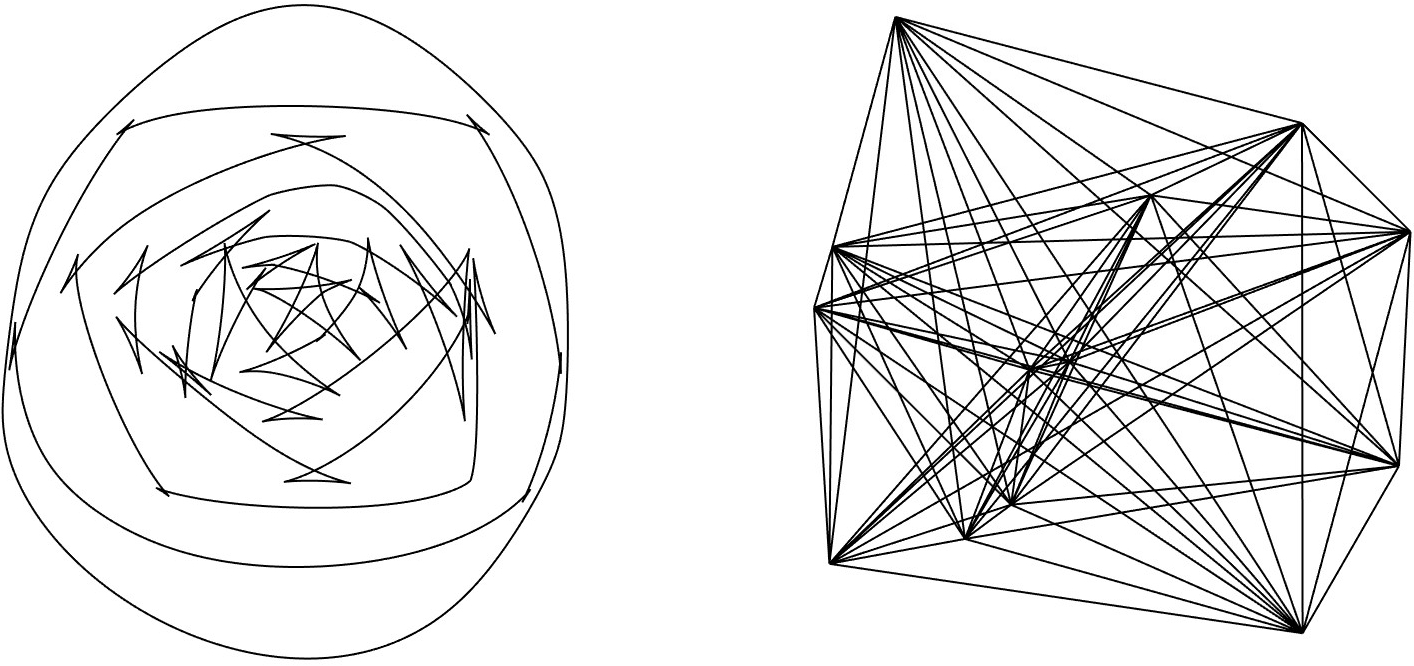}
    \caption{The set $V_S$ defined in (\ref{ESset}) for a typical pair of operators (left) and a pair of commuting operators (right) acting ${\cal H}$ with dimension 13. In each case, the outside edge of the figure is $\partial E_S$.}
    \label{fig:Vplot}
\end{figure}

We have defined $E_S$ as the set of possible expectation values for a set of $n$ Hermitian operators when considering all possible states. 

As an aside, it is interesting to define a subset of this, where we restrict to states that commute with some linear combination $\hat{e} \cdot \vec{O}$ of operators in $S$. These are states of the form $\rho = \sum_{ab} \rho_{ab} v_a v^\dagger_b$ where $v_a$ are eigenvectors of $\hat{e} \cdot \vec{O}$ with a common eigenvalue. Restricting this eigenvalue to the lowest eigenvalue of $\hat{e} \cdot \vec{O}$, we get exactly the boundary of $E_S$ according to the results of section 4. 
\begin{equation}
    \label{ESset}
    V_S = \bigcup_{\vec{e} \in \mathbb{R}^n \backslash 0} \{ \vec{x} | x_i = \tr(\rho {\cal O}_i), [\rho, \hat{e} \cdot \vec{O}] = 0 \} \; ,
\end{equation}  
This is generally the union of $\lfloor (n+1)/2 \rfloor$ closed surfaces, including $\partial E_S$ corresponding to eigenvectors with maximum/minimum eigenvalue and surfaces in the interior of $E_S$ corresponding to eigenvectors with the $k$th largest/smallest eigenvalues for some linear combination of operators in $S$. Examples of the set $V_S$ are shown in Figure \ref{fig:Vplot}.

\bibliographystyle{JHEP}
\bibliography{draft_bib}

\end{document}